\def\confversion{0}
\def\ifconf{\ifnum\confversion=1}
\def\ifnotconf{\ifnum\confversion=0}

\documentclass[11 pt]{article}

\def\showauthornotes{1}
\def\showkeys{0}
\def\showdraftbox{0}

\usepackage{xspace,xcolor,enumerate}
\usepackage{amsmath,amssymb}
\usepackage{amsthm}
\usepackage{xfrac}
\usepackage[toc,page]{appendix}
\usepackage{thmtools}
\usepackage{thm-restate}
\usepackage{color,graphicx}
\usepackage{paralist}
\usepackage{thm-restate}

\ifnum\showkeys=1
\usepackage[color]{showkeys}
\fi

\definecolor{darkred}{rgb}{0.5,0,0}
\definecolor{darkgreen}{rgb}{0,0.35,0}
\definecolor{darkblue}{rgb}{0,0,0.55}

\usepackage[pdfstartview=FitH,pdfpagemode=UseNone,colorlinks,linkcolor=darkblue,filecolor=darkred,citecolor=darkgreen,urlcolor=darkred,pagebackref]{hyperref}

\usepackage[capitalise,nameinlink]{cleveref}
\usepackage[T1]{fontenc}
\usepackage{mathtools,dsfont,bbm}

\setlength{\parindent}{4 ex}
\setlength{\parskip}{0.5 ex}

\ifnum\showauthornotes=1
\newcommand{\Authornote}[2]{{\sf\small\color{red}{[#1: #2]}}}
\newcommand{\Authorcomment}[2]{{\sf \small\color{gray}{[#1: #2]}}}
\newcommand{\Authorfnote}[2]{\footnote{\color{red}{#1: #2}}}
\else
\newcommand{\Authornote}[2]{}
\newcommand{\Authorcomment}[2]{}
\newcommand{\Authorfnote}[2]{}
\fi

\ifnum\showdraftbox=1
\newcommand{\draftbox}{\begin{center}
  \fbox{
    \begin{minipage}{2in}
      \begin{center}
        \begin{Large}
          \textsc{Working Draft}
        \end{Large}\\
        Please do not distribute
      \end{center}
    \end{minipage}
  }
\end{center}
\vspace{0.2cm}}
\else
\newcommand{\draftbox}{}
\fi

\newtheorem{theorem}{Theorem}[section]

\newtheorem{definition}[theorem]{Definition}
\newtheorem{lemma}[theorem]{Lemma}
\newtheorem{remark}[theorem]{Remark}
\newtheorem{proposition}[theorem]{Proposition}
\newtheorem{corollary}[theorem]{Corollary}

\newtheorem{fact}[theorem]{Fact}

\crefname{nothing}{}{}
\crefformat{nothing}{(#2#1#3)}

\def\FullBox{\hbox{\vrule width 6pt height 6pt depth 0pt}}

\def\qed{\ifmmode\qquad\FullBox\else{\unskip\nobreak\hfil
\penalty50\hskip1em\null\nobreak\hfil\FullBox
\parfillskip=0pt\finalhyphendemerits=0\endgraf}\fi}

\def\qedsketch{\ifmmode\Box\else{\unskip\nobreak\hfil
\penalty50\hskip1em\null\nobreak\hfil$\Box$
\parfillskip=0pt\finalhyphendemerits=0\endgraf}\fi}

\def\to{\rightarrow}
\def\eps{\varepsilon}
\def\epsilon{\varepsilon}

\def\eps{\epsilon}

\def\phi{\varphi}
\def\cal{\mathcal}

\newcommand{\mper}{\,.}
\newcommand{\mcom}{\,,}

\newcommand{\R}{{\mathbb R}}

\newcommand{\N}{{\mathbb{N}}}
\newcommand{\Z}{{\mathbb Z}}

\newcommand{\F}{{\mathbb F}}

\usepackage{nicefrac}

\makeatletter
\def\norm#1{
  \@ifnextchar\bgroup
   {\normalpnorm{#1}}
   {\defaultnorm{#1}}
}
\def\defaultnorm#1{
    \renderNorm{#1}{}
}
\def\normalpnorm#1#2{
   \@ifnextchar\bgroup
   {\banaxnorm{#1}{#2}}
   {\renderNorm{#2}{#1}}
}

\def\banaxnorm#1#2#3{
    \renderNorm{#3}{#1\to#2 }

}

\def\renderNorm#1#2{
    \@ifnextchar^
    {\fixExponent{#1}{#2}}
    {\ensuremath{\mathchoice
        {\lVert #1 \rVert_{#2}}
        {\lVert #1 \rVert_{#2}}
        {\lVert #1 \rVert_{#2}}
        {\lVert #1 \rVert_{#2}}}
    }
}
\def\fixExponent#1#2^#3{
    \ensuremath{{\lVert #1 \rVert^{#3}_{#2}}}
}

\def\enorm#1#2{
   \@ifnextchar\bgroup
   {\norm{L_{#1}}{L_{#2}}}
   {\norm{L_{#1}}{#2}}
}

\def\cnorm#1#2{
   \@ifnextchar\bgroup
   {\norm{\ell_{#1}}{\ell_{#2}}}
   {\norm{\ell_{#1}}{#2}}
}

\makeatother

\newcommand{\mysmalldot}[2]{\ensuremath{\langle #1, #2 \rangle}}

\newcommand{\Esymb}{\mathbb{E}}
\newcommand{\Psymb}{\mathbb{P}}

\makeatletter
\def\Pr#1{
    \ProbabilityRender{\Psymb}{#1}
}

\def\Ex#1{
    \ProbabilityRender{\Esymb}{#1}
}

\def\ProbabilityRender#1#2{
  \@ifnextchar\bgroup
  {\renderwithdist{#1}{#2}}
   {\singlervrender{#1}{#2}}
}
\def\singlervrender#1#2{
   \ensuremath{\mathchoice
       {{#1}\left[ #2 \right]}
       {{#1}[ #2 ]}
       {{#1}[ #2 ]}
       {{#1}[ #2 ]}
   }
}
\def\renderwithdist#1#2#3{
   \@ifnextchar\bgroup
   {\superfancyrender{#1}{#2}{#3}}
   {\ensuremath{\mathchoice
      {\underset{#2}{#1}\left[ #3 \right]}
      {{#1}_{#2}[ #3 ]}
      {{#1}_{#2}[ #3 ]}
      {{#1}_{#2}[ #3 ]}
     }
   }
}
\def\superfancyrender#1#2#3#4#5{
   \ensuremath{\mathchoice
      {\underset{#1}{{#1}}\left#4 #3 \right#5}
      {{#1}_{#2}#4 #3 #5}
      {{#1}_{#2}#4 #3 #5}
      {{#1}_{#2}#4 #3 #5}
   }
}
\makeatother

\newfont{\inhead}{eufm10 scaled\magstep1}

\newcommand{\calF}{{\cal F}}

\newcommand{\calO}{{\cal O}}

\DeclareMathOperator\supp{Supp}

\DeclareMathOperator*{\argmin}{\arg\!\min}

\newcommand{\ceil}[1]{\ensuremath{\left\lceil #1 \right\rceil}}
\newcommand{\floor}[1]{\ensuremath{\left\lfloor #1 \right\rfloor}}

\newcommand{\classfont}[1]{\textsf{#1}}

\newcommand{\BPP}{\classfont{BPP}\xspace}
\newcommand{\classP}{\classfont{P}\xspace}
\newcommand{\NP}{\classfont{NP}\xspace}

\usepackage[colorinlistoftodos,textsize=tiny,textwidth=2cm,color=red!25!white,obeyFinal]{todonotes}

\newcommand{\SSS}{\mathbb{S}}

\newcommand{\bT}{\overline{T}}

\newcommand{\dsp}[2]{\mathbb{S}^{#2}_{#1}}

\newcommand{\LCD}{\mathrm{LCD}}

\newcommand{\dist}{\mathrm{dist}}
\newcommand{\spn}{\mathrm{span}}




\newcommand{\tilu}{\widetilde{u}}

\newcommand{\al}{\alpha}
\newcommand{\cN}{\mathcal{N}}

\newcommand{\incomp}{\mathrm{Incomp}}

\newcommand{\comp}{\mathrm{Comp}}

\crefformat{equation}{(#2#1#3)}

\usepackage[top=1in, bottom=1in, left=1in, right=1in]{geometry}

\allowdisplaybreaks
\begin{document}

\title{\textbf{Inapproximability of Finding Sparse Vectors \\ in Codes, Subspaces, and Lattices}\footnote{This document is a merger of \cite{BL24} which proved the result for the reals and a follow-up work~\cite{BGR25} which 
adapts the reduction of \cite{BL24} to the case of finite fields while simplifying and derandomizing the reduction.} }

\author{
Vijay Bhattiprolu 
\thanks{University of Waterloo. Email: {\tt vbhattip@uwaterloo.ca}. Supported by NSERC Discovery Grant 50503-11559 500.}
\and
Venkatesan Guruswami
\thanks{Simons Institute for the Theory of Computing, and UC Berkeley. Email: {\tt venkatg@berkeley.edu}. Research supported in part by NSF grants CCF-2228287 and CCF-2211972 and a Simons Investigator award. }
\and 
Euiwoong Lee
\thanks{University of Michigan. Email: {\tt euiwoong@umich.edu}. Supported in part by NSF grant CCF-2236669 and Google.}
\and
Xuandi Ren\thanks{UC Berkeley. Email: \texttt{xuandi\_ren@berkeley.edu}. Supported in part by NSF grant CCF-2228287.} 
}

\setcounter{page}{0}
\date{}

\maketitle
\draftbox
\thispagestyle{empty}

\begin{abstract}
    Finding sparse vectors is a fundamental problem that arises in several contexts including codes, subspaces, and lattices. In this work, we prove strong inapproximability results for all these variants using a novel approach that even bypasses the PCP theorem.
    Our main result is that it is NP-hard (under randomized reductions) to approximate the sparsest vector in a real subspace within any constant factor;
    the gap can be further amplified using tensoring. Our reduction has the property that there is a Boolean solution in the completeness case.
    As a corollary, this immediately recovers the state-of-the-art inapproximability factors for the shortest vector problem (SVP) on lattices. Our proof extends the range of $\ell_p$ (quasi) norms for which hardness was previously known, from $p\geq 1$ to all $p\geq 0$, answering a question raised by (Khot, JACM 2005).

\smallskip
    Previous hardness results for SVP, and the related minimum distance problem (MDP) for error-correcting codes,
    all use lattice/coding gadgets that have an abundance of codewords in a ball of radius smaller than the minimum distance. In contrast, our reduction only needs many codewords in a ball of radius slightly larger than the minimum distance. This enables an easy
    derandomization of our reduction for finite fields, giving a new elementary proof of deterministic hardness for MDP.
    We believe this weaker density requirement might offer a promising approach to showing deterministic hardness of SVP, a long elusive goal.
The key technical ingredient underlying our result for real subspaces is a proof that in the kernel of a random Rademacher matrix, the support of any two linearly independent vectors have very little overlap.

    \smallskip
  A broader motivation behind this work is the development of inapproximability techniques for problems over the reals.
    Analytic variants of sparsest vector have connections to small set expansion, quantum separability and polynomial maximization over convex sets, all of which appear to be out of reach of current PCP techniques. We hope that the approach we develop could enable progress on some of these problems.
\end{abstract}

\newpage

\pagenumbering{arabic}
\setcounter{page}{1}

\section{Introduction}\label{sec:intro}
Let $\F$ be a field and let $\norm{0}{x}$ denote the Hamming weight (number of nonzero entries) of a vector $x\in \F^n$.
For $\F=\R$ and $p>0$, let $\norm{p}{x} := (\sum_i |x_i|^p)^{1/p}$ denote the $\ell_p$-(quasi) norm.
For a subset $U$ of $\F^n$, finding the sparsest (or shortest) nonzero vector, i.e.,
\begin{equation}
    \label{spvec}
    \argmin_{x\in U\setminus \{0\}}\norm{}{x},
\end{equation}
is a fundamental problem that arises in several contexts.
If $U$ is a subspace over a finite field $\F_q$, and $\norm{}{\cdot}=\norm{0}{\cdot}$,
one obtains the minimum distance problem (which we denote as MDP($\F_q$)).
If $U$ is an integer lattice, and $\norm{}{\cdot}=\norm{p}{\cdot}$, \cref{spvec} captures the shortest vector problem (SVP$_p$). Each of these tasks is foundational in the study of error correcting codes and lattice based cryptography, respectively.
If $U$ is a real subspace, and $\norm{}{\cdot}=\norm{0}{\cdot}$, \cref{spvec} captures the nullspace problem -- a homogeneous\footnote{where we say an optimization problem is homogeneous if the feasible region is closed under scaling} variant of the sparse recovery problem, which we denote as MDP($\R$) and has connections to robust subspace recovery~\cite{candes2011robust, lerman2018overview},
dictionary learning~\cite{elad2010sparse, BKS14, BKS15}, sparse blind deconvolution~\cite{zhang2017global, kuo2019geometry} and many other areas; we refer the reader to the survey~\cite{qu2020finding} from the nonconvex optimization literature.

In this work, we prove strong hardness of approximation results for all of the above sparse/short vector problems using a novel approach that even bypasses the PCP theorem.
Our main contribution is a new inapproximability result over the reals:
\begin{theorem}
\label{thm:nsp}
No polynomial time algorithm can given a linear subspace $V \subseteq \R^n$ and $s \in \N$,  distinguishes between the following cases
\begin{compactenum}[]
    \item (YES) there exists nonzero $x \in V \cap \{ 0, 1 \}^n$ with $\norm{0}{x} \leq s$;
    \item (NO) every $x \in V \setminus \{ 0 \}$ satisfies $\norm{0}{x} \geq \gamma \cdot s$,
\end{compactenum}
\medskip

\begin{compactenum}[(a)]
    \item assuming $\NP\not\subseteq \BPP$ when $\gamma>1$ is any constant;
    \item assuming $\NP\not\subseteq \classfont{BPTIME}(2^{\log^{O(1)} n})$ when $\gamma =2^{\log^{1-\eps}n}$ for any fixed $\eps>0$;
    \item assuming $\NP\not\subseteq \bigcap_{\delta>0} \classfont{BPTIME}(2^{n^{\delta}})$ when $\gamma=n^{c/\log \log n}$ for some fixed $c>0$.
\end{compactenum}
\end{theorem}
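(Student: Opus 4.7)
The plan is to give a direct, gadget-based randomized reduction from a basic NP-hard problem — say 3-SAT or an exact set-cover / nearest-codeword style problem — to sparsest-vector-in-subspace, generating the gap entirely within the reduction and thereby sidestepping PCP. The subspace $V \subseteq \R^n$ is presented as the kernel of a matrix formed by concatenating (i) deterministic ``structure'' rows that encode the combinatorics of the base instance (forcing coordinates to agree with a valid assignment), and (ii) random ``consistency'' rows whose entries are i.i.d.\ Rademacher $\pm 1$ signs. In the YES case, a satisfying assignment of the base instance is mapped coordinatewise to a $\{0,1\}$-valued vector that lies in the kernel of the structure rows by construction, and in the kernel of the Rademacher rows either by the placement of zero/one entries relative to the $\pm 1$ signs or by conditioning the random matrix on the event that a single designated Boolean vector lies in its kernel.

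For the NO case, I would argue by contradiction: suppose some nonzero $x \in V$ has $\norm{0}{x} < \gamma s$. The key technical ingredient advertised in the abstract — with high probability over the Rademacher matrix $M$, any two linearly independent vectors in $\ker(M)$ have near-disjoint supports — is then invoked by pairing $x$ with a Boolean ``template'' $y \in V$ corresponding to some candidate combinatorial solution. Near-disjointness forces $x$ to be a scalar multiple of $y$, from which one reads off a small Boolean solution of the base instance, contradicting the NO hypothesis. A union bound over all pairs $(S,T)$ of small supports in $[n]$ and over polynomially many templates ensures the Rademacher gadget simultaneously works for all candidates.

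The principal obstacle is establishing a sufficiently strong quantitative support-overlap lemma: for every pair $S, T \subset [n]$ of size at most $\gamma s$, the probability that $\ker(M)$ contains two linearly independent vectors with supports contained (essentially) in $S \cup T$ and intersection larger than a chosen threshold must decay like $\exp(-\Omega(\gamma s))$ in order to survive the $\binom{n}{\gamma s}^2$-sized union bound. Proving such a bound seems to require careful anticoncentration estimates on Rademacher linear combinations (Littlewood--Offord / Halasz-style inequalities), together with a net argument over the Grassmannian of $2$-dimensional subspaces localized to each candidate support pair. Balancing these estimates pins down the constant $\gamma > 1$ in part (a). Parts (b) and (c) then follow from the tensoring amplification flagged in the abstract: tensoring $V$ with itself $t$ times roughly squares the gap per level while growing the dimension polynomially, so choosing $t = O(\log\log n)$ yields $\gamma = 2^{\log^{1-\eps} n}$ and $t = \Theta(\log n/\log\log n)$ yields $\gamma = n^{c/\log\log n}$, with the resulting size blowups driving the stated $\classfont{BPTIME}$ assumptions.
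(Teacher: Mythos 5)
Your proposal correctly latches onto several genuine ingredients of the paper's argument — Rademacher randomness, a support-overlap/non-overlap lemma as the key technical engine, Littlewood--Offord style anticoncentration plus nets, and tensoring for amplification — but the reduction you sketch around those ingredients does not close, and it misses the two structural ideas that actually make the paper's proof go through.

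First, your completeness case is broken. A fixed $h\times N$ Rademacher matrix has almost surely no designated $\{0,1\}$-vector in its kernel unless $h$ is small; and ``conditioning the random matrix on the event that a single designated Boolean vector lies in its kernel'' is not available to the reduction, which must be oblivious to the satisfying assignment (and conditioning would also distort the distribution used in the soundness union bound). The paper sidesteps this entirely: the Rademacher kernel $C=\ker(R)$ is shown (via a second-moment argument, \cref{scnd:mmnt}) to contain exponentially many Boolean codewords of near-minimum weight, and a second random matrix $T$ plus the probabilistic Sauer--Shelah lemma of Micciancio guarantee $T(C\cap H^N_k)\supseteq\{0,1\}^n$. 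So \emph{every} Boolean vector of the base instance has a low-weight Boolean preimage in $C$; no per-solution conditioning is needed.

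Second, your soundness case hinges on pairing a cheating vector $x$ with a Boolean ``template'' $y\in V$ and using near-disjointness to conclude $x$ is proportional to $y$. But in the NO case there \emph{is} no Boolean template in $V$ — that is exactly the NO hypothesis — so this pairing is not available. The argument the paper actually runs is a \emph{rank dichotomy for tensor-code matrices}: the subspace $V$ lives inside $C\otimes C$ (plus the quadratic-equation constraints), and any $Y\in C\otimes C$ of rank $\ge 2$ has $\|Y\|_0 \ge \alpha\cdot d(C)^2$ by \cref{lem:rk2hw}, where $\alpha\approx 2$ is supplied by the non-overlap lemma for $\ker(R)$. If $Y$ is rank one and symmetric, then $Y=yy^T$, and (after dividing by the auxiliary scalar $z$) $Ty/\sqrt z$ is a \emph{real} solution to the quadratic system, contradicting the NO promise. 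Your proposal never introduces the $C\otimes C$ structure, never introduces the rank dichotomy, and reduces from 3-SAT directly rather than from quadratic-equation feasibility — the latter being precisely what makes the linearization $x\mapsto xx^T$ and the rank-$1$ test available. Without these, the non-overlap lemma you correctly flag has nothing to plug into, and the $\binom{n}{\gamma s}^2$ union bound has no pair of vectors to union over. So the key lemma is identified but the reduction architecture around it is absent, and the version you propose would not survive scrutiny.
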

\noindent
Previously,  only NP-hardness of exact optimization was known~\cite{mccormick1983combinatorial, coleman1986null}. The best known approximation algorithm achieves an $O(n / \log n)$-approximation~\cite{berman2001approximating}.

The reduction in \cref{thm:nsp} has the additional structural property that the solution is Boolean in the completeness case.
As a result of these strong completeness and soundness guarantees, we obtain inapproximability for SVP$_p$ as an immediate corollary:
\begin{theorem}
\label{thm:svp}
Fix $p\in [0,\infty)$. No polynomial-time algorithm can given a lattice $L \subseteq \Z^n$ and $s \in \N$, distinguishes between the following cases
\begin{compactenum}[]
        \item (YES) there exists a nonzero vector $x \in L \cap \{ 0, 1 \}^n$ with $\| x \|_p = s^{1/p}$;

        \item (NO) every $x \in L \setminus \{ 0 \}$ satisfies $\| x \|_p \geq \gamma^{1/p} \cdot s^{1/p}$,
\end{compactenum}
\medskip

\begin{compactenum}[(a)]
    \item assuming $\NP\not\subseteq \BPP$ when $\gamma>1$ is any constant;
    \item assuming $\NP\not\subseteq \classfont{BPTIME}(2^{\log^{O(1)} n})$ when $\gamma =2^{\log^{1-\eps}n}$ for any fixed $\eps>0$;
    \item assuming $\NP\not\subseteq \bigcap_{\delta>0} \classfont{BPTIME}(2^{n^{\delta}})$ when $\gamma=n^{c/\log \log n}$ for some fixed $c>0$.
\end{compactenum}
\end{theorem}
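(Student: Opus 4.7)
The plan is to reduce from the real-subspace sparse-vector problem of \cref{thm:nsp} to $\mathrm{SVP}_p$ by passing to the integer lattice sitting inside the produced subspace. Concretely, given an instance $(V,s)$ of \cref{thm:nsp}, I would first observe that without loss of generality $V$ is specified as the kernel (or image) of an integer matrix of polynomial bit complexity, which can always be arranged by clearing denominators in the reduction. Then I would simply set
\[
L \;=\; V \cap \Z^n,
\]
an integer lattice whose basis can be computed in polynomial time via Hermite normal form, and output $(L,s)$ as the $\mathrm{SVP}_p$ instance. The $p=0$ case reduces to \cref{thm:nsp} directly (interpreting $\|\cdot\|_p$ as Hamming weight), so the interesting range is $p \in (0,\infty)$.

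For the YES case, \cref{thm:nsp} furnishes a nonzero Boolean vector $x \in V \cap \{0,1\}^n$ with Hamming weight at most $s$. Since $x \in \Z^n$ we have $x \in L$, and because $x$ is $\{0,1\}$-valued, $\|x\|_p^p = \|x\|_0 \leq s$, so $\|x\|_p \leq s^{1/p}$. If the exact equality $\|x\|_p = s^{1/p}$ in the statement is to be taken literally, one can pad the reduction with a few zero-coordinate slack gadgets to arrange $\|x\|_0 = s$ exactly.

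For the NO case, any nonzero $x \in L$ is in particular a nonzero element of $V$, so \cref{thm:nsp} guarantees $\|x\|_0 \geq \gamma s$. The crucial consequence of working inside $\Z^n$ is that every nonzero coordinate satisfies $|x_i| \geq 1$, whence
\[
\|x\|_p^p \;=\; \sum_{i:\, x_i \neq 0} |x_i|^p \;\geq\; \|x\|_0 \;\geq\; \gamma s,
\]
giving $\|x\|_p \geq \gamma^{1/p} s^{1/p}$. The $\gamma^{1/p}$ approximation gap and the three parameter regimes (a)--(c) are then inherited verbatim from the corresponding items of \cref{thm:nsp}. I do not anticipate any real technical obstacle: the only place the reduction could have failed is the YES case, where without the Boolean-completeness guarantee of \cref{thm:nsp} one would have no reason for the short vector of $V$ to survive intersection with $\Z^n$. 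This is exactly the role played by the structural strengthening built into \cref{thm:nsp}, and it is what makes the corollary immediate and uniform across all $p \in [0,\infty)$.
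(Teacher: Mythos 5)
Your proposal is essentially identical to the paper's proof: set $L := V \cap \Z^n$, use Booleanity of the completeness witness from \cref{thm:nsp} for the YES case, and use the integrality lower bound $|x_i| \geq 1$ on nonzero coordinates for the NO case. Your aside about padding to turn $\|x\|_0 \leq s$ into $\|x\|_0 = s$ correctly addresses a small wrinkle that the paper glosses over by asserting $x \in H^N_s$ directly.
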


Above gives a new simple proof of state of the art inapproximability factors for SVP$_p$~\cite{khot2005hardness,haviv2007tensor,micciancio2012inapproximability}\footnote{albeit with two-sided error, whereas \cite{haviv2007tensor,micciancio2012inapproximability} only have one-sided error}, and
also expands the range of $p$ for which hardness was previously known from $p\geq 1$ to $p\geq 0$.
\cite{khot2005hardness}\footnote{see section ``wishful thinking''} asked whether one can obtain hardness of SVP$_0$ where there is a Boolean solution in the completeness case. We answer this question in the affirmative.

\medskip\noindent\textbf{Tensoring.} Gap amplification for SVP in \cite{khot2005hardness,haviv2007tensor} is highly nontrivial. The tensoring was made much cleaner in \cite{micciancio2012inapproximability}, by making use of a measure of length that interpolates between $\ell_p$ norms and Hamming weight. A feature of our reduction is that tensoring is trivial, since we work directly with the Hamming weight while tensoring, and deduce SVP hardness at the end as a corollary.

\medskip\noindent\textbf{Deterministic Hardness of MDP($\F_q$).}
The reduction in \cref{thm:nsp} takes on a particularly simple form in the special case of finite fields and we are able to derandomize it quite easily to obtain a new elementary proof of deterministic hardness for MDP($\F_q$):
\begin{theorem}
\label{thm:mdp}
Fix any finite field $\F_q$. No polynomial-time algorithm can given a linear subspace $V \subseteq \F_q^n$ and $s \in \N$, distinguishes between the following cases
\begin{compactenum}[]
    \item (YES) there exists nonzero $x \in V$ with $\norm{0}{x} \leq s$;
    \item (NO) every $x \in V \setminus \{ 0 \}$ satisfies $\norm{0}{x} \geq \gamma \cdot s$,
\end{compactenum}

\medskip

\begin{compactenum}[(a)]
    \item assuming $\NP\neq \classP$ when $\gamma>1$ is any constant;
    \item assuming $\NP\not\subseteq \classfont{DTIME}(2^{\log^{O(1)} n})$ when $\gamma =2^{\log^{1-\eps}n}$ for any fixed $\eps>0$;
    \item assuming $\NP\not\subseteq \bigcap_{\delta>0} \classfont{DTIME}(2^{n^{\delta}})$ when $\gamma=n^{c/\log \log n}$ for some fixed $c>0$.
\end{compactenum}

\end{theorem}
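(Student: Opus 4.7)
The plan is to follow the same reduction template as \cref{thm:nsp}, carrying it out over $\F_q$ and replacing its one randomised ingredient -- the choice of a gadget matrix whose kernel has a ``spread-support'' property -- with an explicit algebraic construction. This succeeds precisely because, as emphasised in the introduction, over finite fields the reduction only needs the kernel of the gadget to contain many codewords in a ball of radius slightly larger than its minimum distance, a much milder condition than what the real-valued analysis demands.

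First I would mirror the high-level structure of the proof of \cref{thm:nsp}: reduce from a deterministically $\NP$-hard source problem (for instance SAT, so that no PCP is invoked) to an MDP$(\F_q)$ instance by encoding variables and clauses via a gadget matrix $H$ over $\F_q$. In the completeness case, a satisfying assignment yields a codeword of small Hamming weight. In the soundness case, any low-weight codeword would have to come from a linear combination of kernel vectors of $H$ whose supports overlap significantly, and the goal is to rule out such combinations. Instead of drawing $H$ at random, I would take $H$ to be the parity-check matrix of an explicit code over $\F_q$ (for example Reed--Solomon, BCH, or a small-bias linear code) whose weight distribution is sufficiently understood to supply the required ``many codewords just above the minimum distance'' property. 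Since $\F_q$ is finite, the relevant combinatorial count is deterministic: one can either read it off from the code's known weight distribution or verify it by enumeration within a polynomially-sized family of candidate matrices. This yields part~(a) of the theorem under $\NP \neq \classP$.

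Gap amplification to obtain parts (b) and (c) is handled by tensoring. Because the measure of length throughout is Hamming weight, and the completeness witness is a single fixed nonzero codeword, the $k$-fold tensor of the instance converts an $(s,\gamma s)$-gap into an $(s^k, \gamma^k s^k)$-gap at the cost of blowing the instance size up to $n^k$. Taking $k$ constant gives (a); taking $k$ polylogarithmic in $n$ gives (b) under $\NP \not\subseteq \classfont{DTIME}(2^{\log^{O(1)} n})$; and taking $k = \Theta(\log n /\log \log n)$ gives (c) under $\NP \not\subseteq \bigcap_\delta \classfont{DTIME}(2^{n^\delta})$. Because no randomness enters any step, the assumptions remain deterministic throughout.

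The main obstacle will be arranging the explicit gadget code over $\F_q$ so that its weight distribution yields a deterministic analogue of the Rademacher support-disjointness argument used over $\R$. Over the reals one appeals to anti-concentration of random linear combinations; over $\F_q$ one must instead exploit the algebraic distance properties of the chosen code, and verify the bound on the number of near-minimum-weight codewords without any probabilistic slack. The weakened density requirement over $\F_q$ makes this feasible with standard algebraic codes, and is precisely why the finite-field case derandomises much more easily than the real-valued one.
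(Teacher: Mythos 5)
Your high-level plan --- mirror the real-valued reduction, replace the random Rademacher gadget with an explicit algebraic code over $\F_q$, and tensor to amplify --- matches the paper's strategy, and the codes you propose (Reed--Solomon, small-bias) are the right family; the paper uses a Reed--Solomon code concatenated with the Hadamard code. But your soundness step has a genuine gap. You say the obstacle is ``arranging the explicit gadget code over $\F_q$ so that its weight distribution yields a deterministic analogue of the Rademacher support-disjointness argument,'' and that one must ``exploit the algebraic distance properties of the chosen code.'' This misses the key simplification: over $\F_q$ the non-overlap property is \emph{automatic} for every linear subspace, with coefficient $1 + 1/q$ (\cref{lem:overlap}). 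No code choice is needed to emulate the anti-concentration argument; the explicit code only needs to be $\eps$-balanced (all nonzero codewords weigh between $d$ and $(1+\eps)d$), which the RS-plus-Hadamard concatenation supplies (\cref{lem:eps-biased}).

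More importantly, you leave the encoding unspecified --- ``encoding variables and clauses via a gadget matrix $H$'' does not explain \emph{why} low-weight elements of the resulting subspace correspond to genuine solutions, and ``rule out such combinations'' is not yet an argument. The paper reduces from \emph{homogeneous quadratic equations} (via \cref{prop:quadeq}), linearizes them over symmetric matrices $X$, and takes $V = \{ G X G^T : Q_\ell(X)=0,\ X=X^T\}$. Soundness then splits on $\rank(X)$: if $\rank(X)\geq 2$, then $GXG^T\in C\otimes C$ has rank $\geq 2$ and Hamming weight at least $(1+1/q)d^2$ by \cref{lem:rk2hwfq}; if $\rank(X)=1$ and $X$ is symmetric, then $X = xx^T$ yields a genuine solution. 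Without this symmetric-matrix/rank-$1$-testing structure --- and without reducing from quadratic equations rather than directly from SAT --- there is no mechanism for converting small Hamming weight back into a satisfying assignment, so the soundness direction of your proposal does not go through as written.
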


Deterministic inapproximability of MDP($\F_q$) was open for a long time until Cheng and Wan~\cite{cheng2009deterministic} derandomized a reduction of Dumer, Micciancio, and Sudan~\cite{dumer2003hardness}
by giving a deterministic construction of the locally dense gadget over any finite field.
In particular they give an explicit Hamming ball of radius $0.67d$ that contains exponentially many codewords of an explicit code of distance $d$.
The proof is fairly deep, making use of Weil's character sum estimate.
Austrin and Khot~\cite{AK14} gave a much simpler proof of deterministic hardness of MDP($\F_q$) by making use of tensor codes.
Building on this, Micciancio~\cite{M14} proved that for the special case of $\mathbb{F}_2$, the tensoring of any base code with large enough distance yields a locally dense gadget.

\medskip\noindent\textbf{Overview of Reduction.}
All our results are ``PCP-free" and reduce from the NP-hard problem of solving a system of quadratic equations over the concerned field $\F$ (a finite field or the reals).\footnote{Sometimes we consider homogeneous systems, and technically we work with a promise variant where YES instances have a Boolean solution whereas NO instances lack any solution over the field $\F$.} At a high level, we consider the subspace $X$ of symmetric matrices which are solutions to this quadratic system viewed as a linear system. Rank $1$ solutions $X = x x^T$ correspond to solutions $x$ to the original quadratic system. For finite fields, if we encode $x$ via a suitable gadget code where all nonzero codewords have roughly the same Hamming weight $d$, then we would have a low weight solution (of weight $\approx d$) for YES instances.

For the soundness, we need to handle spurious higher rank solutions. But even rank $2$ matrices have much higher Hamming weight, because in any linear code of minimum distance $d$, the minimum support size of a $2$-dimensional subspace, which is called the
2nd Generalized Hamming Weight \cite{Wei91,TV95}, is at least $\alpha \cdot d$ for $\alpha$ bounded away from $1$, specifically $\alpha = 1+1/|\F|$.
This pretty much gives the MDP hardness (one gets a gap $\approx 1+ 1/|\F|$ bounded away from $1$ that can be amplified by tensoring the code).

 The reduction above needed two features from the gadget code: (Weak Local Density): lots of codewords of Hamming weight $\approx d$ (in fact above we stipulated all codewords had this property and this stronger guarantee is achievable, but a slightly more complicated reduction works with merely an abundance of such codewords), and (Non-Overlap):  the union of support of two linearly independent codewords has size $\ge \alpha \cdot d$ for $\alpha$ bounded away from $1$.
 For finite fields, (Non-Overlap) is automatically true for any code with $\alpha = 1+1/|\F|$, and a simple code construction, namely a low rate Reed-Solomon code concatenated with the Hadamard code, achieves (Weak Local Density).

 Over reals, these two properties are non-trivial to achieve simultaneously. We prove, with a delicate chaining argument, that the kernel of a  matrix of suitable dimensions with i.i.d. $\pm 1$ random entries satisfies (Non-Overlap) with high probability (in fact with $\alpha \approx 2$). We achieve this by establishing a close connection between order-$2$ Hamming weight of the kernel, and the probability that a randomly
 signed sum of two-dimensional vectors $\sum_i \xi_i \cdot v_i$ (where $\{\xi_i\}$ are i.i.d. $\pm 1$) lies in a small ball around the origin.
 This latter question is central in Littlewood-Offord theory, and we are able to estimate this probability by utilizing a powerful
 result of Rudelson and Vershynin~\cite{rudelson2009smallest}
 who estimate the small ball probability in terms of a certain measure of the arithmetic
 structure of $\{v_1,v_2,\dots\}$.

 As for (Weak Local Density), it has already been shown in \cite{FSSZ23} (along with minimum distance estimates that come in handy for us).
 Plugging this ``code over reals'' into the above framework then gives us the hardness claimed in \cref{thm:nsp} via a \emph{randomized} reduction.

\medskip\noindent\textbf{A Brief History of the Homogenization Framework.}
When $U$ is an affine subspace or an affine lattice, we obtain non-homogeneous variants of MDP and SVP that are known as the
Nearest Codeword Problem (NCP) and the Closest Vector Problem (CVP) respectively.
Inapproximability of NCP and CVP can be deduced quite easily from the PCP theorem. In fact, this is one of the early applications
of the PCP theorem~\cite{ABSS93}. In contrast, the homogeneous variants MDP and SVP resisted efforts for decades. The usual reduction paradigm in PCP theory of replacing variables and constraints by constant sized gadgets does not appear to work.

A long line ~\cite{Adleman95,Var97,micciancio2001shortest, dumer2003hardness, khot2005hardness, cheng2009deterministic, micciancio2012inapproximability, AK14,M14} of important works culminated in strong hardness results for SVP and MDP.
All of these works follow the template of reducing from a non-homogeneous problem (CVP or NCP) whose hardness is established via the PCP theorem,
and then reducing to its homogeneous version (say SVP or MDP) by embedding solutions to the non-homogeneous problem inside a locally dense lattice/code.

\medskip\noindent\textbf{Eschewing Homogenization and Weakening the Local Density Requirement.}
We break from this paradigm in a few ways. Over finite fields, in contrast to \cite{AK14} that use tensor codes to reduce from NCP to MDP, we use tensor codes to directly linearize a system of quadratic equations. We also use a different encoding scheme for the solution $x$ of the starting hard problem. We encode $x\in\F_q^n$ as $xx^T$, whereas \cite{AK14} uses an indicator matrix $Z\in \{0,1\}^{nq\times nq}$ of $xx^T$.
This choice greatly simplifies the requirements of the coding gadget for $q\geq 3$. In \cite{AK14} the gadget (for fields of size $\geq 3$) uses Viola’s~\cite{Viola09} construction of a pseudorandom generator for low degree polynomials.

Over the reals we encode $x\in\{0,1\}^n$ as $yy^T$ for some $y\in C\cap \{0,1\}^N$ satisfying $Ty=x$ where $C$ and $T$ are an appropriately chosen code and linear projection respectively.
By Sauer-Shelah lemma, the existencee of such a map $T$ is equivalent to a a weak version of local density for $C$, i.e., $C$ contains exponentially many codewords in a larger radius than the distance.
We also deduce hardness for the non-homogeneous variants (\Cref{thm:ncp}) as a quick corollary of the homogeneous hardness, thereby reversing the usual chain of reductions in the area.

\medskip\noindent\textbf{Towards Derandomized Hardness of SVP.}
It remains open to derandomize our reduction, and in particular obtain the long elusive NP-hardness of SVP. Our approach might be more amenable to derandomization as it requires lots of vectors of sparsity $(1+\epsilon) d$ instead of many vectors of sparsity $\le (1-\epsilon) d$ around some nonzero center. The fact that the radius can exceed $d$ can make such objects easier to construct deterministically, as was indeed the case for codes. We however also need the code to have gap between the order one and order two Hamming weights.
Using the same reduction, the local density requirement can be further weakened at a cost of demanding more from higher order Hamming weight and the hardness of quadratic equations. This seems a compelling direction for future investigation.

\medskip\noindent\textbf{Analytic Sparsity Problems.}
Another important motivation behind this work is the development of techniques to prove inapproximability results for problems over the reals.
Sparsest vector in a subspace is one of a long list of problems, including polynomial maximization over convex sets, quantum separability,
maximizing  $\|\cdot\|_q/\|\cdot\|_p$ (an analytic notion of sparsity when $q>p$) over a subspace,
small set expansion, densest $k$-subgraph, sparse PCA, low rank matrix completion, tensor PCA/rank, etc., that are resistant to the ``local gadget"
\footnote{not to be confused with locally dense gadget -- local gadgets are often of constant size whereas locally dense gadgets are often polynomial size}
reduction paradigm in PCP theory. Informally, this is because such reductions from a PCP fatally contain very
sparse solutions\footnote{Assuming the starting PCP doesn't come with appropriate expansion vs. smoothness properties. Such PCPs appear out of
reach of current techniques.}.
In \cite{BGGLT23} inapproximability for $p\rightarrow q$ operator norm when $2<p<q$, is shown using a global reduction -- it uses a classical embedding result from convex geometry. In this work we have shown yet another example that global geometric reductions are successful for problems over reals.

Related problems remain wide open and have important implications.
It was shown in \cite{BBHKSZ12} that the small set expansion of a graph $G$ can be cast as finding the
sparsest vector that is close (in $\ell_2$ norm) to the top eigenspace of $G$.
It is also shown~\cite{HM13,BBHKSZ12} that the hardness of approximately computing the $2\rightarrow 4$ sparsity of a subspace is closely related to $\mathsf{QMA}=\mathsf{NEXP}$, which is a longstanding open problem in quantum information.
In \cite{bhattiprolu2021framework}, it was shown that NP-hardness of $p\rightarrow 2$ sparsest vector (for all $p<2$)
would lead to an NP-based near-characterization of the convex sets over which quadratics can be approximately maximized.
We believe our work provides a promising new line of attack on hardness of $p\rightarrow q$ sparsest vector (for all $p<q$).

\section{Preliminaries and Proof Overview}
\label{sec:pre}
Let $\F$ be any field. Any homogeneous $n$-variate quadratic polynomial $p:\mathbb{F}^n\to\mathbb{F}$ may be written in the form $p(x) = \sum_{i,j\in [n]}Q[i,j]x_ix_j$ for some coefficient matrix $Q=[Q[i,j]]_{i,j\in [n]} \in \mathbb{F}^{n\times n}$. For $X\in \mathbb{F}^{n\times n}$, let $Q(X):= \sum_{i,j\in [n]}Q[i,j] X[i,j]$, so that $p(x) = Q(xx^T)$ for all $x\in\mathbb{F}^n$.
It will be convenient for us to encode homogeneous quadratics by their coefficient matrix Q.

\subsection{Quadratic Equations Hardness}

To prove hardness of MDP($\F_q$) and MDP($\R$), we will require hardness of homogeneous and non-homogeneous variants of
satisfiability of quadratic equations respectively:
\begin{proposition}(NP-Hardness of Quadratic Equations)~\\
\label{prop:quadeq}
Let $\F$ be any field. Given a system of quadratic equations over $\F^n$ of the form $\{Q_\ell(xx^T) = b_{\ell}\}_{\ell \in [m]}$
(resp. $\{Q_\ell(xx^T) = 0\}_{\ell \in [m]}$), it is NP-hard to distinguish between the following two cases:
\begin{itemize}
\itemsep=0ex
        \item (YES) There exists $x \in \{ 0, 1 \}^n\setminus \{0\}$ satisfying all $m$ equations.

        \item (NO) There does not exist $x \in \F^n$ (resp. $x \in \F^n\setminus \{0\}$)  satisfying all $m$ equations.
\end{itemize}
\end{proposition}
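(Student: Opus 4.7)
The plan is to reduce from 3-SAT. Given a 3-CNF formula $\varphi$ on variables $z_1,\dots,z_n$ with clauses $C_1,\dots,C_m$ where $C_j = l_{j,1} \vee l_{j,2} \vee l_{j,3}$, I would introduce auxiliary variables $x_0$ (which will play the role of the constant $1$), $x_1,\dots,x_n$ (Boolean indicators for the $z_i$), and for every clause a literal-selector triple $u_{j,1},u_{j,2},u_{j,3}$. Writing $\hat l_{j,k}$ for the arithmetization $x_i$ if $l_{j,k}=z_i$ and $x_0-x_i$ if $l_{j,k}=\neg z_i$, the non-homogeneous instance uses the constraints
\begin{equation*}
x_0^2 = 1,\quad x_i^2 - x_0 x_i = 0,\quad u_{j,k}^2 - x_0 u_{j,k} = 0,\quad x_0(u_{j,1}+u_{j,2}+u_{j,3}) = 1,\quad u_{j,k}\bigl(\hat l_{j,k} - x_0\bigr) = 0,
\end{equation*}
each of which has the required form $Q_\ell(xx^T) = b_\ell$ with $b_\ell \in \{0,1\}$.

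For completeness, given a satisfying assignment $z^*$, set $x_0=1$, $x_i=z_i^*$, and for each clause $j$ pick some literal index $k_j^*$ that is true and set $u_{j,k_j^*}=1$ with the other two entries $0$. All variables lie in $\{0,1\}$ and the vector is nonzero since $x_0=1$. For soundness, suppose a solution over $\F$ exists. Every constraint is invariant under $x\mapsto -x$ (the quadratic forms are even and the right-hand side $1=(-1)^2$ is preserved), so by replacing $x$ with $-x$ if necessary we may assume $x_0=1$ (in characteristic $2$ this is immediate from $x_0^2=1$). Then the second and third families force $x_i,u_{j,k}\in\{0,1\}$, the fourth forces exactly one $u_{j,k_j^*}=1$ per clause, and the fifth then gives $\hat l_{j,k_j^*}=1$, so every clause is satisfied. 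Hence unsatisfiability of $\varphi$ implies absence of any solution in $\F^n$.

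For the homogeneous variant I would promote the constant $x_0$ to a free variable $t$ and drop the only non-homogeneous equation $x_0^2=1$; the rest of the system becomes
\begin{equation*}
x_i^2 - t x_i = 0,\quad u_{j,k}^2 - t u_{j,k} = 0,\quad t\!\sum_k u_{j,k} - t^2 = 0,\quad u_{j,k}\bigl(\hat l_{j,k} - t\bigr) = 0,
\end{equation*}
which is homogeneous with all right-hand sides $0$. Completeness carries over by setting $t=1$ and using the Boolean assignment from before. For soundness, suppose a nonzero solution exists: either $t\neq 0$, in which case scaling by $1/t$ reduces to the $t=1$ case analyzed above and yields a satisfying assignment of $\varphi$; or $t=0$, in which case the Boolean constraints collapse to $x_i^2=0$ and $u_{j,k}^2=0$, forcing the entire vector to be zero and contradicting the nonzero hypothesis.

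The main subtlety to watch out for is precisely this last point: in reductions of this flavor the homogenized system can acquire spurious ``solutions at infinity'' with $t=0$ but other coordinates nonzero. Here the Boolean-type constraints $x_i^2-tx_i=0$ and $u_{j,k}^2-tu_{j,k}=0$ specialize at $t=0$ to $x_i^2=0$ and $u_{j,k}^2=0$, which kill all such spurious solutions uniformly over any field $\F$; this is really the only design choice that needs care. The sign ambiguity $x_0=\pm 1$ in the non-homogeneous reduction over fields of characteristic $\neq 2$ is handled for free by the $x\mapsto -x$ symmetry noted above.
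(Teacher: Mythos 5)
Your proposal is correct, but it takes a genuinely different route from the paper. The paper reduces from \textsc{Circuit-SAT}: it introduces one variable $x_i$ per gate and a homogenizing variable $z$, with constraints $x_i(x_i-z)=0$ enforcing Booleanity, and gate-by-gate quadratic identities simulating AND ($x_k^2 = x_i x_j$), OR ($z^2 - x_k^2 = (z-x_i)(z-x_j)$), NOT ($z^2-x_k^2 = x_i^2$), and an output constraint $z^2=x_k^2$. Your reduction is from \textsc{3-SAT} and uses the classical Tseitin-style clause-selector trick: in addition to the Boolean indicators $x_i$, you introduce selectors $u_{j,k}$, a degree-2 partition constraint $x_0(u_{j,1}+u_{j,2}+u_{j,3})=1$, and implication constraints $u_{j,k}(\hat l_{j,k}-x_0)=0$. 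Both handle homogenization identically: the homogenizer ($z$ in the paper, $t$ in yours) is either nonzero, in which case one rescales, or zero, in which case the Booleanity constraints $x_i^2 - t x_i=0$ collapse to $x_i^2=0$ and force the all-zero vector over any field. The paper's route produces a somewhat leaner instance (no clause-selector variables) and arguably generalizes more transparently to arbitrary circuits, while yours is more elementary and will be familiar to anyone who has seen a reduction to quadratic feasibility.

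One small imprecision in your soundness argument: you assert that the constraint $x_0(u_{j,1}+u_{j,2}+u_{j,3})=1$ together with $u_{j,k}\in\{0,1\}$ forces \emph{exactly one} selector to equal $1$. This fails in characteristic $2$, where $(1,1,1)$ also sums to $1$. It is harmless — all you actually need is that \emph{at least one} $u_{j,k}=1$, which holds in every characteristic since $(0,0,0)$ gives sum $0\neq 1$ — but the claim as stated should be weakened. (Similarly, in characteristic $2$ you cannot rule out $(1,1,0)$ by asserting its sum is $2\neq 1$; it is $0\neq 1$, so the conclusion survives.) A related remark: you correctly flag that in characteristic $\neq 2$ the non-homogeneous system is $\pm$-symmetric, so $x_0=-1$ is normalized away by $x\mapsto -x$; this is fine, but note the paper sidesteps it entirely by treating the homogeneous case first and then specializing $z:=1$ to get the non-homogeneous statement, so there is never a sign ambiguity to resolve.
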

Note that the above completeness guarantees a solution $x \in \{0,1\}^n$ whereas the soundness rules out $x \in \F^n$ -- it is thus a promise problem.

\Cref{prop:quadeq} is proved via reduction from the circuit satisfiability (\textsc{Circuit-SAT}) problem. The proof follows the standard template for exact NP-Hardness results, and we defer it to \Cref{sec:quadeq}.

\subsection{Tensor Codes and Distance Amplification}

We show hardness of MDP by first generating a constant factor gap and then using the standard observation that the minimum distance of a code is multiplicative under the usual tensor product operation, which we prove below for completeness.

The tensor product of two subspaces $U\subseteq \mathbb{F}^n$ and $V\subseteq \mathbb{F}^m$, denoted by $U\otimes V$ may be defined as the space of matrices $M\in \mathbb{F}^{n \times m}$ such that every row of $M$ lies in $V$ and every column of $M$ lies in $U$.
Let $d(U)$ denote the minimum distance of a subspace $U$, i.e.,
$d(U) := \min_{u\in U\setminus \{0\}} \|u\|_0$.
Then we have
\begin{restatable}{fact}{tensor}
\label{mult:dist}
    For any subspaces $U\subseteq \mathbb{F}^n$ and $V\subseteq \mathbb{F}^m$,
    $d(U\otimes V) = d(U) \cdot d(V)$.
\end{restatable}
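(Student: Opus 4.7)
The plan is to prove the equality by showing both inequalities separately, exploiting the symmetric row/column definition of $U \otimes V$.

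For the upper bound $d(U \otimes V) \le d(U)\cdot d(V)$, I would take vectors $u \in U$ and $v \in V$ with $\|u\|_0 = d(U)$ and $\|v\|_0 = d(V)$ and consider the rank-one matrix $M := u v^T$. Every row of $M$ is a scalar multiple of $v$ (hence in $V$) and every column is a scalar multiple of $u$ (hence in $U$), so $M \in U \otimes V$, and its support is exactly $\mathrm{supp}(u) \times \mathrm{supp}(v)$, giving $\|M\|_0 = d(U)\cdot d(V)$.

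For the lower bound $d(U \otimes V) \ge d(U)\cdot d(V)$, I would take an arbitrary nonzero $M \in U \otimes V$ and bound its support in two stages. Let $k$ be the number of nonzero columns of $M$. Each nonzero column, viewed as a vector of $U$, contributes at least $d(U)$ nonzero entries, so $\|M\|_0 \ge k\cdot d(U)$. To bound $k$, I would pick any nonzero row of $M$ (which exists since $M \ne 0$); this row lies in $V$, so it has at least $d(V)$ nonzero entries, and each such entry sits in a column that is consequently nonzero. Thus $k \ge d(V)$, and combining yields $\|M\|_0 \ge d(U)\cdot d(V)$.

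There is essentially no hard step here; the only thing to be a bit careful about is the direction of the argument in the lower bound, namely that I first use the column property to multiply by $d(U)$ and then use a single nonzero row (viewed as an element of $V$) to lower-bound the number of nonzero columns by $d(V)$. This works uniformly over any field $\mathbb{F}$, with no assumption on characteristic or cardinality, which is what we need for both the MDP($\mathbb{F}_q$) and MDP($\mathbb{R}$) applications.
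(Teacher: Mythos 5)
Your proof is correct, and both the upper bound (via the rank-one matrix $uv^T$) and the lower bound are essentially the same argument as the paper's; the paper just runs the lower-bound counting in the transposed order (a nonzero column has $\ge d(U)$ entries, hence $\ge d(U)$ nonzero rows, each with $\ge d(V)$ entries), which is equivalent to your version after swapping the roles of rows and columns.
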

We defer the proof of this standard fact to ~\cref{appendix:tensoring}. Applying \cref{mult:dist} inductively yields:
\begin{fact}
\label{tensoring}
    For any subspace $U\subset \mathbb{F}^n$ and any $t\in \mathbb{N}$,
    $d(U^{\otimes t}) = d(U)^t $.
\end{fact}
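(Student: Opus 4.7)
The plan is to establish the two matching inequalities $d(U\otimes V)\leq d(U)\cdot d(V)$ and $d(U\otimes V)\geq d(U)\cdot d(V)$ separately, working directly from the row/column definition of the tensor product given in the excerpt.

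For the upper bound, I would exhibit an explicit low-weight element of $U\otimes V$. Picking $u\in U\setminus\{0\}$ with $\|u\|_0=d(U)$ and $v\in V\setminus\{0\}$ with $\|v\|_0=d(V)$, the rank-one matrix $M:=uv^T$ has every row equal to a scalar multiple of $v$ (hence in $V$) and every column equal to a scalar multiple of $u$ (hence in $U$), so $M\in U\otimes V$. Its support is exactly the Cartesian product of the supports of $u$ and $v$, so $\|M\|_0=\|u\|_0\cdot\|v\|_0=d(U)\cdot d(V)$.

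For the lower bound, I would take any nonzero $M\in U\otimes V$ and bound its Hamming weight from below by a simple double counting. Let $R$ denote the set of indices of nonzero rows of $M$. Since each row of $M$ lies in $V$, every row indexed by $R$ has weight at least $d(V)$, which immediately gives $\|M\|_0\geq |R|\cdot d(V)$. It remains to argue that $|R|\geq d(U)$. Since $M\neq 0$, I can pick any column index $j$ for which $M[\cdot,j]$ is nonzero; this column lies in $U\setminus\{0\}$ and so has at least $d(U)$ nonzero entries, each of which marks a distinct nonzero row of $M$, i.e., a distinct element of $R$. Hence $|R|\geq d(U)$, and combining with the previous bound yields $\|M\|_0\geq d(U)\cdot d(V)$, as needed.

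There is no serious obstacle here: the definition of $U\otimes V$ as the set of matrices whose rows lie in $V$ and whose columns lie in $U$ is tailor-made for the double-counting argument sketched above, and no field-specific assumptions or auxiliary results are required. The only mild care needed is in the lower bound, to invoke the column condition exactly once (to produce a witness column whose support lower-bounds $|R|$) after the row condition has been used to convert $|R|$ into a weight bound.
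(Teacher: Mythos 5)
Your argument is essentially identical to the paper's: the same rank-one witness $uv^T$ for the upper bound, and the same "pick a nonzero column to bound the number of nonzero rows by $d(U)$, then each such row contributes $\geq d(V)$" double-counting for the lower bound (the paper states this as the two-factor \cref{mult:dist}). The only thing missing is the one-line remark that the stated $t$-fold claim $d(U^{\otimes t})=d(U)^t$ follows from the two-factor identity by induction on $t$, which is exactly how the paper concludes.
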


\subsection{Hamming Weight of Rank $\geq 2$ Elements of a Tensor Code}
Recall that in \Cref{mult:dist}, the upper bound $d(C)^2$ on $d(C\otimes C)$ (taking $U=V=C$) is attained by a rank-1 matrix.
The following result implies that for any linear code over small fields, codewords of rank $\geq 2$ in $C\otimes C$ have Hamming weight significantly larger than the minimum distance. Austrin and Khot~\cite{AK14} were the first to realize its utility in the context of hardness of the minimum distance of codes. The $\mathbb{F}_2$ case of the below was stated and used earlier in \cite{GGR} toward list decoding tensor product codes.

\begin{lemma}[Rank-$2$ Elements of Tensor Codes have Large Hamming Weight~\cite{AK14}]
\label{lem:rk2hwfq}
    For every subspace $C\subseteq \mathbb{F}_q^n$ and every $M\in C\otimes C$ of rank at least $2$, we have $\|M\|_0 \geq \left(1+\frac{1}{q}\right)\cdot d(C)^2$.
\end{lemma}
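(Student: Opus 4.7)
The plan is to reduce the claim to a second generalized Hamming weight bound on a suitable two-dimensional subcode of $C$, proved by a short double-counting argument.

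The first step is to establish the following lemma: for any two-dimensional subspace $D \subseteq C \subseteq \mathbb{F}_q^n$, the support $\mathrm{supp}(D) := \{i \in [n] : \exists c \in D \text{ with } c_i \neq 0\}$ satisfies $|\mathrm{supp}(D)| \geq (1+1/q) \cdot d(C)$. To prove it, I would double-count the quantity $\sum_{c \in D \setminus 0} \|c\|_0$. On one side, $D$ has $q^2-1$ nonzero elements, each of weight at least $d(C)$, giving a lower bound of $(q^2-1)\, d(C)$. On the other side, for each $i \in \mathrm{supp}(D)$ the set of $c \in D$ vanishing at coordinate $i$ is a $1$-dimensional subspace, so exactly $q^2 - q$ elements of $D$ are nonzero at $i$; summing over coordinates yields $(q^2 - q) \cdot |\mathrm{supp}(D)|$. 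Comparing the two expressions and simplifying $(q^2-1)/(q^2-q) = 1+1/q$ gives the bound.

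Next, I would apply the lemma to $D := \mathrm{colspan}(M)$. Because $M \in C \otimes C$, every column of $M$ lies in $C$, so $D \subseteq C$. The hypothesis $\mathrm{rank}(M) \geq 2$ ensures $\dim D \geq 2$, and any $2$-dimensional subspace of $D$ still lies in $C$, so the lemma gives $|\mathrm{supp}(D)| \geq (1 + 1/q)\, d(C)$.

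The final step is to bound $\|M\|_0$ by summing row weights. The key observation is that the $i$-th row $r_i$ of $M$ is nonzero if and only if $i \in \mathrm{supp}(D)$, since $i \in \mathrm{supp}(D)$ means some column of $M$ has a nonzero $i$-th entry, which is equivalent to $r_i \neq 0$. For each such $i$, the row $r_i$ is a nonzero codeword of $C$ (this is where we use $M \in C \otimes C$ on the row side), so $\|r_i\|_0 \geq d(C)$. Hence
\[
\|M\|_0 \;=\; \sum_{i \in \mathrm{supp}(D)} \|r_i\|_0 \;\geq\; |\mathrm{supp}(D)| \cdot d(C) \;\geq\; \left(1+\tfrac{1}{q}\right) d(C)^2,
\]
which is the claimed inequality. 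The only step with genuine content is the generalized Hamming weight lemma; everything else is bookkeeping that invokes the defining property of $C \otimes C$ twice (once for columns to bound $|\mathrm{supp}(D)|$, and once for rows to bound each row weight).
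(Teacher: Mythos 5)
Your proof is correct, and the overall skeleton matches the paper's: first bound the support of any two-dimensional subspace of $C$ (equivalently, the support of the column span of $M$) from below by $(1+1/q)\,d(C)$, then pay $d(C)$ per nonzero row of $M$ since each nonzero row is a nonzero codeword of $C$. Where you diverge is in how the second generalized Hamming weight bound is proved. The paper (its \cref{lem:overlap}) argues locally on a single pair $u,v$: with $m$ coordinates in exactly one support and $m'$ in both, it combines $m+2m'\ge 2d(C)$ with a pigeonhole step --- among the $q-1$ possible ratios $u_i/v_i$ on the overlap, some $\lambda$ makes $u-\lambda v$ cancel on at least $m'/(q-1)$ positions, giving $m+m'-m'/(q-1)\ge d(C)$ --- and then takes a $\bigl(\tfrac1q,\tfrac{q-1}{q}\bigr)$ convex combination. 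You instead double-count $\sum_{c\in D\setminus\{0\}}\|c\|_0$ over the whole two-dimensional subspace $D$: each $i\in\mathrm{supp}(D)$ contributes exactly $q^2-q$ (the kernel of the evaluation functional $c\mapsto c_i$ is a line), while each of the $q^2-1$ nonzero $c$ contributes at least $d(C)$, yielding $(q^2-1)d(C)\le(q^2-q)\,|\mathrm{supp}(D)|$. Both are standard and both are tight on the Hadamard code. Your double-counting extends seamlessly to the $r$-th generalized Hamming weight, giving $d_r\ge\frac{q^r-1}{q^r-q^{r-1}}\,d$, whereas the paper's pigeonhole argument as written is specific to order two; on the other hand the paper's version is slightly more parsimonious in that it only ever touches two codewords rather than the full subspace.
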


The above lemma follows from the fact that the support of a 2-dimensional subspace of a linear code, which is called the 2nd Generalized Hamming weight in the literature \cite{Wei91, TV95}, is larger than the minimum distance by a constant factor, a feature which we capture by the following definition.

\def\supp{\text{supp}}

\begin{definition}[Non-Overlap]\label{def:overlap}
Let $\mathbb{F}$ be any field. A subspace $C \subset \mathbb{F}^n$ is said to be $\alpha$-non-overlapping if for some $\alpha \ge 1$ if for any $u,v \in C$ that are linearly independent over $\mathbb{F}$, we have
\[ |\supp(u) \cup \supp(v) | \ge \alpha \cdot d(C) \ . \]
We call $\alpha$ the non-overlapping coefficient of $C$.
\end{definition}

Equipped with the above definition, we now state and prove a generalization of \cref{lem:rk2hwfq} abstracted through the $\alpha$-non-overlapping property.

\begin{lemma}
\label{lem:rk2hw}
Let $\mathbb{F}$ be any field and $C\subseteq \mathbb{F}^n$ be an arbitrary subspace that is $\alpha$-non-overlapping for some $\alpha \ge 1$. Then every $M\in C\otimes C$ of rank at least $2$ satisfies $\|M\|_0 \geq \alpha\cdot d(C)^2$.
\end{lemma}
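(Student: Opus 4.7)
The plan is to leverage the $\alpha$-non-overlapping property along the rows of $M$ and then count column-weights. Concretely, since $M$ has rank at least $2$, I would first pick two rows $r_{i_1}$ and $r_{i_2}$ of $M$ that are linearly independent as vectors in $\mathbb{F}^n$; such rows must exist because the row space of $M$ has dimension at least $2$. Both rows lie in $C$ by the definition of $C \otimes C$, so the $\alpha$-non-overlapping property applies to the pair $(r_{i_1}, r_{i_2})$.

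Next, I would set $T := \supp(r_{i_1}) \cup \supp(r_{i_2})$ and invoke \Cref{def:overlap} to conclude $|T| \geq \alpha \cdot d(C)$. For each column index $j \in T$, at least one of $M[i_1,j]$ or $M[i_2,j]$ is nonzero, so the $j$-th column of $M$ is a nonzero element of $C$, and therefore has Hamming weight at least $d(C)$. Summing column-weights only over $j \in T$ then gives
\begin{equation*}
    \|M\|_0 \;=\; \sum_{j=1}^{n} \|M_{:,j}\|_0 \;\geq\; \sum_{j \in T} \|M_{:,j}\|_0 \;\geq\; |T|\cdot d(C) \;\geq\; \alpha \cdot d(C)^2,
\end{equation*}
which is exactly the desired bound.

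I do not foresee a genuine obstacle here; the argument is essentially two lines once the correct objects are named. The only point that merits care is ensuring that linear independence of two rows (a rank/row-space statement) is what triggers the \Cref{def:overlap} hypothesis, rather than, say, linear independence of columns — this is just a matter of choosing to apply non-overlap on the row side (symmetry would allow the column side equally well). The argument is also why \Cref{lem:rk2hwfq} follows as a corollary: any linear code $C \subseteq \mathbb{F}_q^n$ is automatically $(1+1/q)$-non-overlapping, since for linearly independent $u,v \in C$, the line through them contributes $q-1$ nonzero codewords whose supports lie in $\supp(u) \cup \supp(v)$, and a standard averaging / generalized Hamming weight argument yields the $(1+1/q)\,d(C)$ lower bound on $|\supp(u)\cup\supp(v)|$.
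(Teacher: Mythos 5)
Your argument is correct and is essentially the paper's own proof with the roles of rows and columns transposed: the paper picks two linearly independent \emph{columns}, applies non-overlap to get $\geq \alpha\, d(C)$ nonzero rows, and then bounds each such row's weight by $d(C)$, whereas you pick two linearly independent rows and count column weights. Since $M \in C \otimes C$ has all rows and all columns in $C$, the two presentations are interchangeable, as you yourself observe.
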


\begin{proof}
    Since $M$ is of rank at least $2$, there are two linearly independent
    columns, that  have joint support of size at least $\alpha \cdot d(C)$ by the $\alpha$-non-overlapping property of $C$. Thus at least $\alpha \cdot d(C)$ rows of $M$ are non-zero, and since they lie inside $C$, each of these rows has at least $d(C)$ non-zero entries.
    Thus $\|M\|_0\ge \alpha \cdot d(C)^2$.
\end{proof}

The following lemma shows one can take $\alpha = 1+\frac{1}{q}$ for any subspace over $\mathbb F_q$, and this is in general tight as evidenced by the Hadamard code.

\begin{restatable}{lemma}{overlap}

\label{lem:overlap}
    Let $C$ be an arbitrary subspace over $\F_q$. Then $C$ is $\left(1+\frac{1}{q}\right)$-non-overlapping.
\end{restatable}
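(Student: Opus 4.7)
The plan is to prove \Cref{lem:overlap} by a double-counting argument applied to the full two-dimensional subspace $W := \spn_{\F_q}\{u,v\}$ for arbitrary linearly independent $u,v \in C$. The subspace $W$ has $q^2$ elements, and each of the $q^2-1$ nonzero elements of $W$ lies in $C$, so each has Hamming weight at least $d(C)$. Summing over all nonzero $(a,b) \in \F_q^2$ we obtain the lower bound
\[
\sum_{(a,b)\neq (0,0)} \|au + bv\|_0 \;\geq\; (q^2-1)\cdot d(C).
\]

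Next I would evaluate the same sum by swapping the order of summation: for each coordinate $i \in \supp(u) \cup \supp(v)$, count the number of pairs $(a,b)\neq(0,0)$ for which $(au+bv)_i = a u_i + b v_i \neq 0$. The key observation, which I would verify by a short case analysis on whether $u_i$ and/or $v_i$ are zero, is that this count is exactly $q(q-1)$ in each of the three nontrivial cases. Namely, if exactly one of $u_i, v_i$ is nonzero, then one of $a,b$ is forced to be nonzero and the other is free; if both are nonzero, then for each $b$ there is exactly one $a$ making $au_i+bv_i=0$, giving $q$ vanishing pairs including $(0,0)$, hence $q^2 - 1 - (q-1) = q(q-1)$ nonvanishing pairs. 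Coordinates outside $\supp(u)\cup\supp(v)$ contribute $0$.

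Combining the two expressions for the sum yields
\[
q(q-1)\cdot \bigl|\supp(u)\cup \supp(v)\bigr| \;\geq\; (q^2-1)\cdot d(C) \;=\; (q-1)(q+1)\cdot d(C),
\]
and dividing through by $q(q-1)$ gives $|\supp(u)\cup\supp(v)| \geq (1+1/q)\,d(C)$, as required.

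\textbf{Main obstacle.} There is no serious obstacle; the only point that requires care is the uniform case analysis showing that every coordinate in $\supp(u) \cup \supp(v)$ contributes the same count $q(q-1)$ to the sum. The argument crucially uses that $u$ and $v$ are linearly independent (so that the subspace $W$ genuinely has $q^2$ elements and the linear form $a u_i + b v_i$ in $(a,b)$ is nonzero whenever $u_i$ or $v_i$ is nonzero), and works over arbitrary finite fields since it only uses that $\F_q^\times$ has size $q-1$.
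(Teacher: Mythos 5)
Your proof is correct, and it takes a genuinely different route from the paper. You do a clean double-counting/averaging argument over \emph{all} $q^2-1$ nonzero vectors of the $2$-dimensional subspace $W = \spn\{u,v\}$: summing $\|w\|_0 \geq d(C)$ over all of them gives a lower bound of $(q^2-1)d(C)$, while swapping the order of summation shows that each coordinate of $\supp(u)\cup\supp(v)$ contributes exactly $q(q-1)$, since the linear form $(a,b)\mapsto au_i+bv_i$ is nonzero in each nontrivial case and hence vanishes on precisely $q-1$ nonzero pairs (when both $u_i,v_i\neq 0$) or $q-1$ fewer than the full count (when exactly one is nonzero). The paper's proof instead picks out only \emph{three} cleverly chosen vectors in $W$: it writes $m+2m' \geq 2d(C)$ from the weights of $u$ and $v$ alone, then uses a pigeonhole argument to find a single $\lambda\neq 0$ for which $u-\lambda v$ drops at least $m'/(q-1)$ of the overlapping coordinates, yielding $m+m'-m'/(q-1)\geq d(C)$, and finally combines these two inequalities with weights $1/q$ and $(q-1)/q$. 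Your averaging argument is arguably more symmetric and makes the appearance of the constant $1+1/q$ more transparent (it is literally $\frac{q^2-1}{q(q-1)}$), at the modest cost of summing over the whole subspace; the paper's pigeonhole version is more local, using only a single well-chosen linear combination beyond $u$ and $v$ themselves. Both are elementary and tight, as witnessed by the Hadamard code.
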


The proof of \Cref{lem:overlap} is quite simple and can be found in \Cref{sec:overlap}, and \Cref{lem:rk2hwfq} follows from plugging \Cref{lem:overlap} into \Cref{lem:rk2hw}.

\paragraph{Non-Overlap for Real Codes.}
As mentioned above, the non-overlapping
coefficient of a code over $\mathbb{F}_q$ can be at most $1+\frac{1}{q}$, and thus approaches $1$ for large fields (and in fact even equal $1$ when the field is the reals).  However, this is only true for the ``worst" subspaces and one might expect that typical subspaces of suitable dimension can have much larger non-overlapping coefficients. As one of our main technical results, we show in \cref{random:properties} that, for any fixed $\eps>0$, the kernel of a random Rademacher matrix over the reals is $(2-\epsilon)$-non-overlapping.

\section{PCP-Free Deterministic Reduction for MDP($\F_q$)}
\label{sec:mdp}
\paragraph{Rank-1 Testing over $\F_q$ via $\varepsilon$-Balanced Tensor Codes.}
At the heart of our reduction that generates constant factor hardness for MDP($\F_q$) is the observation that the connection between rank and Hamming weight in \cref{lem:rk2hw} can be made two-sided, assuming all codewords of the base code being tensored have similar hamming weight.
Such a code is called $\varepsilon$-balanced.
We observe that any codeword in $C\otimes C$ for an $\varepsilon$-balanced code $C$ has near-minimum Hamming weight if and only if it is rank-1.
We formalize this discussion below:

\begin{definition}\label{def:eps-balanced}
    For any constant $\varepsilon>0$, we say a linear error-correcting code with encoding map $G:\mathbb F_q^n \to \mathbb F_q^N$ and minimum distance $d$ is $\varepsilon$-balanced\,\footnote{
    The usual definition of $\varepsilon$-balanced is for binary linear codes and has
    the additional requirement that the minimum distance is $N/2(1-\Theta(\varepsilon))$.
    For our purposes, the minimum distance is unconstrained. We abuse terminology
    and continue to use the term $\varepsilon$-balanced.
    We also use this terminology for larger fields.
    }
    if the Hamming weight of every nonzero codeword lies in the range $[d,(1+\varepsilon)d]$.
\end{definition}
We remark that $\varepsilon$-balanced codes satisfy a weak version of local density, namely a Hamming ball of radius $(1+\varepsilon)d$ contains exponentially many (in fact all) codewords.

\paragraph{Constructing $\eps$-Balanced Codes.}
$\varepsilon$-balanced codes can be easily constructed by concatenating a Reed-Solomon code with the Hadamard code \cite{AGHP92}. Specifically, we have the following lemma:
\begin{lemma} \label{lem:eps-biased}
    For any constant $\varepsilon>0$, any finite field $\mathbb F_q$, and any $n \in \mathbb N$, there exists $N \le (qn/\varepsilon)^2$ and a linear code $G:\mathbb F_q^n \to \mathbb F_q^{N}$ with minimum distance at least $d = (1-\varepsilon)(1-\frac{1}{q})N$, satisfying
    $$\|G(x)\|_0 \in \left[(1-\varepsilon)\left(1-\frac{1}{q}\right)N,\left(1-\frac{1}{q}\right)N\right], \forall x \in \mathbb F_q^n \setminus \{0\} .$$
\end{lemma}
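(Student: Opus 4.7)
The plan is to concatenate a Reed--Solomon outer code with the Hadamard inner code over $\mathbb F_q$. The entire argument hinges on one clean observation about Hadamard: for any nonzero $y \in \mathbb F_q^k$, the map $z \mapsto \langle y, z \rangle$ is a nonzero $\mathbb F_q$-linear functional on $\mathbb F_q^k$, so it takes each value in $\mathbb F_q$ equally often. Hence the Hadamard encoding $(\langle y, z \rangle)_{z \in \mathbb F_q^k} \in \mathbb F_q^{q^k}$ of any nonzero $y$ has Hamming weight \emph{exactly} $(1-1/q)q^k$. This gives perfect balance at the inner level; the only reason the final code is not perfectly balanced is that the outer codeword can have a few zero coordinates, and the Reed--Solomon choice keeps this number small relative to the blocklength.

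Concretely, I will set $k = \lceil \log_q(n/\varepsilon) \rceil$, so that $n/\varepsilon \le q^k \le qn/\varepsilon$. A message $x \in \mathbb F_q^n$ will be identified with the polynomial $f_x(T) = \sum_{i=0}^{n-1} x_i T^i \in \mathbb F_{q^k}[T]$ of degree less than $n$, whose coefficients happen to lie in the subfield $\mathbb F_q \subseteq \mathbb F_{q^k}$. The outer stage evaluates $f_x$ at every point of $\mathbb F_{q^k}$ to produce $(f_x(a))_{a \in \mathbb F_{q^k}} \in \mathbb F_{q^k}^{q^k}$. The inner stage fixes an $\mathbb F_q$-basis of $\mathbb F_{q^k}$, views each $f_x(a)$ as an element of $\mathbb F_q^k$, and applies Hadamard to obtain an element of $\mathbb F_q^{q^k}$. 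Stacking these $q^k$ inner blocks yields the final codeword $G(x) \in \mathbb F_q^N$ with $N = q^{2k} \le (qn/\varepsilon)^2$. Each stage (polynomial representation, multi-point evaluation, basis identification composed with Hadamard) is $\mathbb F_q$-linear, so $G$ is an $\mathbb F_q$-linear code.

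Finally I would analyze the weight. By the observation above, the total Hamming weight $\|G(x)\|_0$ equals the sum over $a \in \mathbb F_{q^k}$ of the weight of the inner block, which is $(1-1/q) q^k$ whenever $f_x(a) \ne 0$ and $0$ otherwise. Since a nonzero polynomial of degree less than $n$ has at most $n-1$ roots over any field, the number of nonzero evaluations lies in $[q^k - (n-1),\, q^k]$. Multiplying through, $\|G(x)\|_0 \le (1-1/q)N$ for free, while
\[
\|G(x)\|_0 \;\ge\; (1-1/q)\, q^k \bigl(q^k - (n-1)\bigr) \;\ge\; (1-\varepsilon)\left(1-\tfrac{1}{q}\right) N,
\]
using $q^k \ge n/\varepsilon$; the minimum-distance bound $d \ge (1-\varepsilon)(1-1/q)N$ is then immediate by linearity of $G$. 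I do not anticipate any genuine obstacle; the only delicate point is choosing $k$ so that the blocklength bound $N \le (qn/\varepsilon)^2$ and the weight lower bound hold simultaneously, which is exactly what $k = \lceil \log_q(n/\varepsilon) \rceil$ accomplishes.
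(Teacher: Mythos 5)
Your proposal is correct and takes essentially the same approach as the paper: a Reed--Solomon outer code over the extension field $\mathbb F_{q^k}$ with $\mathbb F_q$-coefficients of degree $<n$, concatenated with the $q$-ary Hadamard code, with the same parameter choice $q^k \in [n/\varepsilon,\, qn/\varepsilon]$. The only cosmetic difference is that you derive both weight bounds uniformly from the observation that the Hadamard inner code is \emph{exactly} balanced (weight exactly $(1-1/q)q^k$ on every nonzero input), whereas the paper obtains the lower bound from the concatenated minimum distance and the upper bound by a separate per-symbol counting argument.
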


Note that when $\varepsilon<\frac{1}{2}$, we have $\left[(1-\varepsilon)\left(1-\frac{1}{q}\right)N,\left(1-\frac{1}{q}\right)N\right] \subseteq [d,(1+2\varepsilon) d]$, which means the code is $(2\varepsilon)$-balanced.

\begin{proof}
Pick $m$ to be the smallest integer so that $n \le \varepsilon q^m$. Let $Q = q^m$. Note that $Q \le q n/\varepsilon$.
   Let $\text{RS} : \mathbb F_q^n \to \mathbb F_Q^Q$ be a Reed-Solomon encoding map that maps polynomials of degree $< n$ over $\mathbb F_q$ to their evaluations at all points in the extension field $\mathbb F_Q$. Now concatenate this encoding with the Hadamard encoding that maps $\mathbb F_Q$, viewed as vectors in $\mathbb F_q^m$ under some canonical basis, to $\mathbb F_q^{q^m}$. The resulting concatenated code has block length $N=Q \cdot q^m = Q^2  \le ( q n/\varepsilon)^2$.

   The distance of the concatenated code is at least $(1-\varepsilon)(1-\frac{1}{q}) N$, since the Reed-Solomon code has distance greater than $Q - n \ge (1-\varepsilon) Q$ and the Hadamard code has distance $(1-\frac{1}{q})Q$. The lower bound on the weight of every nonzero codeword follows from distance, while the upper bound comes from the fact that each of the $Q$ symbols in Reed-Solomon code contributes at most $(1-\frac{1}{q})Q$ Hamming weight after encoding by the Hadamard code.
\end{proof}

\subsection{Reduction for MDP($\F_q$)}
\label{subsec:mdp_red}
We now prove \Cref{thm:mdp} by presenting a gap-producing reduction from homogeneous quadratic equations to MDP($\F_q$).

\noindent\textbf{Input.} A parameter $\varepsilon>0$ and a system of homogeneous quadratic equations of the form
\begin{align}
\label{hq}
Q_1(xx^T)=0,\dots , Q_m(xx^T)=0.
\end{align}

\noindent\textbf{Output Subspace.}
Let $\varepsilon = \frac{1}{9q}$, and let $G\in \mathbb{F}_q^{N\times n}$ be the generator matrix of an $\varepsilon$-balanced code of minimum distance $d$. Our output subspace
is defined as
\begin{equation}
V:= \{GXG^T :~ Q_1(X)=0,\dots , Q_m(X)=0,~
X^T=X,~X\in\mathbb{F}_q^{n\times n}\}.
\label{hred}
\end{equation}

Using the construction of an $\varepsilon$-balanced code from \Cref{lem:eps-biased}, we reduce an instance of homogeneous quadratic equations with $n$ variables to an MDP($\F_q$) instance with $N^2=\text{poly}\left(n,\frac{1}{\varepsilon}\right)$ variables. A basis of $V$ can be computed in polynomial time by considering
the basis $\{GXG^T:X\in B\}$, where $B$ is a basis of
$\{X:Q_1(X)=0,\ldots ,Q_m(X)=0,~X^T=X\}$.

\subsection{Analysis}
\noindent\textbf{Completeness.} Let $x \in \mathbb \{0,1\}^{n}$ be a non-zero solution to the system \Cref{hq}. Then $(Gx)(Gx)^T\in V\setminus\{0\}$ and satisfies $\|(Gx)(Gx)^T\|_0 \le (1+\varepsilon)^2 d^2 \leq \left(1+\frac{1}{3q}\right)d^2$.

\noindent\textbf{Soundness.} Suppose there is no non-zero solution to system \Cref{hq}, we argue that any $Y \in V \setminus \{0\}$ has $\|Y\|_0 \ge \left(1+\frac{1}{q}\right) d^2$. Consider any $Y \in V \setminus \{0\}$ and let
$X$ be such that $Y=GXG^T$.

If $X$ has rank at least 2, then $GXG^T$ has rank at least 2 since $G$, being the generator matrix of a code of positive distance, has full column rank. It then follows from \cref{lem:rk2hwfq} that $\|Y\|_0\ge \left(1+\frac{1}{q}\right) d^2$. So it remains to consider the case where $X$ has rank 1. Since $X$ is symmetric, we conclude $X = xx^T$ for some non-zero $x\in \mathbb F_q^n$, which implies that for every $\ell \in [m]$, $Q_\ell(xx^T)=0$,
and thus $x$ is a solution to the system \Cref{hq}, contradicting our assumption.

This yields NP-Hardness of approximating MDP($\F_q$) within a factor of
$\left(1+\frac{1}{q}\right)/\left(1+\frac{1}{3q}\right)= 1+\frac{2}{3q+1}$.
By simple tensoring (\cref{tensoring}), one can increase the gap to any constant, with only a polynomial blow-up on the instance size, and to almost-polynomial gap with a quasi polynomial blow-up in instance size.  This completes the proof of \Cref{thm:mdp}.

By slightly modifying the reduction to utilize a distinguished-coordinate property, we deduce the hardness of NCP (\Cref{thm:ncp}) as a quick corollary of \Cref{thm:mdp}. We put the proof in \Cref{sec:ncp}.

\section{PCP-Free Randomized Reduction for MDP($\R$) and SVP}
\label{bypass:pcp}
The main result of this section is \cref{thm:nsp}.
The starting point of our reduction is hardness of exactly solving a non-homogeneous system of quadratic equations. Throughout this section, we use $H^N_k$ to denote the weight-$k$ slice of the $N$-dimensional hypercube, i.e., $H^N_k:=\{x \in \{0,1\}^N \mid \|x\|_0=k\}$.

\subsection{Rank-1 Testing over $\R$ via Tensor Codes}
Just like over $\F_q$, the key to our gap-producing reduction for MDP($\R$) is a two-sided version of \cref{lem:rk2hw}. There is a nontrivial complication over $\R$ compared to $\F_q$: there is no $n$-dimensional subspace $C \subseteq \R^N$ with distance $n^{\Omega(1)}$ and non-overlapping coefficient $1+\Omega(1)$, and that further admits an encoding map $G : \R^n \to C$ mapping $\{0,1\}^n$ to vectors of Hamming weight at most $\rho \cdot d(C)$ for small $\rho \geq 1$.
Inspired by \cite{ajtai1998shortest,micciancio2001shortest}, we weaken the final requirement above to: \\
there exists a linear projection
$T:\R^N\to \R^n$ (for $n=N^{\Omega(1)}$) such that
\begin{equation}
\label{wlc}
    \text{$\forall~x\in \{0,1\}^n,~\exists~y\in C\cap H_{\rho d(C)}^N$, s.t. $Ty=x \qquad\equiv\qquad T(C\cap H_{\rho\cdot d(C)}^N)\supseteq \{0,1\}^n$}\mper
\end{equation}
Collecting these requirements, we define:

\begin{definition}[Coding Gadget]
    Let $\F$ be a field. For $\rho\geq 1,\alpha\geq 1,n\in\N$, we say a triple $(C,T,k)$ is a $(\rho,\alpha,n)$-coding gadget if for some $d,N\in \N$,
    $C$ is a subspace in $\F^N$, $T\in \F^{n \times N}$, and $k \le \rho \cdot d(C)$, and they satisfy
    \begin{itemize}
    \itemsep=-0.5ex
    \vspace{-1ex}
        \item (Weak Local Density): $T(C \cap H_{k}^N)\supseteq \{0,1\}^n$.
        \item (Non-Overlap): $C$ is $\alpha$-non-overlapping.
    \end{itemize}
\end{definition}
\noindent
The connection between \cref{wlc} and local density, i.e., an abundance of codewords in a small Hamming ball,
is clarified via
the Sauer-Shelah lemma, which states that a sufficient condition to ensure
\cref{wlc} is to have $|C\cap H_{\rho\cdot d(C)}^N| > \sum_{i=0}^n \binom{N}{i}$, in which case $T$ can be taken to be the projection to $n$ {\em shattered} coordinates. Combined with the trivial necessary condition $|C\cap H_{\rho\cdot d(C)}^N| \geq 2^n$, having \cref{wlc} with $n = N^{\Omega(1)}$ is equivalent to a code having exponentially many codewords in a ball of radius
$\rho\cdot d(C)$ (where we call it weak local density since $\rho\geq 1$).
Micciancio~\cite{micciancio2001shortest} proved a probabilistic version of the Sauer-Shelah lemma\footnote{using a matrix with i.i.d. Bernoulli entries}, and so it suffices for our purposes to construct a family of codes that are $\alpha$-non-overlapping and $\rho$-locally-dense where $\rho^2<\alpha$.

In \cref{section:overlap}  we show that the kernel of a random Rademacher matrix has strong non-overlapping properties. Combining this with a local density bound from \cite{FSSZ23} and Micciancio's random linear projection, we obtain
\begin{restatable}{theorem}{codinggadget}(Computing the Coding Gadget over Reals)
\label{coding:gad}
    Fix any $\eps\in (0,1)$. There is a randomized algorithm that on input $n\in \N$,
    runs in time $n^{O(1)}$ and for some integers $d,h,N$ produces matrices
    $R\in \{\pm 1\}^{h\times N},~T\in \{0,1\}^{n\times N}$ and an integer $k$ such that $(\ker(R),T,k)$ is a $(1+\eps,~2-\eps,~n)$-coding gadget
    with probability $1-o(1)$.
\end{restatable}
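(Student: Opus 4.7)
The plan is to take $C = \ker(R)$ for a random Rademacher matrix $R \in \{\pm 1\}^{h \times N}$ of suitable aspect ratio, and to construct $T$ via a random Bernoulli projection in the spirit of Micciancio's probabilistic Sauer--Shelah lemma. Concretely, I would fix $h = \beta N$ for a small constant $\beta = \beta(\eps) > 0$, chosen so that with probability $1 - o(1)$ the minimum distance $d := d(\ker(R))$ is linear in $N$ and, by the estimates of \cite{FSSZ23}, there are at least $2^{cN}$ codewords of Hamming weight at most $k := (1+\eps)\, d$, for some constant $c = c(\eps) > 0$. I would then set $n = c' N$ for a sufficiently small $c' < c$, ensuring the reduction runs in time polynomial in $n$ while leaving enough ``entropy'' in the set $\ker(R) \cap H_k^N$ to apply the Sauer--Shelah argument.

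For the non-overlap requirement, I would directly invoke \cref{random:properties} from \cref{section:overlap}, which asserts that $\ker(R)$ is $(2-\eps)$-non-overlapping with probability $1-o(1)$. This is the key algebraic feature of Rademacher kernels, established in the paper through a small-ball/Littlewood--Offord-type analysis, and it is the step I expect to be the main technical obstacle; the rest of the construction is, in comparison, routine.

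For the weak local density, I would draw $T \in \{0,1\}^{n \times N}$ with i.i.d.\ Bernoulli entries, independently of $R$. Since $S := \ker(R) \cap H_k^N$ has cardinality at least $2^{cN} \gg 2^n$, the probabilistic Sauer--Shelah lemma of \cite{micciancio2001shortest} implies that, with probability $1-o(1)$, a random $T$ of this form satisfies $T(S) \supseteq \{0,1\}^n$. This yields the weak local density condition $T(\ker(R) \cap H_k^N) \supseteq \{0,1\}^n$ with $k \leq (1+\eps)\, d$ as required.

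Finally, a union bound over the four failure events -- the linear lower bound on $d(\ker(R))$, the local density count $|S| \geq 2^{cN}$ from \cite{FSSZ23}, the $(2-\eps)$-non-overlap property from \cref{random:properties}, and the random projection argument for $T$ -- yields the claimed $1-o(1)$ success probability. For the gadget to be useful in the reduction of \cref{thm:nsp} one also needs the rank-$1$ testing inequality $\rho^2 < \alpha$, i.e.\ $(1+\eps)^2 < 2 - \eps$, which holds for any sufficiently small constant $\eps > 0$; this explains why the non-overlap coefficient $\alpha = 2 - \eps$ provided by \cref{random:properties} is exactly the right strength for the application.
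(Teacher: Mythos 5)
Your blueprint is the same as the paper's: take $C = \ker(R)$ for a random Rademacher matrix, invoke the non-overlap theorem from \Cref{section:overlap}, combine the \cite{FSSZ23} weak local density count with Micciancio's probabilistic Sauer--Shelah lemma via a random Bernoulli matrix $T$, and union-bound the failure events. The gap is the parameter regime. You put $h = \beta N$ and $n = c'N$, so that $h$, $d$, $n$, and $N$ are all linear in one another, and this linear regime is incompatible with both black-box ingredients. First, the non-overlap result is \Cref{overlap} (the label \cref{random:properties} you cite is the section header, not the theorem), and it is proved only in a ``polynomial aspect ratio'' regime: it assumes $\delta = d/h \le c_2 \eps$ and takes $N$ to be the largest integer with $h \ge d \log_{\sqrt d}(N/d)$, which forces $N \approx d^{1 + 1/(2\delta)}$, so $N$ is a large power of $d$ and $h$, not linear in them. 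The reason is visible in the union bound inside the proof of \Cref{overlap}: the net size is $d^{O(d)} N^{d_2}$ against a small-ball probability of $(\mathrm{polylog}\, d/d)^h$, and success needs roughly $\eps \log N \gtrsim \log d$; with $\log N / \log d \to 1$ in your regime, the bound fails for every fixed small $\eps$.

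Second, and independently, Micciancio's probabilistic Sauer--Shelah lemma needs $|\calF| \ge k!\, N^{4\sqrt{k}\, n/t}$. With $k = \Theta(N)$ and $n = c'N$, already $k! = N^{\Theta(N)}$ dwarfs the available $2^{cN}$ codewords, and the factor $N^{4\sqrt{k}\,n/t} = N^{\Omega(N^{1.5})}$ (for any $t = o(1)$) is even more prohibitive, so the hypothesis cannot be satisfied. The paper avoids this by setting $h = n^3$ and $\delta$ a suitably small constant, which makes $N$ a large polynomial of $n$; then $n$, $\sqrt{k}$, and $\log(k!)/\log N$ are all $o(d)$, while the codeword count from \Cref{scnd:mmnt} is $(N/2)^{\eps d}$, and one checks $k!\,N^{4\sqrt{k}n/t} < (N/2)^{\eps d}$ for, e.g., $t = 1/n^{0.1}$. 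Replace your linear parameterization by one in which $N$ is polynomially much larger than $d$, $h$, and $n$ (as in the paper), and the rest of your argument goes through essentially verbatim.
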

\begin{remark}
We remark that the bounded integer random entries model seems influential
in our successful construction of a coding gadget. Gaussian random matrices will not work for instance.
\end{remark}

In the following, we first show how to perform the reduction given the gadget from \Cref{coding:gad}, and defer the proof of \Cref{coding:gad} to \Cref{random:properties}.

\subsection{Reduction}
\label{subsec:mdpred}
We next present our basic reduction that generates a constant multiplicative gap.

We will use the following randomized polynomial time reduction from non-homogeneous quadratic equations to MDP($\R$).

\noindent\textbf{Input.} A parameter $\eps \in (0, 1)$ and a system of $m$ quadratic equations of the form
\begin{align}
\label{hq2}
  Q_1(xx^T) = b_1, \ldots,Q_m(xx^T) = b_m.
\end{align}

\noindent \textbf{Output Subspace.}
Let $(C,T,k)$ be a $(\rho,\alpha,n)$-coding gadget for $\rho= 1+\eps,~\alpha= 2-\eps$ as computed in \cref{coding:gad}.
The linear subspace $V$ is the set of tuples $(Y,z) \in \R^{N \times N}\times \R$ satisfying the following system of homogeneous linear equations in $Y, z$.
\begin{align}
     &Y\in C\otimes C \nonumber \\
     &Y = Y^T \nonumber \\
     &z = \sum_{i \in [n]} Y[i, i] / k \nonumber \\
    &\!\!\!\!\!\!Q_\ell(TYT^T) = z \cdot b_{\ell}  \quad \forall \ell \in [m]
    \label{l0:newreduction}
\end{align}
where \Cref{l0:newreduction} refers to the entire system rather than just the last line.

Since $C=\ker(R)$ and $T\in \{0,1\}^{N\times n}$, it is easily verified that the above subspace can be written as $\ker(M)$
for an $N^{\Theta(1)}\times N^{\Theta(1)}$ matrix $M$ with integer entries of magnitude $O(N^2)$.

\subsection{Analysis}
\label{sec:svp_analysis}

\noindent\textbf{Completeness.} Let $x \in \{ 0, 1 \}^{n}$ be a solution to the system \Cref{hq2}.
By (Weak Local Density) of the coding gadget, there exists $y \in \{ 0, 1 \}^N \cap \ker(R)$ with $\norm{0}{y} = k$ such that $Ty = x$. Clearly $Y = yy^T$, $z = 1$ is a nonzero Boolean
solution of \Cref{l0:newreduction} which satisfies $\norm{0}{(Y,z)} = k^2+1\leq \rho^2\cdot d(C)^2 +1$, so we take $s = \rho^2\cdot d(C)^2 +1$. \bigskip

\noindent\textbf{Soundness.} We proceed via the contrapositive. Let $(Y,z) \in V$ be nonzero with $\norm{0}{(Y,z)} < \alpha\cdot d(C)^2$. We will show that the system of quadratic equations admits a solution over reals.

If $Y=0,z\neq 0$, the final constraints of \Cref{l0:newreduction} imply that $b= 0$
and so $0$ is a solution to the system of quadratic equations.

If $Y$ has rank at least 2, then since $C$ is $\alpha$-non-overlapping,  \cref{lem:rk2hw} implies $\norm{0}{Y} \geq \alpha\cdot d(C)^2$, which contradicts our assumption.

Thus $Y$ must have rank $1$. Since $Y$ is symmetric, we conclude $Y = yy^T$ for some nonzero $y \in \R^N$.
The final constraints of \Cref{l0:newreduction} imply that for any $\ell \in [m]$,
\[
    Q_\ell(Ty(Ty)^T) = z\cdot b_\ell\mper
\]
Moreover the constraint $z = \sum_{i} Y[i, i] / k= \norm{2}{y}^2/k$ implies that $z > 0$. Thus $Ty / \sqrt{z}$ is a solution to the system of quadratic equations.

For $k$ sufficiently large, we obtain a gap of $\alpha/\rho^2-o(1) \geq 2-3\eps$.
\cref{thm:nsp} then follows from applying \cref{tensoring}, where we note that the Booleanity property of the completeness solution is preserved
under tensoring.

Our strong completeness and soundness conditions imply hardness of SVP (\cref{thm:svp}) as an immediate corollary (while expanding the range of values of $p$ for which hardness was known):
\begin{proof}[Proof of \cref{thm:svp}.]
Given an instance $(V, s)$ of \cref{thm:nsp}, define the lattice $L := V \cap \Z^n$. In the (YES) case, there exists $x \in L \cap H^N_s$, and so
$\| x \|_p^p = s$. In the (NO) case, if $x \in L \setminus \{ 0 \}$ satisfies $\norm{0}{x} \geq \gamma \cdot s$ for some gap parameter $\gamma$, since every nonzero coordinate has absolute value $\geq 1$, we must have $\| x \|_p^p \geq \gamma\cdot s$.
\end{proof}

\begin{remark}
    We remark that the proof of \cref{thm:nsp} works over any field $\F$ provided one can construct a $(\rho,\alpha,n)$-coding gadget over $\F$
    for $\rho^2<\alpha$. We believe our approach can be extended to work for fields whose size grows with the input.
\end{remark}

\begin{remark}
    We can replace the variable $z$ in \Cref{subsec:mdpred} with the constant 1 to obtain CVP hardness for an explicit constant factor. The analysis in \Cref{sec:svp_analysis} still goes through. Note that in this way, we change the linear equations from homogeneous to non-homogeneous, and implicitly rule out the all-zeros solution.

    For larger gap, the same approach applies to the tensoring
    of the reduction in \Cref{subsec:mdpred}, wherein there is a
    distinguished variable $z'$ that we may set to $1$ while still
    preserving at least one solution in the completeness case (namely the tensoring of the completeness solution defined in  \Cref{sec:svp_analysis}, is a feasible solution of small sparsity for the non-homogenized tensored instance ).
    This allows $2^{\log^{1-\eps}n}$ gap assuming \NP is not contained in randomized quasipolynomial time.

    We remark that for CVP, we don't achieve the state of the art (deterministic) hardness factor of $n^{\Omega(1/\log \log n)}$ that is due to \cite{DKRS00}.
\end{remark}

\newcommand{\bases}{\mathrm{Bases}}
\section{Weak Local Density and Non-Overlap for Rademacher Kernel}
\label{random:properties}
Throughout this section, let $R_{h,N}$ denote an $h\times N$ matrix with i.i.d. $\pm 1$ (Rademacher) random entries. Wherever it is clear from context, we will drop the subscript and use $R$.

In this section we prove \cref{coding:gad}. The algorithm simply outputs $\ker(R),T$, where $T$ is a matrix with i.i.d. random Bernoulli entries of appropriately chosen bias.
The primary technical work to be done is proving that for appropriate values of $h,N$, $\ker(R)$ satisfies (Non-Overlap), moreover with a surprisingly strong coefficient of $2-\eps$. We do this in \cref{section:overlap}, with the key lemma (\Cref{opair:SBP}) on the {\em small ball probability} proved in \Cref{section:smallball}.
In \cref{summary} we stitch together the remaining properties of the coding gadget using a sharp phase transition result for the minimum distance of $\ker(R)$ due to \cite{FSSZ23} and a probabilistic Sauer-Shelah lemma proved in \cite{micciancio2001shortest}.

\subsection{Preliminaries and Overview}
In this section, we collect the necessary preliminaries for proving (Non-Overlap). En route we give a gentle overview of the proof, and also introduce the aforementioned off-the-shelf results.
We begin with a discussion of the minimum distance, which serves as a warmup for (Non-Overlap).

\medskip\noindent\textbf{Phase transition for the Boolean Slice.}
To provide intuition for the technical aspects of this section, we discuss how
$|\ker(R)\cap H_d^N|$ undergoes a sharp phase transition ($d$ being the parameter varying as a function of $h,N$), jumping from zero to exponential.

If $\xi_1,\ldots,\xi_d$ are i.i.d. Rademacher ($\pm 1$) random variables, we have by Stirling's approximation that $\Pr{\xi_1+\ldots +\xi_d = 0} \asymp 1/\sqrt{d}$.
Then for any $u\in H_d^N$, ~$\Pr{Ru=0}= \Theta(1/\sqrt{d})^{h}$.
On the other hand, $|H_d^N|={N\choose d}$.
Thus, we have $\Ex{|\ker(R)\cap H_d^N|}= {N\choose d}\cdot \Theta(1/\sqrt{d})^{h}$. It follows that for any fixed $\eps>0$, $\Ex{|\ker(R)\cap H_{(1-\eps)d}^N|}=N^{-\Omega(d)}$
and $\Ex{|\ker(R)\cap H_{(1+\eps)d}^N|}=N^{\Omega(d)}$ when
$h= \floor{\log_{\sqrt{d}}{N \choose d}}\sim d\log_{\sqrt{d}}(N/d)$. It is not difficult to prove a high probability version of this  statement. Indeed by union bound, $|\ker(R)\cap H_{(1-\eps)d}^N|=0$
w.h.p. Combining Chebyshev's inequality with a careful estimate on the variance, one can show that w.h.p. $|\ker(R)\cap H_{(1+\eps)d}^N|=N^{\Omega(d)}$:
\begin{lemma}[Boolean Weak Local Density of Rademacher Kernel~\cite{FSSZ23}]\footnote{see proof of Theorem 1.2 (2) in \cite{FSSZ23}}~\\
\label{scnd:mmnt}
Fix any  $\eps,\delta\in (0,1)$. For any $h\in \N$ sufficiently large, let $d=\ceil{\delta \cdot h}$. If $N$ is the largest integer satisfying $h\geq d\log_{\sqrt{d}}(N/d)$, then $|\ker(R_{h,N})\cap H^N_{(1+\eps)d}| \geq (N/2)^{\eps d}$ with probability $1-o(1)$.
\end{lemma}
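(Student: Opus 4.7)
The plan is to use the second moment method on the random variable $X := |\ker(R_{h,N})\cap H^N_k|$ with $k := (1+\eps)d$, showing that $\Ex{X}$ is large and that $\Ex{X^2}\leq (1+o(1))\Ex{X}^2$, and then invoking Chebyshev's inequality to conclude $X\geq \Ex{X}/2\geq (N/2)^{\eps d}$ with probability $1-o(1)$.

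For the first moment, I would note that for any fixed $u\in H^N_k$, each coordinate $(Ru)_i$ is a signed sum of $k$ independent Rademachers, giving $\Pr{(Ru)_i=0}=\binom{k}{\lfloor k/2 \rfloor}/2^k = \Theta(k^{-1/2})$ by Stirling's formula (assuming $k$ is even; otherwise one rounds $(1+\eps)d$ to the nearest even integer and absorbs the loss into $\eps$). Independence across the $h$ rows gives $\Pr{Ru=0}=\Theta(k^{-h/2})$ and hence $\Ex{X}=\binom{N}{k}\cdot \Theta(k^{-h/2})$. Plugging in the hypothesis $h=d\log_{\sqrt{d}}(N/d)$, which is exactly the equality $d^{h/2}=(N/d)^d$, combined with the elementary $\binom{N}{k}\geq (N/k)^k$, would yield $\Ex{X}\geq (N/d)^{\eps d - o(d)}$, a polynomial factor above the target $(N/2)^{\eps d}$.

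For the second moment, I would organize the sum $\Ex{X^2}=\sum_{u,v\in H^N_k}\Pr{Ru=Rv=0}$ by the overlap $t:=|\supp(u)\cap \supp(v)|$. For any single row of $R$, writing $A$ for the Rademacher sum over the shared coordinates (length $t$) and $B,C$ for independent Rademacher sums over the remaining coordinates of $u$ and $v$ (each of length $k-t$), a two-dimensional local central limit estimate gives
\begin{equation*}
\Pr{A+B=0,\ A+C=0}\;=\;\Theta\!\left(\frac{1}{\sqrt{k^2-t^2}}\right),
\end{equation*}
so that over the $h$ independent rows the per-pair ratio to the product $\Pr{Ru=0}\cdot \Pr{Rv=0}$ is $(k/(k-t))^{h/2}$ up to constants. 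Combining this with the hypergeometric bound on the number of $v$ at a given overlap would then give
\begin{equation*}
\frac{\Ex{X^2}}{\Ex{X}^2}\;\lesssim\;\sum_{t=0}^{k}\frac{\binom{k}{t}\binom{N-k}{k-t}}{\binom{N}{k}}\cdot\left(\frac{k}{k-t}\right)^{h/2}.
\end{equation*}
I would verify this sum is $1+o(1)$ by splitting into three regimes: the $t=0$ term is exactly $1$, the $t=k$ diagonal term equals $1/\Ex{X}=o(1)$, and the intermediate terms form a geometric-type series whose ratio is $o(1)$ under the tight calibration $h=d\log_{\sqrt{d}}(N/d)$. Chebyshev then completes the proof.

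The main obstacle is the intermediate regime of the variance sum: the per-row concentration factor $(k/(k-t))^{h/2}$ grows super-polynomially in $t$, while the combinatorial factor $\binom{k}{t}\binom{N-k}{k-t}/\binom{N}{k}\approx (ek^2/(tN))^t$ shrinks super-polynomially, and the hypothesis $h=d\log_{\sqrt{d}}(N/d)$ is tightly calibrated to make these two effects cancel with a small margin uniformly in $t$. Extracting this cancellation cleanly---especially as $t$ approaches $k$, where the local CLT estimate on the small-ball probability of $(A+B, A+C)$ has to be made uniform---is where most of the technical work lies, and is the content of the variance estimate carried out in \cite{FSSZ23}.
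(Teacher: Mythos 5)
Your overall plan—second moment on $X = |\ker(R)\cap H^N_k|$, computing $\Ex{X}$, bounding $\Ex{X^2}$ by conditioning on the overlap $t$ with a two-dimensional local CLT, and finishing with Chebyshev—is exactly the route the paper sketches in its ``Phase transition for the Boolean Slice'' discussion and then outsources to \cite{FSSZ23} (the paper never gives its own proof of this lemma, and you correctly attribute the hard intermediate-$t$ variance estimate to \cite{FSSZ23}).

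However, there is a concrete error in your first-moment step. You claim $\Ex{X}\geq (N/d)^{\eps d - o(d)}$ is ``a polynomial factor above the target $(N/2)^{\eps d}$''; the comparison is backwards. Under the calibration $h\approx 2d\log(N/d)/\log d$ one has $N \approx d^{1+1/(2\delta)}$, so $N/d \approx d^{1/(2\delta)}$ and
\[
\frac{(N/2)^{\eps d}}{(N/d)^{\eps d}} = (d/2)^{\eps d},
\]
which grows as $d^{\Theta(d)}$ and swamps any $(N/d)^{o(d)} = d^{o(d)}$ slack. A careful computation gives $\log\Ex{X} = \frac{\eps}{2\delta}\,d\log d + O(d)$ while $\log\bigl((N/2)^{\eps d}\bigr) = \eps\bigl(1+\frac{1}{2\delta}\bigr)d\log d + O(d)$, so in fact $\Ex{X}\ll (N/2)^{\eps d}$—a discrepancy that would already contradict the stated conclusion by Markov. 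What the second-moment method actually delivers is a bound of roughly $(N/d)^{\eps d}$ (i.e., $N^{\Omega(d)}$, matching the paper's informal preview sentence), which is what \cite{FSSZ23} is being cited for and which suffices for the downstream use in \Cref{coding:gad}. You should have caught this magnitude inversion: it flags either a misprint in the stated target or a gap in the proof-as-written.
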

\noindent For the rest of this overview, we fix $h$:
\begin{equation}
\label{overview:h}
    h = \floor{d\log_{\sqrt{d}}(N/d)} \ .
\end{equation}

\medskip\noindent\textbf{Phase transition for Sparse Vectors with Real Entries.}
Define $\dsp{d}{N}$ to be the set of unit vectors in $\R^N$ with sparsity $\leq d$:
\[
    \dsp{d}{N} := \{ u \in \R^{N} : \norm{2}{u}=1,~\norm{0}{u}\leq d \}
\]
$\dsp{d}{N}$ contains vectors such as $u=e_1+e_2$ or $u=e_1+e_2 + (e_3+\dots +e_d)/d$ (where $e_i$ denotes the $i$-th elementary vector) for which $Ru$ is close to the origin with a relatively high probability of $1/2^h$.

Despite this, it has been established in \cite{FSSZ23} that the avoidance threshold
for the Boolean $d$-slice is roughly equal to the avoidance threshold for all $d$-sparse real vectors:
\begin{theorem}[Minimum Distance for Random Rademacher Kernel~\cite{FSSZ23}]
\label{distance}
Fix any  $\eps,\delta\in (0,1)$. For any $h\in \N$ sufficiently large, let $d=\ceil{\delta \cdot h}$. If $N$ is the largest integer satisfying $h\geq d\log_{\sqrt{d}}(N/d)$, then $\dsp{(1-\eps)d}{N}\cap \ker(R_{h,N}) = \emptyset$ with probability $1-o(1)$.
\end{theorem}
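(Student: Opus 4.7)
The plan is to show, via a net argument combined with a Littlewood--Offord type small ball probability bound, that with high probability no unit vector with sparsity $\leq (1-\eps)d$ lies in $\ker(R_{h,N})$. First I would control the operator norm: by standard random matrix concentration, $\|R\|_{\mathrm{op}} = O(\sqrt{N})$ with high probability. Thus if $u \in \dsp{(1-\eps)d}{N} \cap \ker(R)$ and $u'$ is an $\eta$-net approximation to $u$ on the same support $S = \supp(u)$, then $\|Ru'\|_2 \leq \eta \cdot \|R\|_{\mathrm{op}} = O(\eta\sqrt{N})$. So it suffices to union bound, over all supports $S$ of size $s \leq (1-\eps)d$ and all net points $u'$ on the unit sphere in $\R^S$, the event that $\|Ru'\|_2 \leq O(\eta\sqrt{N})$.

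The core estimate is the row-wise small ball probability. Since the rows $r_1,\dots,r_h$ are independent, the probability that $\|Ru'\|_2$ is small factors as a product over rows. By the Rudelson--Vershynin small ball inequality, $\Pr[|\langle r_i, u' \rangle| \leq \tau] \lesssim (\tau + 1/\LCD(u'))/\sqrt{s}$, where $\LCD(u')$ is the least common denominator measuring the arithmetic structure of $u'$. Raising to the $h$-th power and comparing against the enumeration $\binom{N}{s}\cdot (O(1)/\eta)^s$ over supports and net points shows that the union bound closes, provided each net vector has sufficiently large $\LCD$; plugging in $h \sim d\log_{\sqrt{d}}(N/d)$ makes the bookkeeping match the Boolean threshold.

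To apply this uniformly, I would stratify $\dsp{(1-\eps)d}{N}$ by arithmetic structure: (i) compressibles, those close to a much sparser vector, handled by reducing to a smaller effective support size and iterating; (ii) incompressibles with large $\LCD$, for which Rudelson--Vershynin directly beats the net count; and (iii) incompressibles with small $\LCD$, which inhabit a thin region of the sphere and admit a much finer covering than a generic $\eta$-net. For each stratum the argument pairs the sharpest available small ball bound with the smallest available covering, in a multi-scale chaining-style union bound.

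The hard part will be the low-$\LCD$ incompressible regime: there the per-row small ball probability is still $\Theta(1/\sqrt{s})$, the same rate as for purely Boolean vectors, while $h$ is calibrated exactly to the Boolean threshold, so the pure small-ball estimate has no slack. The saving must come instead from the gap $s < (1-\eps)d$ together with the fact that low-$\LCD$ incompressible unit vectors are much scarcer on the sphere than generic ones, enabling a refined covering at each $\LCD$ scale. Making this quantitatively tight, in the spirit of the chaining arguments of \cite{rudelson2009smallest}, is where the real technical work lies.
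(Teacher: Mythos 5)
The paper does not actually prove this theorem: it is quoted from \cite{FSSZ23}, whose proof (as the text right after the statement explains) uses inverse Littlewood--Offord theory to count coefficient vectors by their small-ball probability and trades that count against the union-bound size. What the paper \emph{does} supply is a sketch of a weaker statement, valid only when $N \geq d^{C/\eps}$ (equivalently, $\delta$ small): rule out compressible kernel vectors via the width/RIP property (\cref{rademacher:width}, \cref{no:compressible}), apply classical Littlewood--Offord to incompressible sparse vectors, and close with a $1/\poly(d)$-net. Your plan tracks that sketch in spirit, but it has two genuine problems, one of which you flag yourself.

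The first is the net-error control. Bounding $\|R(u-u')\|_2$ by $\eta\cdot\|R\|_{\mathrm{op}} = O(\eta\sqrt{N})$ forces $\eta \lesssim 1/\sqrt{Nd}$ so that the error stays below the small-ball scale $\approx 1/\sqrt{d}$; the net then has size $\approx (Nd)^{s/2}\binom{N}{s}$, which is not beaten by the per-point small-ball bound $(d/N)^{(1-o(1))d}$ even in the easy regime. You must instead use that $u-u'$ is $s$-sparse: either the restricted singular value $\sigma_{\max}^{s}(R) = O(\sqrt{h}(1+\log(N/h)))$ from \cref{restr:msv}, or simply $\|R(u-u')\|_\infty \leq \sqrt{|S|}\,\eta$ (each row of $R$ has $\pm 1$ entries, so restricted to $S$ it has $\ell_2$-norm exactly $\sqrt{|S|}$). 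That yields $\eta \approx 1/\poly(d)$ and a $d^{O(d)}\binom{N}{s}$ net, which is what the paper's sketch needs.

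The second problem is the one you name in your last paragraph, and it is the whole ballgame. Even after fixing the net error, the naive union bound gives $N^{s - d + o(d)}\,d^{O(d)}$, which is $o(1)$ only when $N^{\eps} \gg d^{O(1)}$. But the theorem is stated for arbitrary $\delta \in (0,1)$, in which case $N \leq d^{1+O(1/\delta)}$ can be as small as $\Theta(d)$; then $h$ is calibrated exactly to the Boolean threshold and the naive count has no slack. Your proposed remedy — stratify by $\LCD$ level and show the low-$\LCD$ incompressible strata have attenuating covering numbers, following \cite{rudelson2009smallest} and Tao--Vu — is indeed the correct program and is what \cite{FSSZ23} carries out, but as written you describe the obstacle rather than overcome it; that execution is the entire technical content of the theorem. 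Relatedly, your treatment of compressible vectors by ``reducing to a smaller effective support and iterating'' is not what the paper does and is shaky on its own: the nearby sparse vector is generally \emph{not} in $\ker(R)$, so there is no clean induction. The paper's route via the width property (\cref{rademacher:width}) directly excludes compressibles from the kernel without any recursion.
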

The proof in \cite{FSSZ23} uses precise estimates from inverse Littlewood-Offord theory
that count the number of coefficient vectors $a\in \F_p^d$ such that the
small ball probability of $\sum_i a_i \xi_i$ is in a particular range. There is then a
careful tradeoff between the union bound size and the small ball probability.

We next sketch a simple proof of a weaker result that matches \cref{distance} when $N \geq d^{1/\eps}$. This proof
will serve as a prototype for our proof of the non-overlapping property for $\ker(R)$.
The key idea is that the RIP property rules out all vectors that have (relatively) high probability of lying in a small ball around the origin. The remaining vectors have low probability of being in a small-ball around the origin, and we may take a union bound over a sufficiently fine net.

\medskip\noindent\textbf{Ruling Out Compressible Vectors.}
We begin with the observation that $R$ satisfies the restricted isometry property (RIP) for restrictions of size $\succeq d/\log d$.
In particular this means that all vectors in $\ker(R)$ have sparsity
$\succeq d/\log d$. RIP implies something stronger, namely that all vectors in $\ker(R)$ have $\ell_1/\ell_2$ ratio $\succeq d/\log d$:
\begin{theorem}[Width Property for Rademacher Kernel~\cite{MP03,MPT07,BDDW08}]~\\
\label{rademacher:width}
    For any $h,N\in \N$, with probability at least $1-e^{c_0 h}$,
    every $u\in \ker(R_{h,N})$ satisfies
    \[
        \norm{1}{u}\geq c_0\sqrt{\frac{1+\log (N/h)}{h}}\cdot\norm{2}{u},
    \]
    where $c_0>0$ is a universal constant.
\end{theorem}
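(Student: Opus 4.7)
The plan is to follow the classical two-step strategy underlying \cite{BDDW08} and its antecedents: first show that a normalization of $R$ satisfies the Restricted Isometry Property (RIP) on sparse vectors, and then deduce the $\ell_1/\ell_2$ width bound on $\ker(R)$ from RIP via a standard shelling/peeling argument.

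For the RIP step, my target is to show that with probability at least $1 - e^{-c h}$, every $s$-sparse unit vector $u \in \R^N$ satisfies $\tfrac{1}{2} h \leq \norm{2}{Ru}^2 \leq \tfrac{3}{2} h$, for $s \asymp h/(1 + \log(N/h))$. For a fixed unit $u$, the quantity $\norm{2}{Ru}^2 = \sum_{i=1}^h \langle R_i, u \rangle^2$ is a sum of $h$ i.i.d.\ sub-Gaussian random variables each with mean $1$, so by standard Bernstein/Hanson-Wright concentration the failure probability is $\leq 2e^{-c_2 h}$. To upgrade to a uniform bound I would union-bound over the $\binom{N}{s} \leq (eN/s)^s$ choices of support and over a standard $(1/4)$-net of the unit sphere in each coordinate subspace (of size $\leq 12^s$). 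The combined net has cardinality at most $(CN/s)^s$, and the chosen value of $s$ is exactly what is needed for $s \log(N/s)$ to be absorbed into the exponential concentration budget $c_2 h$.

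For the width step, assume RIP holds and take any nonzero $u \in \ker(R)$. Sort the coordinates of $u$ by magnitude and partition them into successive blocks $S_0, S_1, S_2, \ldots$ of size $s$, with $S_0$ holding the largest entries. Since $R u_{S_0} = -\sum_{j \geq 1} R u_{S_j}$, combining the RIP lower bound on the $s$-sparse vector $u_{S_0}$ with the RIP upper bound on each $u_{S_j}$ yields
\[
    \tfrac{1}{\sqrt{2}}\sqrt{h}\,\norm{2}{u_{S_0}} \;\leq\; \sqrt{\tfrac{3h}{2}} \sum_{j\geq 1}\norm{2}{u_{S_j}} .
\]
The key shelling estimate is $\norm{2}{u_{S_j}} \leq \norm{1}{u_{S_{j-1}}}/\sqrt{s}$, which holds because every entry of $u_{S_j}$ is dominated by the average magnitude on $S_{j-1}$. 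Summing in $j$ telescopes to $\sum_{j\geq 1}\norm{2}{u_{S_j}} \leq \norm{1}{u}/\sqrt{s}$. Chaining these inequalities gives $\norm{2}{u_{S_0}} \lesssim \norm{1}{u}/\sqrt{s}$ and hence $\norm{2}{u} \leq \norm{2}{u_{S_0}} + \sum_{j\geq 1}\norm{2}{u_{S_j}} \lesssim \norm{1}{u}/\sqrt{s}$. Plugging in $s \asymp h/(1+\log(N/h))$ yields a lower bound on $\norm{1}{u}/\norm{2}{u}$ that comfortably implies the stated one.

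The only delicate point is balancing the sparsity level $s$ in the RIP step against the union-bound budget: one needs $s \log(N/s)$ to fit inside $c_2 h$, which forces $s \lesssim h/(1+\log(N/h))$, and this in turn fixes the order of the $\ell_1/\ell_2$ ratio produced by the shelling argument. Every other ingredient—sub-Gaussian concentration of Rademacher quadratic forms, standard volumetric net counts, and the telescoping across shells—is off-the-shelf high-dimensional probability, so I do not expect any genuine obstacle beyond this calibration.
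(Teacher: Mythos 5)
Your proposal is correct and follows exactly the two-stage argument underlying the paper's cited references \cite{MP03,MPT07,BDDW08}: sub-Gaussian concentration of $\norm{2}{Ru}^2$ plus a support-and-net union bound to get RIP at sparsity $s \asymp h/(1+\log(N/h))$, then the standard shelling/telescoping argument to convert RIP into the $\ell_1/\ell_2$ null-space property. The paper itself imports \Cref{rademacher:width} without proof, so there is no in-paper argument to compare against; your reconstruction is the expected one.

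One thing you should flag rather than absorb silently: your derivation gives $\norm{1}{u}\gtrsim\sqrt{h/(1+\log(N/h))}\cdot\norm{2}{u}$, and that is in fact the \emph{intended} statement. The theorem as printed has the fraction inside the square root inverted (and the failure probability should read $1-e^{-c_0 h}$, not $1-e^{c_0 h}$). This can be cross-checked against the proof of \Cref{no:compressible}, which applies \Cref{rademacher:width} to rule out vectors with $\norm{1}{u} < c_0\sqrt{d/\log d}\cdot\norm{2}{u}$ where $d = \ceil{\delta h}$; for the paper's choice of $N$ (satisfying $h \geq d\log_{\sqrt d}(N/d)$, hence $1+\log(N/h)\asymp \log d / \delta$), one has $\sqrt{h/(1+\log(N/h))} \asymp \sqrt{d/\log d}$, which matches, whereas the printed $\sqrt{(1+\log(N/h))/h}$ is far too small to be useful. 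So the place where you wrote that your bound ``comfortably implies the stated one'' is precisely where you should instead conclude that the stated bound is a typo and yours is the correct form.
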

The above result rules out vectors of sparsity $d$ that
have $1-\Omega(1/\sqrt{\log d})$ fraction of their $\ell_2$ mass concentrated in $\preceq d/\log d$ entries
(see \cref{no:compressible}).
Such vectors are referred to as \emph{compressible} vectors in the random matrix literature and
are defined as:
\[
    \comp^d_{\rho,\delta} := \Big\{ u\in \dsp{d}{N} ~|~ \exists \bT \subseteq \supp(u), |\bT| \leq \delta  d, \mbox{ s.t. }
    \| u_{\supp(u)\setminus\bT} \|_2 \leq \rho \Big\}
\]
i.e., the set of vectors of sparsity $d$ that are $\rho$-close to a
$\delta d$-sparse vector. As stated above, \cref{rademacher:width} implies that $\ker(R)$ avoids compressible vectors with parameters $\rho,\delta$
inverse logarithmic in $d$:
\begin{corollary}
\label{no:compressible}
    Let $\rho,\delta>0,d\in\N$ be such that
    $\delta \cdot d< (c_0^2/4) d/\log d$ and $\rho < (c_0/2)/\sqrt{\log d}$,
    where $c_0>0$ is the same constant as in \cref{rademacher:width}. Let $h,N\in \N$.
    Then with probability at least $1-e^{-c_0 h}$,
    $\ker(R_{h,N})\cap \comp^d_{\rho,\delta}=\emptyset$.
\end{corollary}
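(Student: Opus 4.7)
The plan is to proceed by contradiction, conditioning on the high-probability width event from \cref{rademacher:width}. Under this event, every $u \in \ker(R_{h,N})$ satisfies $\norm{1}{u} \ge c_0 \sqrt{(1+\log(N/h))/h}\cdot\norm{2}{u}$. Suppose some nonzero $u \in \ker(R_{h,N}) \cap \comp^d_{\rho,\delta}$ exists; by homogeneity of $\ker(R_{h,N})$, I may rescale so that $\norm{2}{u} = 1$. By definition of the compressible set, there is then a set $\bT \subseteq \supp(u)$ with $|\bT| \le \delta d$ and $\norm{2}{u_{\supp(u)\setminus\bT}} \le \rho$, and additionally $\norm{0}{u} \le d$.

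The crux is to derive a matching upper bound on $\norm{1}{u}$ that is incompatible with the width lower bound. Splitting $u = u_\bT + u_{\supp(u)\setminus\bT}$ and applying Cauchy-Schwarz to each summand (using $\norm{2}{u_\bT} \le \norm{2}{u} = 1$ and $|\supp(u)\setminus\bT| \le d$), I would obtain
\begin{align*}
    \norm{1}{u} \;\le\; \sqrt{|\bT|}\cdot\norm{2}{u_\bT} \;+\; \sqrt{|\supp(u)\setminus\bT|}\cdot\norm{2}{u_{\supp(u)\setminus\bT}} \;\le\; \sqrt{\delta d} \;+\; \sqrt{d}\cdot\rho.
\end{align*}
Plugging in the hypothesized parameter bounds $\delta d < (c_0^2/4)\,d/\log d$ and $\rho < c_0/(2\sqrt{\log d})$ then gives $\sqrt{\delta d} < (c_0/2)\sqrt{d/\log d}$ and $\sqrt{d}\cdot\rho < (c_0/2)\sqrt{d/\log d}$, so $\norm{1}{u} < c_0\sqrt{d/\log d}$, contradicting the width lower bound. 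The probability guarantee $1-e^{-c_0 h}$ is inherited directly from \cref{rademacher:width}, so no additional union bound is required.

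The main subtlety I would flag is in the second Cauchy-Schwarz application: it is crucial to exploit the overall sparsity constraint $\norm{0}{u}\le d$ in order to bound $|\supp(u)\setminus\bT| \le d$ rather than $N$. Without this, the tail contribution would weaken to $\sqrt{N}\cdot\rho$, which would force $\rho$ to be polynomially small in $h/N$ instead of the much milder logarithmic bound $\rho < c_0/(2\sqrt{\log d})$ in the hypothesis. Apart from this observation, the proof is a short deterministic calculation on top of \cref{rademacher:width}, with no further probabilistic content.
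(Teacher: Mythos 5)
Your proof is correct and is essentially identical to the paper's: both split $u = u_\bT + u_{\supp(u)\setminus\bT}$, apply Cauchy--Schwarz with the crucial bound $|\supp(u)\setminus\bT|\le d$ to get $\norm{1}{u} \le \sqrt{\delta d} + \sqrt{d}\,\rho < c_0\sqrt{d/\log d}$, and invoke \cref{rademacher:width} to rule out such $u\in\ker(R)$. The only cosmetic difference is that you phrase it as a contradiction while the paper states it directly; the decomposition, the exploitation of the $\norm{0}{u}\le d$ constraint, and the probability bookkeeping are the same.
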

\begin{proof}
    Consider any $u\in\comp^d_{\rho,\delta}$. Let $T$ be a subset of size $\delta\cdot d$ such that $\norm{2}{u_{\supp(u)\setminus T}}\leq \rho$.
    By the Cauchy-Schwarz inequality, we have
    \[
        \norm{1}{u}= \norm{1}{u_T}+\norm{1}{u_{\supp(u)\setminus T}} \leq
        \sqrt{\delta \cdot d}\norm{2}{u_T} + \sqrt{d}\norm{2}{u_{\supp(u)\setminus T}}
        < c_0 \sqrt{d/\log d}.
    \]
    By \cref{rademacher:width}, with probability $1-e^{-c_0h}$, no such vector $u$ can lie in $\ker(R)$.
\end{proof}
Let $d':= (1-\eps)d$. \cref{no:compressible} implies $\ker(R)\cap\comp^{d'}_{\rho,\delta}=\emptyset$ for $\rho^2,\delta\preceq 1/\log d$. It remains to rule out \emph{incompressible} $d'$-sparse vectors in $\ker(R)$ where
\[
    \incomp^{d'}_{\rho,\delta} := \dsp{d'}{N}\setminus \comp^{d'}_{\rho,\delta} \ .
\]
For $\rho^2,\delta\preceq 1/\log d$, incompressible vectors $u\in \incomp^{d'}_{\rho,\delta}$ have $\succeq d/\log d$ entries of magnitude
$\succeq 1/\sqrt{d\log d}$.
By the classical Littlewood-Offord inequality~\cite{LO39}, any such vector $u$ has low small-ball probability, i.e.,  $\Pr{|\sum_i u_i\xi_i|\leq \log^{O(1)}d/\sqrt{d}}\leq \log^{O(1)}d/\sqrt{d}$.
It follows that $\Pr{\norm{\infty}{Ru}\leq \log^{O(1)}d/\sqrt{d}}\leq \log^{O(h)}d/d^{h/2}$. To rule out all such vectors from being in $\ker(R)$, it suffices to take a union bound of the bad event $\norm{\infty}{Ru}\leq \log^{O(1)}d/\sqrt{d}$ over a $1/d^2$-net (in the $\ell_2$ metric) of $\dsp{d'}{N}$.
Such a net has size $d^{O(d)}\cdot {N\choose d'}$, and thus $\log^{O(h)}d/d^{h/2}$
is a sufficiently small probability for the union bound to succeed, provided $N\geq d^{C/\eps}$ for a sufficiently large constant $C$.
\bigskip

\noindent
\textbf{Nets.}
It is convenient to briefly define and discuss $t$-nets here.
\begin{definition}
    Let $T$ be a metric space with distance $D$ and let $E \subset T$. $\cN \subseteq T$ is called an $t$-net of $E$ if for every $x \in E$ there exists $y \in \cN$ such that $D(x, y) \leq t$.
\end{definition}
We use a simple volumetric estimate on the net size of the sphere. We will also need a means of passing from a net of a set $E$
to a net that lies within $E$. Both facts are standard and their proofs can be found for instance in \cite{Vershynin17}.
\begin{fact}
	There is a $t$-net (in $\ell_2$ norm) within $\SSS^{d-1}$, of size at most $(\frac{6}{t})^d$.
\label{volume:net}
\end{fact}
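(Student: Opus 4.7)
The plan is to take $\cN$ to be a maximal $t$-separated subset of $\SSS^{d-1}$, i.e., a subset $\cN \subseteq \SSS^{d-1}$ satisfying $\|x - y\|_2 > t$ for all distinct $x, y \in \cN$, and maximal with respect to this property. Such a set exists by a greedy construction (or Zorn's lemma, though compactness of the sphere makes this unnecessary), and I expect this to be the right choice because a maximal packing gives both the net property and a good size bound essentially for free.

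First I would verify that any such maximal $t$-separated $\cN$ is automatically a $t$-net. Indeed, if some $x \in \SSS^{d-1}$ had $\|x - y\|_2 > t$ for every $y \in \cN$, then $\cN \cup \{x\}$ would still be $t$-separated, contradicting maximality. Hence every $x \in \SSS^{d-1}$ has some $y \in \cN$ within distance $t$.

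The main step is to bound $|\cN|$ via a standard volumetric packing argument in $\R^d$. Since distinct points of $\cN$ lie at Euclidean distance strictly greater than $t$, the open balls $B(x, t/2)$ for $x \in \cN$ are pairwise disjoint. Moreover, since each $x \in \cN$ satisfies $\|x\|_2 = 1$, each such ball is contained in $B(0, 1 + t/2)$. Comparing $d$-dimensional Lebesgue volumes gives
\[
    |\cN| \cdot (t/2)^d \;\le\; (1 + t/2)^d,
\]
i.e., $|\cN| \le (1 + 2/t)^d$. For $t \le 2$ this is bounded by $(3/t)^d \le (6/t)^d$, which is the regime of interest; for larger $t$ a single-point net suffices because $\SSS^{d-1}$ has diameter $2$. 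There is no real obstacle here---the argument is entirely classical---so the only care needed is in tracking constants to match the stated $(6/t)^d$ bound, where the slack absorbs the cheap $1 + 2/t \le 3/t$ estimate used above.
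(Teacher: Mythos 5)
Your proof is correct and is the standard volumetric packing argument; the paper does not prove this fact itself but simply cites Vershynin's textbook, which contains essentially this exact argument. One harmless slip: the intermediate inequality $1 + 2/t \le 3/t$ actually requires $t \le 1$ rather than $t \le 2$, but the final bound $(1 + 2/t)^d \le (6/t)^d$ holds directly for all $t \le 4$, so the stated conclusion is unaffected.
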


\begin{fact}
\label{net:within}
 Let $T$ be a metric space and let $E \subset T$.
 Let $\cN \subset T$ be a $t$-net of the set $E$.
 Then there exists a $(2t)$-net $\cN'$ of $E$
 whose cardinality does not exceed that of $\cN$,
 and such that $\cN' \subset E$.
\end{fact}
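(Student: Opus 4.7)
The plan is to construct $\cN'$ from $\cN$ by projecting each element of $\cN$ that sits within distance $t$ of $E$ to a nearby point inside $E$, discarding the rest, and then to verify via the triangle inequality that the resulting set still covers $E$ at the slightly enlarged radius $2t$.

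Concretely, first I would set $\cN_0 := \{y \in \cN : \text{there exists } z \in E \text{ with } D(y, z) \leq t\}$, and for each $y \in \cN_0$ pick (arbitrarily) some $\pi(y) \in E$ with $D(y, \pi(y)) \leq t$. Defining $\cN' := \{\pi(y) : y \in \cN_0\}$ immediately yields $\cN' \subseteq E$ by construction and the cardinality bound $|\cN'| \leq |\cN_0| \leq |\cN|$, so the two structural guarantees are automatic.

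To check that $\cN'$ is a $(2t)$-net of $E$, I would fix an arbitrary $x \in E$. Since $\cN$ is a $t$-net of $E$, there exists $y \in \cN$ with $D(x, y) \leq t$; crucially, the element $x$ itself then witnesses that $y \in \cN_0$, so $\pi(y)$ is defined and lies in $\cN'$. The triangle inequality then gives
\[ D(x, \pi(y)) \leq D(x, y) + D(y, \pi(y)) \leq t + t = 2t, \]
which is exactly the required covering property.

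There is no real obstacle in the argument — the entire proof is a single application of the triangle inequality, and the factor of $2$ in the covering radius is the sole price paid for forcing $\cN' \subset E$. The only bookkeeping subtlety is to invoke the selection rule $\pi$ only on members of $\cN_0$, where by definition a valid choice exists; beyond this, no quantitative estimate or geometric property of $T$ is used.
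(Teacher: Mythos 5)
Your proof is correct and is the standard argument; the paper itself gives no proof of this fact, deferring to a reference (Vershynin's textbook), and the argument there is essentially identical to yours — project each net point to a nearby point of $E$ and invoke the triangle inequality to absorb the extra $t$.
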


\smallskip\noindent
\textbf{Restricted Maximum Singular Value.}
Returning to the discussion of minimum distance of a random Rademacher kernel, we bound the error incurred in $\norm{\infty}{Ru}$ in passing from $u\in \dsp{d'}{N}$ to the net point closest to it, in terms of the maximum singular value of $R$ restricted to the columns in $\supp(u)$.
Since we require such an estimate for all size-$k$ submatrices, we will use an estimate on
$\sigma^{d'}_{max}(R) := \max_{|T| \leq d', |T| \subseteq [N]} \sigma_{max} (R_T)$.
We have
\begin{proposition}[Restricted Maximum Singular Value~\cite{Kashin97}; see also \cite{KT07}]~\\
\label{restr:msv}
    There is a constant $C_0>1$, such that for any $d'\leq h$, we have $\sigma_{max}^{d'}(R_{h,N}) \leq C_0 \sqrt{h}(1+\log (N/h))$ with probability $1-o(1)$.
\end{proposition}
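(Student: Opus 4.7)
The plan is a standard covering-number argument: for a fixed $d'$-subset $T \subseteq [N]$ bound $\sigma_{\max}(R_T)$ via a net on the unit sphere $\SSS^{d'-1}$, and then take a union bound over all $\binom{N}{d'}$ choices of $T$. The generous $(1+\log(N/h))$ factor in the target (rather than the sharper $\sqrt{1+\log(N/h)}$ appearing in Kashin-type statements) gives plenty of slack to absorb both covering losses simultaneously for every $d' \le h$.

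Concretely, for a fixed $T$ I would pick a $(1/4)$-net $\cN \subseteq \SSS^{d'-1}$ of size at most $9^{d'}$ via \cref{volume:net}, so that $\sigma_{\max}(R_T) \le 2 \max_{v \in \cN} \|R_T v\|_2$. For a fixed $v \in \SSS^{d'-1}$, the entries of $R_T v$ are i.i.d.\ weighted Rademacher sums $\sum_{j \in T} \xi_{ij} v_j$, each sub-Gaussian with parameter one; hence $\|R_T v\|_2^2$ is a sum of $h$ i.i.d.\ mean-one sub-exponential variables and Bernstein's inequality yields $\Pr{\|R_T v\|_2 \ge K\sqrt{h}} \le 2 e^{-c K^2 h}$ whenever $K$ exceeds some absolute constant. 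Unioning over the net and then over all $\binom{N}{d'} \le (eN/d')^{d'}$ subsets gives
\[
\Pr{\sigma_{\max}^{d'}(R) \ge 2 K\sqrt{h}} \;\le\; 2 \exp\!\Bigl(d'\log(9eN/d') - c K^2 h\Bigr).
\]

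I would finish by setting $K := (C_0/2)(1+\log(N/h))$ for a sufficiently large absolute constant $C_0$. Writing $\log(N/d') = \log(N/h) + \log(h/d')$ and using that $s \mapsto (\log s)/s$ attains its maximum $1/e$ at $s = e$, we get $d' \log(h/d') \le h/e$, hence $d'\log(9eN/d') \le h\log(N/h) + O(h)$ for every $d' \le h$. This is dominated by $c K^2 h = \Theta\bigl(h(1+\log(N/h))^2\bigr)$ once $C_0$ is large, so the failure probability is not just $o(1)$ but exponentially small in $h$. The main ``obstacle'' is really just bookkeeping in this last step: verifying that a single $K = \Theta(1+\log(N/h))$ simultaneously handles the entire range $1 \le d' \le h$ (small $d'$ has small binomial coefficient but large $\log(N/d')$, and vice versa); the uniform bound $d'\log(h/d') \le h/e$ is what makes this work cleanly.
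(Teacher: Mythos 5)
Your proof is correct. Note first that the paper does not actually prove \cref{restr:msv}: it is stated as a citation to \cite{Kashin97} and \cite{KT07}, so there is no in-paper argument to compare against. Your covering-number-plus-Bernstein derivation is a complete, self-contained justification, and the key uniformity step --- $d'\log(h/d') \le h/e$ for all $1\le d'\le h$, so that a single $K=\Theta(1+\log(N/h))$ controls the entropy term $d'\log(9eN/d')$ across the whole range of $d'$ --- is exactly right. The looseness you point out is real: the sharp Kashin-type estimate is of order $\sqrt{h}+\sqrt{d'\log(eN/d')}=O\bigl(\sqrt{h(1+\log(N/h))}\bigr)$, whereas the proposition's $\sqrt{h}\,(1+\log(N/h))$ spends an extra $\sqrt{1+\log(N/h)}$ factor, and your argument exploits precisely this slack to get away with a single naive net-plus-union-bound rather than a chaining argument. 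Two cosmetic remarks, neither affecting correctness: a $(1/4)$-net gives $\sigma_{\max}(R_T)\le \tfrac{4}{3}\max_{v\in\cN}\|R_T v\|_2$, slightly better than the factor $2$ you quote; and $\sigma_{\max}^{d'}$ is defined as a max over subsets of size $\le d'$, but by monotonicity of operator norm under taking submatrices it suffices, as you implicitly do, to union-bound over subsets of size exactly $d'$.
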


\subsection{Non-Overlap for Random Rademacher Kernel}
\label{section:overlap}
In this section we establish the somewhat surprising fact that any pair of linearly independent vectors in $\ker(R)$ of near-minimum sparsity overlap in a negligible fraction of their support. We deduce this as a consequence of the fact that the second order Hamming weight of $\ker(R)$ is nearly
twice the minimum distance:
\begin{definition}[Higher Order Hamming Weight]\label{def:overlap}~\\
    For $\ell\in \N$, the order-$\ell$ Hamming weight of a subspace $C\subseteq \F^N$ (denoted as $d_\ell(C)$) is the smallest joint support size
    $|\sigma^{u_1}\cup \dots \cup\sigma^{u_\ell}|$
    for any collection of linearly independent vectors $u_1,\dots ,u_\ell\in C$.
\end{definition}
\noindent
We will prove $d_2(\ker(R))\geq 2(1-\eps)d$. We first sketch the proof, collecting necessary preliminaries enroute.
\medskip

\noindent
\textbf{Non-Overlap for the Boolean Slice.}
Similar to the minimum distance, the case of Boolean vectors is instructive to appreciate the quantitative aspects of the non-overlapping property.

Let $d_2:=2(1-\eps)d$. No two linearly independent vectors $u,v\in H_{\leq d_2}^N\cap\ker(R)$ can overlap in all but $\preceq d/\log d$ coordinates, since in that case $u-v\in\ker(R)$ has sparsity $\preceq d/\log d$ which would contradict \cref{no:compressible}.

So, it remains to consider the case where the symmetric difference of $\supp(u),\supp(v)$ is of size $\succeq d/\log d$. In the extreme case where they have disjoint support, by independence we have
\[
    \textstyle \Pr{\sum_i \xi_i \cdot u(i)=\sum_i \xi_i \cdot v(i) = 0}=\Pr{\sum_{i\in\supp(u)} \xi_i =0}\cdot \Pr{\sum_{i\in\supp(v)} \xi_i = 0}
    \preceq 1/d.
\]
Thus $\Pr{Ru=Rv=0}\leq O(1/d)^h$. A similar bound is possible for the
more general case where the symmetric difference $\supp(u) \triangle \supp(v)$ has size $\succeq d/\log d$. Indeed we have
\begin{align*}
    &\Pr{\textstyle\sum_i \xi_i \cdot u(i)=\sum_i \xi_i \cdot v(i) = 0}\\
    &=
    \sum_{t=-d_2}^{d_2}\mathbb{P}\big[{\sum_{i\in \supp(u)\setminus \supp(v)} \xi_i =-t ~~~~\wedge \sum_{i\in \supp(v)\setminus \supp(u)} \xi_i =-t}\big]
    \cdot \mathbb{P}\big[\sum_{i\in \supp(u)\cap \supp(v)} \xi_i=t\big]\\
    &=
    \sum_{t=-d_2}^{d_2}\mathbb{P}\big[\sum_{i\in \supp(u)\setminus \supp(v)} \xi_i =-t\big] \cdot \mathbb{P}\big[\sum_{i\in \supp(v)\setminus \supp(u)} \xi_i =-t\big]
    \cdot \mathbb{P}\big[\sum_{i\in \supp(u)\cap \supp(v)} \xi_i=t \big]\\
    &\preceq
    \sum_{t=-d_2}^{d_2}\frac{1}{\sqrt{|\supp(u)\setminus \supp(v)|}\cdot\sqrt{|\supp(v)\setminus \supp(u)|}}
    \cdot \mathbb{P}\big[\sum_{i\in \supp(u)\cap \supp(v)} \xi_i=t \big]
    \quad \preceq \frac{\log d}{d}.
\end{align*}
where the second to last step can be shown using (say) Stirling's approximation
for binomial coefficients. So we have $\Pr{u,v\in\ker(R)}
\leq \log^{O(h)}d/d^h$.
There are at most ${N\choose d_2}2^{O(d)}$ choices of pairs $u,v\in H_{\leq d_2}^N$ such that $|\supp(u)\cup \supp(v)|=d_2$. Thus we may take union bound
and obtain that w.h.p., any pair of boolean vectors $u,v\in \ker(R)$
must satisfy $|\supp(u)\cup \supp(v)|> d_2$.

\medskip\noindent\textbf{Non-Overlap for Vectors with Real Entries.}
We refine the approach for the Boolean slice so that it applies to vectors with real entries.
This time we will begin with the observation that the span of two linearly independent vectors $u,v\in S_{d_2}^N\cap\ker(R)$
cannot contain a compressible vector $w\in \comp^{d_2}_{\rho,\delta}$ for $\rho^2,\delta\preceq 1/\log d$, since that would contradict \cref{no:compressible}.

It then remains to consider the case of linearly independent pairs
$u,v\in S_{d_2}^N\cap\ker(R)$ that do not contain a compressible vector in their span---henceforth referred to as incompressible pairs. We will organize such incompressible pairs according to the subspace they span and identify them with an orthonormal basis. For $d_2\leq N$ we define
\begin{align*}
    \bases_{\rho,\delta}^{d_2} &:= \big\{ (u_1,u_2)\in \SSS^{N-1}\times \SSS^{N-1} ~~\big|~~ u_1,u_2\text{ orthonormal},\\
    &\qquad\quad|\supp(u_1)\cup \supp(u_2)| \leq d_2, ~ \mathrm{Span}(\{u_1,u_2\})\cap \comp^{d_2}_{\rho,\delta} = \emptyset \big\}\mper
\end{align*}

We use a powerful anticoncentration result of \cite{rudelson2009smallest} to
deduce that the small ball probability of an incompressible pair behaves
similarly to the independent case:
\begin{restatable}{corollary}{opairSBP}(Joint Small Ball Probability for Basis Elements of an Incompressible Subspace)~\\
\label{opair:SBP}
     Fix any $\rho,\delta\in (0,1)$. For any $d\in \N$ sufficiently large, any $h,N\in \N$,
     and any $(u_1,u_2)\in \bases^{2d}_{\rho,\delta}$,
     \[
         \Pr{\norm{\infty}{(R_{h,N})u_1},\norm{\infty}{(R_{h,N})u_2} \leq 1/\sqrt{\delta d} } \leq C_1^h (\rho^4\delta d)^{-h}\mper
     \]
     where $C_1>1$ is a universal constant.
\end{restatable}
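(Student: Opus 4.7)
The plan is to split the argument into three stages: tensorize over the $h$ independent rows of $R$, reformulate the per-row event as a $2$-dimensional small ball problem on the vectors $v_i := (u_1(i), u_2(i)) \in \R^2$, and then invoke the Rudelson--Vershynin vector-valued anticoncentration machinery under the quantitative non-degeneracy supplied by the incompressibility hypothesis $(u_1,u_2) \in \bases^{2d}_{\rho,\delta}$.

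First I would use the fact that the rows $\xi^{(1)},\ldots,\xi^{(h)}$ of $R_{h,N}$ are i.i.d. Rademacher vectors and that the event $\{\norm{\infty}{Ru_1},\norm{\infty}{Ru_2}\le t\}$ factors as an intersection of row-wise events. Writing $t := 1/\sqrt{\delta d}$, this gives
\[
\Pr{\norm{\infty}{Ru_1},\norm{\infty}{Ru_2}\le t}
\;=\;
\Big(\Pr{|\langle \xi,u_1\rangle|\le t,\,|\langle \xi,u_2\rangle|\le t}\Big)^h
\]
for a single Rademacher row $\xi \in \{\pm1\}^N$, so it suffices to prove a single-row bound of $C_1/(\rho^4\delta d)$.

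Next I would recast the per-row probability as a $2$-dimensional small ball problem. Setting $v_i := (u_1(i),u_2(i))\in\R^2$, the random vector $S := \sum_i \xi_i v_i \in \R^2$ has coordinates $(\langle\xi,u_1\rangle,\langle\xi,u_2\rangle)$, so the per-row event is exactly $\{\|S\|_\infty \le t\}$. The incompressibility hypothesis means that for \emph{every} unit $\alpha \in \SSS^1$, the vector $w_\alpha := \alpha_1 u_1 + \alpha_2 u_2$ satisfies $w_\alpha \notin \comp^{2d}_{\rho,\delta}$; translating this to the $v_i$'s, for every direction $\alpha$ the one-dimensional sequence $(\langle v_i,\alpha\rangle)_{i\in[N]}$ is $(\rho,\delta)$-incompressible in the sense of Rudelson--Vershynin (at least $(1-\delta)\cdot 2d$ of the projections have magnitude $\gtrsim \rho/\sqrt{d}$). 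This directional spread is the key structural input.

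Finally I would apply the Rudelson--Vershynin small ball probability bound \cite{rudelson2009smallest} to the vector-valued sum $S$. Their estimate, combined with the directional incompressibility just established, bounds the Rademacher 2D small ball probability at scale $t$ by roughly $t^2/(\rho^4 \delta d \cdot t^2)$-type quantities, yielding the per-row bound $C_1/(\rho^4\delta d)$ at $t=1/\sqrt{\delta d}$; raising to the $h$-th power closes the argument. The main obstacle is the quantitative translation in stage three: one must convert the geometric condition $\mathrm{Span}(u_1,u_2)\cap \comp^{2d}_{\rho,\delta}=\emptyset$ (a uniform statement over all directions $\alpha$) into the RV-style hypothesis on the $v_i$'s in a way that tracks both factors of $\rho^2$ (one per coordinate direction) and delivers the exact $\rho^4$ dependence appearing in the claim, rather than a weaker $\rho^2$ from a naive product of one-dimensional Littlewood--Offord bounds. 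I would handle this by choosing two well-separated test directions $\alpha^{(1)},\alpha^{(2)}\in\SSS^1$ (e.g.\ the principal axes supplied by the orthonormality of $(u_1,u_2)$) and invoking the RV bound separately in each, then combining via the joint anticoncentration statement rather than multiplying marginals.
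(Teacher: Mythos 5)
Your first two stages are exactly the paper's: factorize over the $h$ independent rows, then view the per-row event as a $2$-dimensional small ball problem for $S=\sum_i \xi_i v_i$ with $v_i=(u_1(i),u_2(i))$. The divergence, and the gap, is in stage three: you wave at ``the Rudelson--Vershynin vector-valued anticoncentration machinery'' without specifying which result and which quantitative hypothesis, and the concrete fix you propose (``choose two well-separated test directions $\alpha^{(1)},\alpha^{(2)}$, invoke the RV bound separately in each, then combine via the joint anticoncentration statement'') is circular --- the joint anticoncentration statement \emph{is} what you are trying to prove --- and it is not how the estimate is actually obtained.

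What is missing is the LCD (essential least common denominator) framework, which is the mechanism that converts the geometric hypothesis $\mathrm{Span}(u_1,u_2)\cap\comp^{2d}_{\rho,\delta}=\emptyset$ into a usable quantitative input. The paper's proof proceeds as follows: (i) \Cref{incomp:LCD} says that every $a\in\incomp^{2d}_{\rho,\delta}$ has $\LCD_{\al,\gamma}(a)>\sqrt{\delta d/2}$ for any $\gamma<\rho^2\sqrt{\delta}/2$; since by definition of $\bases^{2d}_{\rho,\delta}$ every unit vector in $\spn\{u_1,u_2\}$ is incompressible, this directly lower-bounds the \emph{subspace} LCD $\LCD_{\al,\gamma_0}(\spn\{u_1,u_2\})\gtrsim\sqrt{\delta d}$ with $\gamma_0:=\rho^2\sqrt{\delta}/3$. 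The infimum over directions in the LCD-of-a-subspace definition is precisely the ``uniform over all $\alpha$'' translation you were looking for. (ii) RV's Theorem 3.3 (stated here as \Cref{SBP}) bounds the genuinely two-dimensional small ball probability $\Pr{\|S\|_2\le t\sqrt{2}}$ by $Ct^2/\gamma^2 + Ce^{-\al^2}$ once $t\gtrsim 1/\LCD_{\al,\gamma}(\spn\{u_1,u_2\})$. Taking $t=1/\sqrt{\delta d}$ (legal by step (i)), $\gamma=\gamma_0$ and $\al=\log d$, and using $\max_i|\langle\xi,u_i\rangle|\ge\big(\sum_i\langle\xi,u_i\rangle^2/2\big)^{1/2}$, gives the per-row bound $O(1/(\rho^4\delta d))$. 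Note that the $\rho^4$ does not come from ``two factors of $\rho^2$, one per coordinate direction'' as you suggest: it comes from the single factor $\gamma_0^2=\rho^4\delta/9$ in the denominator of the $2$D RV estimate. Also a small correction: incompressibility of $w$ gives $\Omega(\delta d)$ coordinates of magnitude between $\rho/\sqrt{2d}$ and $1/\sqrt{\delta d}$ (a constant \emph{fraction $\delta$}, not $(1-\delta)$), but this spread lemma is already baked into \Cref{incomp:LCD} and you never need it in raw form.
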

We defer the proof of this corollary to the next section as it requires some new notions.

Equipped with our joint small ball estimate, we may then take a union bound
over a sufficiently fine net of $\bases_{\rho,\delta}^{d_2}$
to establish the bounded overlap property. We have assembled all ingredients required to give the proof:

\begin{theorem}[Non-Overlap for a Random Rademacher Kernel]~\\
\label{overlap}
There is a universal constant $c_2>0$ such that the following holds. Fix any $\eps\in (0,1),~\delta\in (0,c_2\eps]$,
let $h\in \N$ be sufficiently large, and let $d:=\ceil{\delta\cdot h}$.
If $N$ is the largest integer satisfying $h \geq d \log_{\sqrt{d}} (N/d)$, then with probability $1-o(1)$,
$d_2(\ker(R_{h,N}))\geq 2(1-\eps)d$.
\end{theorem}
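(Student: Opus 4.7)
The plan is to argue by contradiction: suppose there exist linearly independent $u,v\in \ker(R)$ with $|\supp(u)\cup \supp(v)| \le d_2 := 2(1-\epsilon)d$, and derive a contradiction via a net argument over the Stiefel-like set $\bases^{d_2}_{\rho_c,\delta_c}$. Since $\ker(R)$ is a subspace, every element of $W := \mathrm{Span}(u,v)$ lies in $\ker(R)$. Fix compressibility parameters $\rho_c = \Theta(1/\sqrt{\log d})$ and $\delta_c = \Theta(1/\log d)$ meeting the hypotheses of \cref{no:compressible}. If there is a nonzero $w \in W \cap \comp^{d_2}_{\rho_c,\delta_c}$, then $w \in \ker(R)\cap \comp^{d_2}_{\rho_c,\delta_c}$, directly contradicting \cref{no:compressible}. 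Otherwise $W$ is incompressible, so any orthonormal basis $(u_1, u_2)$ of $W$ lies in $\bases^{d_2}_{\rho_c,\delta_c}$ and satisfies $Ru_1 = Ru_2 = 0$.

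For each $S\subseteq [N]$ with $|S| \le d_2$, the set of orthonormal pairs supported on $S$ sits inside $\SSS^{|S|-1}\times \SSS^{|S|-1}$ and admits a $t$-net of size at most $(C/t)^{2d_2}$ via \cref{volume:net}; by \cref{net:within} one may assume this net lies within $\bases^{d_2}_{\rho_c,\delta_c}$ at the cost of doubling $t$. By \cref{restr:msv}, with probability $1-o(1)$ we have $\sigma^{d_2}_{\max}(R) \le C_0 \sqrt{h}(1+\log(N/h))$; fix $t$ so that $2t \cdot C_0 \sqrt{h}(1+\log(N/h)) \le 1/\sqrt{\delta_c d}$, which forces $t^{-1} = \poly(h,\log(N/h))$. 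For the true pair $(u_1,u_2)$ and its chosen net point $(u_1',u_2')$, since $Ru_i = 0$ we get
\[
\|Ru_i'\|_\infty = \|R(u_i' - u_i)\|_\infty \le \sigma^{d_2}_{\max}(R)\cdot \|u_i - u_i'\|_2 \le 1/\sqrt{\delta_c d},
\]
so \cref{opair:SBP} applies to the net point.

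Union-bounding over the choice of $S$ and the net within $\bases^{d_2}_{\rho_c,\delta_c}$, the probability that an incompressible pair survives in $\ker(R)$ is at most
\[
\binom{N}{d_2}\cdot (C/t)^{2d_2} \cdot C_1^h\,(\rho_c^4 \delta_c d)^{-h}.
\]
Using the defining inequality $h \ge d \log_{\sqrt{d}}(N/d)$, which yields $(N/d)^{d_2} \le d^{(1-\epsilon)h}$, and $\rho_c^4 \delta_c d = d/\polylog(d)$, the logarithm of this bound simplifies to $-\epsilon h\log d + O(d\log h) + o(h\log d)$, which is $-\Omega(\epsilon h\log d)$ whenever $\delta := d/h \le c_2\epsilon$ for a sufficiently small absolute constant $c_2 > 0$. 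Together with the compressibility dichotomy above, this yields $d_2(\ker(R)) \ge 2(1-\epsilon)d$ with probability $1-o(1)$.

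The main obstacle I expect is the tight parameter balancing in the final union bound. The per-point estimate from \cref{opair:SBP} gains only $\sim d/\polylog(d)$ per row of $R$ (a polylog is unavoidably lost to the incompressibility slack), while the choice of joint support alone already contributes $(N/d)^{d_2} \approx d^{(1-\epsilon)h}$. This leaves only about $\epsilon h\log d$ of slack to cover the Stiefel net of size $(C/t)^{2d_2}$, the singular-value perturbation, and the passage to a net within $\bases^{d_2}_{\rho_c,\delta_c}$, each of which contributes $O(d\log h)$ to the exponent. Making these lower-order terms fit inside $\epsilon h\log d$ is exactly what forces the constraint $\delta \le c_2\epsilon$ in the statement.
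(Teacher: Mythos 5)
Your proof is correct and follows essentially the same route as the paper: the compressibility dichotomy via Corollary~\ref{no:compressible}, a support-enumeration plus sphere net of $\bases^{d_2}$ via Facts~\ref{volume:net} and~\ref{net:within}, the small-ball estimate of Corollary~\ref{opair:SBP} applied at net points, the restricted singular value bound of Proposition~\ref{restr:msv} to control the perturbation, and the same final union-bound arithmetic pinning down $\delta \le c_2\eps$. The only difference is a harmless reorganization of the net argument — you argue directly from $Ru_i=0$ that the net point has $\|Ru_i'\|_\infty \le 1/\sqrt{\delta_c d}$, whereas the paper phrases it as lower-bounding $\inf_{\bases}\max_i\|R\tilde u_i\|_\infty$ by the net minimum minus a perturbation term.
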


\begin{proof}
Let $d_2:= 2(1-\eps)d$ and let $R$ denote $R_{h,N}$.
By \cref{no:compressible}, there is a universal constant $c_0>0$
such that $\comp^{d_2}_{\rho_0,\delta_0}\cap \ker(R)=\emptyset$ with probability $1-o(1)$, where $\rho_0:= c_0/\sqrt{\log d},~\delta_0:=c_0/\log d$.

For any linearly independent vectors $\tilu_1,\tilu_2\in \ker(R)$ satisfying $|\textstyle \supp(\tilu_1)\cup \supp(\tilu_2)|\leq d_2$,
it holds that any orthonormal basis $u_1,u_2$ of $\spn\{\tilu_1,\tilu_2\}$ satisfies $u_1,u_2\in \ker(R)$ and $|\textstyle \supp(u_1)\cup \supp(u_2)|\leq d_2$. Thus the claim follows if we show that with probability $1-o(1)$, for any orthonormal pair $u_1,u_2\in\dsp{d_2}{N}$ one of $u_1,u_2$
does not lie inside $\ker(R)$.
Since we showed above that $\ker(R)$ avoids compressible vectors with high probability, we may assume that $\spn\{u_1,u_2\}$ is incompressible.
So we need only show that $\bases^{d_2}_{\rho_0,\delta_0}\cap (\ker(R)\times \ker(R)) \neq\emptyset$ with probability $o(1)$, which we do by combining \cref{opair:SBP} with a union bound over a sufficiently fine net.

Let $\calO\subseteq \bases^{d_2}_{\rho_0,\delta_0}$ be a minimum size $1/d^{2}$-net of $\bases$ according to the norm $\|(u_1,u_2)\| := \max\{\|u_1\|_2,\|u_2\|_2\}$. We next show that for some constant $c_0>0$,
\begin{equation}
\label{net:SBP}
\Pr{\min_{(u_1,u_2) \in \calO} \max_{i\in [2]}\| Ru_i \|_\infty \leq \sqrt{\log d/(c_0d)}}=o(1).
\end{equation}
We proceed by a small-ball probability estimate, followed by a union bound. For any fixed $(u_1,u_2)\in \bases^{d_2}_{\rho_0,\delta_0}$, we have by
\cref{opair:SBP} (substituting $\rho\leftarrow \rho_0,\delta\leftarrow \delta_0$ and noting $d_2= \Theta(d)$) that for some universal constants $c>0,C>1$,
\[
    \Pr{\norm{\infty}{Ru_1},\norm{\infty}{Ru_2}  \leq c\sqrt{\log d/(d)}} \leq (C\log^{3} d/d)^{h}
    = (d/N)^{2d-o(d)}\mper
\]
As for net size, observe that $\bases^{d_2}_{\rho_0,\delta_0}$ has a $1/(2d^{2})$-net of size $d^{O(d)}\cdot N^{d_2}$ since the $d_2$-dimensional sphere has a $1/(2d^2)$-net of size $d^{O(d)}$ (\cref{volume:net}), and any basis in $\bases^{d_2}_{\rho_0,\delta_0}$ can be generated by choosing a subset of size $d_2$ and then choosing two vectors in the sphere supported on those coordinates.
We then apply \cref{net:within} to obtain a $1/d^{2}$-net for $\bases^{d_2}_{\rho_0,\delta_0}$ that is also a subset. So we have $|\calO|= d_2^{O(d_2)}\cdot N^{d_2}$. Thus the probability that $\min_{(u_1,u_2) \in \calO} \max_{i\in [2]}\| Ru_i \|_\infty \leq \sqrt{\log d_2/(c_0k)}$  is at most $d^{O(d)}/N^{\eps d}
\leq 1/N^{d(\eps -O(\delta))}$.

It remains to argue that the minimum over the net is a good lower bound on the true minimum w.h.p.
To this end, it suffices to show that w.h.p.,
\begin{equation}
\label{net:error}
    \inf_{(\tilu_1,\tilu_2) \in \bases^{d_2}_{\rho_0,\delta_0}} \max_{i\in [2]}\| R\tilu_i \|_\infty \geq \min_{(u_1,u_2) \in \calO} \max_{i\in [2]}\| Ru_i \|_\infty ~-~
    O(1/d^{1.4})
\end{equation}
This follows quite easily from estimates on the restricted maximum singular value. Indeed we have
\begin{align}
&\inf_{(\tilu_1,\tilu_2) \in \bases^{d_2}_{\rho_0,\delta_0}} \max_{i\in [2]}\| R\tilu_i \|_\infty  \nonumber\\
&\mathop{\geq}\limits_{\text{(triangle inequality)}}~~
\min_{(u_1,u_2) \in \calO} \max_{i\in [2]}\| Ru_i \|_\infty   ~-~
\max_{\substack{(u'_1,u'_2)\in \bases^{d_2}_{\rho_0,\delta_0},~s.t.\\ \|u'_1\|_2,\|u'_2\|_2 \leq 1/d^{2}}} ~\max_{i\in [2]}\| R u'_i \|_\infty  \nonumber\\
&\mathop{\geq}\limits_{\text{(Cauchy-Schwarz)}}~~
\min_{(u_1,u_2) \in \calO} \max_{i\in [2]}\| Ru_i \|_\infty   ~-~  \sigma_{max}^{d_2}(R)/d^{2} \nonumber \\
&\mathop{\geq}\limits_{(\cref{restr:msv})}~~
\min_{(u_1,u_2) \in \calO} \max_{i\in [2]}\| Ru_i \|_\infty ~-~  O\big(\sqrt{h}(1+\log (N/h))/d^{2}\big) \quad \text{with probability } 1-o(1)
\nonumber\\
&\mathop{\geq}\limits_{\text{(since $N\leq h^{1+1/(2\delta)}$)}}~~
\min_{(u_1,u_2) \in \calO} \max_{i\in [2]}\| Ru_i \|_\infty ~-~
O(1/d^{1.4})
\end{align}
Combining \cref{net:SBP} with \cref{net:error} implies that $\bases^{d_2}_{\rho_0,\delta_0}\cap (\ker(R)\times \ker(R)) \neq\emptyset$ with probability $o(1)$, as desired.
\end{proof}

\begin{remark}
    Our proof easily extends to higher order Hamming weights due to the generality of the multidimensional small ball estimate of \cite{rudelson2009smallest}. It can be shown that $d_\ell(\ker(R)) \geq \ell(1-\eps)d$ for any $\ell\leq c\eps \log_{\sqrt{d}} (N/d)$ where $c>0$
    is some universal constant.
\end{remark}

\begin{remark}
    The restriction $\delta\in (0,c_2\eps]$ in \cref{overlap} can be relaxed significantly by refining the above argument
    following the approach in \cite{RV08,rudelson2009smallest}.
The incompressible vectors  can be further partitioned into level sets of small-ball probability. The level sets can be shown to have attenuating size, using a characterization in \cite{rudelson2009smallest} of the small ball probability in terms of arithmetic structure.
    We choose not to pursue such a refinement in this work, in the interest of clarity.
\end{remark}

\subsection{Small Ball Probability of Incompressible Pairs.}
\label{section:smallball}
It remains to establish \cref{opair:SBP}, for which we rely heavily on
a multidimensional anticoncentration result of \cite{rudelson2009smallest}.
In this section we borrow from the exposition, ideas and notions in \cite{rudelson2009smallest, RV08}.

Following \cite{rudelson2009smallest}, the \emph{essential least common denominator} of a vector $a\in \R^N$, (henceforth abbreviated to LCD), is defined as
\[
    \LCD_{\al,\gamma}(a)
    := \inf \Big\{ \theta>0 ~|~ \dist(\theta\cdot a, \Z^N) < \min(\gamma\|\theta\cdot a\|_2,\al) \Big\}.
\]
It is the minimum scaling of $a$ that is $\alpha$-close to a non-trivial integer point, where a non-trivial integer point is one that lies in a
cone around $a$, enforced by the $\gamma\|\theta\cdot a\|_2$ term in the definition.
and can be thought of as a measure of arithmetic structure.
E.g.
\begin{align*}
    \LCD_{1/10,\sqrt{d}/10}(1,\dots ,1) &\asymp 1,\\
    \LCD_{1/10,\sqrt{d}/10}(1+1/d,1+2/d\dots ,2) &\asymp~ d
\end{align*}
One can even make the LCD polynomially larger by considering polynomial progressions.

Let $E\subset \R^{N}$ be a subspace.
We define
\[
    \LCD_{\al,\gamma}(E)
    := \inf_{a\in\SSS(E)} \LCD_{\al,\gamma}(a).
\]
where $\SSS(E)$ denotes the euclidean sphere restricted to the subspace $E$.

The following theorem which connects multidimensional small ball probability of a signed sum of vectors to the LCD of their Rowspace,
is the main workhorse of our proof of (Non-Overlap).
\begin{theorem}[Two-dimensional Small ball probability, Theorem 3.3 of \cite{rudelson2009smallest}]~\\
\label{SBP}
  Consider a pair of orthonormal vectors $u_1,u_2\in \SSS^{N-1}$. For each $j\in [N]$, let $v^j$ denote the 2-dimensional vector $(u_1(j),u_2(j))$.
  Let $\xi_1, \ldots, \xi_N$ be i.i.d. Rademacher random variables and let $\Xi := \sum_{j=1}^N \xi_j\cdot v^j $ be
  a sum of randomly signed vectors. Then for any $\al > 0$, $\gamma \in (0,1)$, and $t \geq \sqrt{2}/\LCD_{\al,\gamma}(\spn\{u_1,u_2\})$,
  we have
  \[
      \Pr{\norm{2}{\Xi}\leq t\sqrt{2}}
      \leq \frac{Ct^2}{\gamma^2}  + C e^{-\al^2}.
  \]
  where $C$ is a universal constant.
\end{theorem}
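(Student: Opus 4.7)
My plan is to follow the classical Esseen inequality approach for two-dimensional small-ball probabilities, enhanced with the LCD machinery to capture the arithmetic structure of the subspace $\spn\{u_1,u_2\}$. First I would invoke the two-dimensional Esseen inequality, which gives a bound of the form
\[
  \Pr{\|\Xi\|_2 \leq t\sqrt{2}} \leq C\, t^2 \int_{\|\theta\|_2 \leq C'/t} |\phi_\Xi(\theta)|\, d\theta,
\]
where $\phi_\Xi(\theta) = \Ex{\exp(i\langle \theta, \Xi\rangle)}$ is the characteristic function of $\Xi$. Independence of the $\xi_j$ together with $\Ex{e^{i\xi x}}=\cos x$ for a Rademacher variable $\xi$ yields the factorization $\phi_\Xi(\theta) = \prod_{j=1}^N \cos(\langle \theta, v^j\rangle)$. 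Using orthonormality of $u_1,u_2$, I would identify each $\theta \in \R^2$ with the vector $a(\theta) := \theta_1 u_1 + \theta_2 u_2 \in \spn\{u_1,u_2\}$, so that $\langle \theta, v^j\rangle = a(\theta)_j$ and $\|a(\theta)\|_2 = \|\theta\|_2$.

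Next I would convert the product of cosines into exponential decay via the elementary inequality $|\cos(x)| \leq \exp(-2\,\dist(x/\pi,\Z)^2)$. Summing over coordinates and using that $\dist(\cdot,\Z^N)$ decouples coordinatewise yields
\[
  |\phi_\Xi(\theta)| \leq \exp\bigl(-2\,\dist(a(\theta)/\pi,\Z^N)^2\bigr).
\]
Here is where the LCD enters: the hypothesis $t \geq \sqrt{2}/\LCD_{\alpha,\gamma}(\spn\{u_1,u_2\})$ ensures that the integration ball $\|\theta\|_2 \leq C'/t$ lies (up to harmless factors of $\pi$) within the range where the defining inequality of $\LCD$ forces $\dist(a(\theta)/\pi,\Z^N) \geq \min(\gamma\|\theta\|_2/\pi,\, \alpha)$. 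Combining these two facts gives the clean pointwise bound $|\phi_\Xi(\theta)| \leq \exp(-c\min(\gamma^2\|\theta\|_2^2,\,\alpha^2))$ throughout the integration region, for an absolute constant $c>0$.

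Finally I would split the integral at $\gamma\|\theta\|_2 = \alpha$. On the inner region the integrand is Gaussian of width $\asymp 1/\gamma$, contributing $O(1/\gamma^2)$ in two dimensions; on the outer region the integrand is at most $e^{-c\alpha^2}$, and the two-dimensional area is $O(1/t^2)$, contributing $O(e^{-c\alpha^2}/t^2)$. Multiplying by the $Ct^2$ prefactor from Esseen recovers the claimed bound $Ct^2/\gamma^2 + Ce^{-\alpha^2}$. The step I expect to be most delicate is the LCD invocation: one must verify that the integration ball truly stays within the LCD-controlled regime after accounting for the $\pi$ factor in the cosine bound and the constants from Esseen, and that the ``$\min$'' in the definition of $\LCD$ translates cleanly into the $\min$ appearing in the characteristic-function bound. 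Once this is arranged, the remainder is essentially a two-dimensional Gaussian integral together with a volume estimate.
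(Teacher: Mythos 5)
The paper does not prove this theorem at all: it is imported verbatim from Rudelson and Vershynin~\cite{rudelson2009smallest} (their Theorem~3.3), specialized to a pair of orthonormal vectors and two-dimensional signed sums, as the remark immediately following the statement makes explicit. So there is no ``paper's proof'' for your sketch to diverge from. That said, your Esseen-based argument is in fact the proof strategy that Rudelson and Vershynin use for Theorem~3.3: the multidimensional Esseen inequality, the factorization $\phi_\Xi(\theta) = \prod_j \cos(\langle\theta,v^j\rangle)$ over independent Rademacher signs, the inequality $|\cos x| \le \exp\bigl(-2\dist(x/\pi,\Z)^2\bigr)$ collapsing the product into $\exp\bigl(-2\dist(a(\theta)/\pi,\Z^N)^2\bigr)$, and finally the LCD hypothesis forcing $\dist(a(\theta)/\pi,\Z^N)\ge \min(\gamma\|\theta\|_2/\pi,\alpha)$ throughout the integration region. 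Your identification $\theta\mapsto a(\theta)=\theta_1u_1+\theta_2u_2$ with $\|a(\theta)\|_2=\|\theta\|_2$ is exactly what makes the orthonormal specialization clean; in the general version of Rudelson--Vershynin the $v^j$ need not arise this way and one works with the LCD of a rectangular matrix instead.

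Your flagged caveat is the right one and is genuinely the only delicate point: the Esseen radius $C'/t$, after dividing by $\pi$, must sit below $\LCD_{\alpha,\gamma}(\spn\{u_1,u_2\})$ for the distance lower bound to apply. The hypothesis $t\ge \sqrt{2}/\LCD$ gives $C'/t \le (C'/\sqrt 2)\,\LCD$, so it suffices to take $C'\le \pi\sqrt 2$ in Esseen (any fixed small constant works, at the cost of enlarging $C$), and the leftover $\pi$ in $\gamma\|\theta\|_2/\pi$ is absorbed into $C$. With that bookkeeping in place, splitting the integral at $\gamma\|\theta\|_2=\alpha$ yields the two-dimensional Gaussian contribution $O(1/\gamma^2)$ and the tail contribution $O(e^{-2\alpha^2}/t^2)$, and multiplying by the $Ct^2$ prefactor recovers the claimed $Ct^2/\gamma^2 + Ce^{-\alpha^2}$. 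No gap.
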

\begin{remark}
    Above, we specialized their theorem to the case of orthonormal vectors, and to two-dimensional randomly signed sums. For a reader interested in the details of the
    specialization, see the proof of Theorem 4.2 in \cite{rudelson2009smallest}.
\end{remark}

We require a lower bound on the LCD of vectors that are incompressible.
\begin{lemma}[LCD of incompressible vectors \cite{rudelson2009smallest}]~\\
\label{incomp:LCD}
  Consider any $\rho,\delta \in (0,1),~d\in\N$, and any $a \in \incomp^d_{\rho,\delta}$.
  Then, for every  $\gamma \in (0,\,\,\rho^2\sqrt{\delta}/2)$ and every $\al > 0$, one has $\LCD_{\al,\gamma}(a) > \sqrt{\delta d/2}$.
\end{lemma}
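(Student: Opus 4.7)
The plan is to show that for every $\theta \in (0, \sqrt{\delta d/2}\,]$ and every integer vector $p \in \Z^N$, $\|\theta a - p\|_2 \geq \min(\gamma \theta, \alpha)$. Since $\|a\|_2 = 1$, we have $\|\theta a\|_2 = \theta$, so the competitor $p = 0$ yields distance $\theta > \gamma\theta$ (as $\gamma < 1$); all the work lies in ruling out nontrivial nearby integer points $p \in \Z^N \setminus \{0\}$.

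First I would prove a ``spread'' lemma extracted from incompressibility. By a Markov-type estimate applied to $\|a\|_2^2 = 1$, at most $\delta d/2$ coordinates of $a$ exceed $\sqrt{2/(\delta d)}$ in magnitude; let $T$ be this top set, so $|T| \leq \delta d$ and by $(\rho,\delta)$-incompressibility, $\|a_{\supp(a)\setminus T}\|_2 > \rho$. Coordinates in $\supp(a)\setminus T$ with $|a_i| < \rho/\sqrt{2d}$ together contribute at most $\rho^2/2$ in $\ell_2^2$ mass (since $|\supp(a)| \leq d$), so the middle band
\[
  I \;:=\; \big\{\, i \in \supp(a)\setminus T \;:\; |a_i| \in [\rho/\sqrt{2d},\, \sqrt{2/(\delta d)}\,] \,\big\}
\]
satisfies $\|a_I\|_2^2 > \rho^2/2$ and, using the uniform upper bound on $|a_i|$, $|I| \geq \rho^2 \delta d /4$.

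Next I would use this middle band for a pointwise anti-concentration estimate. For $\theta \leq \sqrt{\delta d/2}$ and $i \in I$, one has $|\theta a_i| \leq 1$, so the nearest integer to $\theta a_i$ must lie in $\{-1,0,1\}$. For any competitor $p \in \Z^N$, partition $I = I_0(p) \cup I_1(p)$ by whether $p_i = 0$ or $|p_i| = 1$. Indices in $I_0$ contribute exactly $\theta^2 a_i^2$ to $\|\theta a - p\|_2^2$, giving
\[
  \|\theta a - p\|_2^2 \;\geq\; \theta^2 \cdot \|a_{I_0}\|_2^2 ;
\]
if $\|a_{I_0}\|_2^2$ retains most of the mass $\|a_I\|_2^2 > \rho^2/2$, this already yields $\|\theta a - p\|_2 \geq \theta \rho /2 \geq \gamma \theta$.

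The main obstacle is controlling the case when $|I_1|$ is a significant fraction of $|I|$. Here I would exploit that on $I_1$, $|a_i| \geq 1/(2\theta) \geq 1/\sqrt{2\delta d}$, pinning $|a_i|$ to a narrow sub-range, and that $\|p\|_2^2 \geq |I_1|$ combined with the triangle inequality $\|p\|_2 \leq \|\theta a\|_2 + \|\theta a - p\|_2 \leq (1+\gamma)\theta$ forces $|I_1| \leq (1+\gamma)^2 \theta^2$. Bounding $\|a_{I_1}\|_2^2 \leq |I_1| \cdot 2/(\delta d)$ caps how much $\ell_2^2$ mass can be ``stolen'' from $I_0$, so that $\|a_{I_0}\|_2^2 \geq \|a_I\|_2^2 - \|a_{I_1}\|_2^2$ still retains order $\rho^2$ mass, provided one also incorporates the $I_1$-contribution $\sum_{i \in I_1}(1 - \theta |a_i|)^2$ lower bounded via Cauchy--Schwarz. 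Combining both contributions and invoking the hypothesis $\gamma < \rho^2\sqrt{\delta}/2$ then closes the argument, giving $\dist(\theta a, \Z^N) \geq \gamma\theta$, and hence $\LCD_{\alpha,\gamma}(a) > \sqrt{\delta d/2}$, as required.
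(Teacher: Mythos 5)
Your plan follows the Rudelson--Vershynin blueprint that the paper simply cites (spread lemma for incompressible vectors, then a distance-to-$\Z^N$ estimate on the spread set), but there is a genuine gap in the execution. You cap the middle band at $\sqrt{2/(\delta d)}$, using only half the incompressibility budget ($|T|\le \delta d/2$ where you could afford $|T|<\delta d$). The consequence is that for $\theta$ up to the required threshold $\sqrt{\delta d/2}$, a middle-band coordinate can have $|\theta a_i|$ equal to $1$ exactly, so its distance to $\Z$ can be $0$. At $\theta = \sqrt{\delta d/2}$ this can happen for the entire middle band: both $\theta^2\|a_{I_0}\|_2^2$ and $\sum_{i\in I_1}(1-\theta|a_i|)^2$ can vanish simultaneously, and your proposed rescue does not recover the loss. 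Indeed the triangle-inequality bound $|I_1|\le(1+\gamma)^2\theta^2$ combined with $\|a_{I_1}\|_2^2\le 2|I_1|/(\delta d)$ only gives $\|a_{I_1}\|_2^2\le(1+\gamma)^2$ when $\theta^2\approx\delta d/2$, which is vacuous; and the Cauchy--Schwarz step on $\sum_{i\in I_1}(1-\theta|a_i|)^2$ has no floor when $\theta|a_i|$ can equal $1$. (The only thing preventing the catastrophic example is that the mass outside the middle band cannot vanish, by a support-size count; your argument never invokes this.)

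The clean fix is to use the full budget: set $\tau:=\{i:|a_i|>1/\sqrt{\delta d}\}$, so $|\tau|<\delta d$ and incompressibility yields $\|a_{\supp(a)\setminus\tau}\|_2>\rho$, giving a spread set $\sigma$ with $|a_i|\in[\rho/\sqrt{2d},\,1/\sqrt{\delta d}]$ and $\|a_\sigma\|_2^2>\rho^2/2$. Then for every $\theta\le\sqrt{\delta d/2}$ and every $i\in\sigma$ one has $|\theta a_i|\le 1/\sqrt 2$, strictly bounded away from $1$, and the elementary bound $\dist(t,\Z)\ge(\sqrt2-1)|t|$ for $|t|\le 1/\sqrt 2$ gives
\[
\dist(\theta a,\Z^N)^2 \;\ge\; (\sqrt2-1)^2\,\theta^2\,\|a_\sigma\|_2^2 \;>\; (\sqrt2-1)^2\,\tfrac{\rho^2}{2}\,\theta^2,
\]
hence $\dist(\theta a,\Z^N) > (1-1/\sqrt2)\rho\,\theta$. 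Since $\incomp^d_{\rho,\delta}\ne\emptyset$ forces $\rho^2+\delta<1$, AM--GM gives $\rho\sqrt\delta<1/2<2-\sqrt2$, so $(1-1/\sqrt2)\rho>\rho^2\sqrt\delta/2>\gamma$, and the claim follows. I recommend you rework the write-up with the tighter cap; your case split around $I_1$ cannot be closed as stated.
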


We are finally ready to use \cref{SBP} to derive an estimate on the joint small ball probability of an orthogonal basis whose span does not contain
compressible vectors.
\opairSBP*
\begin{proof}
    Consider any $h,N\in \N$ and let $R:=R_{h,N}$.
    Set $\gamma_0:= \rho^2\sqrt{\delta}/3$.
    \cref{incomp:LCD} gives a lower bound of $\sqrt{\delta d}$, on the LCD of vectors in $\incomp^{2d}_{\rho,\delta}$.
    Indeed for any $\al>0$ we have
    \begin{align*}
        \LCD_{\al,\gamma_0}(\spn\{u_1,u_2\}) = \inf_{a\in\SSS^{N-1}\cap \spn\{u_1,u_2\}} \LCD_{\al,\gamma_0}(a)
        \geq \inf_{a\in\incomp^{2d}_{\rho,\delta}} \LCD_{\al,\gamma_0}(a)
        \geq \sqrt{\delta d}\mcom
    \end{align*}
    where we used the fact that $\SSS^{N-1}\cap \spn\{u_1,u_2\}\subseteq \incomp^{2d}_{\rho,\delta}$ by definition of $\bases_{\rho,\delta}$.

    We then apply \cref{SBP} with $\al\leftarrow \log d,~\gamma\leftarrow \gamma_0,~t\leftarrow 1/\sqrt{\delta d}$,
    to obtain that
    \[
         \Pr{\mysmalldot{\xi}{u_1}^2/2+\mysmalldot{\xi}{u_2}^2/2 \leq 1/(\delta d)}\leq  C\frac{\delta d}{2\gamma_0^2} +
         C e^{-\log^2 d}
         = O(1/(\rho^4\delta d))\mper
    \]
    Since $\max_{i\in [\ell]}\{|\mysmalldot{\xi}{u_i}|\} \geq \big(\mysmalldot{\xi}{u_1}^2/2+\mysmalldot{\xi}{u_2}^2/2 \big)^{1/2}$, we obtain
    \[
         \Pr{|\mysmalldot{\xi}{u_1}|,|\mysmalldot{\xi}{u_2}|\leq
         1/\sqrt{\delta d}}
         = O(1/(\rho^4\delta d))\mper
    \]
    The claim then follows by observing that by independence of the rows of $R$,
    \[\Pr{\norm{\infty}{Ru_1},\norm{\infty}{Ru_2} \leq 1/\sqrt{\delta d}} = \Big(\Pr{|\mysmalldot{\xi}{u_1}|,|\mysmalldot{\xi}{u_2}|\leq 1/\sqrt{\delta d}}\Big)^{h}\mper \qedhere \]
\end{proof}

\subsection{Summarizing Properties of Random Rademacher Kernel }
\label{summary}

\codinggadget*
\begin{proof}
    Fix any $\eps\in (0,1)$.
    Let $\delta :=  \eps/(3C_2),~h:= n^3,~d:= \ceil{\delta \cdot h},~k:=d(1+\eps)$, and let $N$ be the largest integer so that
    $h\geq d\log_{\sqrt{d}} (N/d)$. Let $R:=R_{h,N}$.
    By \cref{distance}, $d(\ker(R))\geq d(1-\eps)$ with probability $1-o(1)$.
    It is easily verified that for $n$ sufficiently large, $N\geq d^{C_2/\eps}$.
    Thus by \cref{overlap}, $d_2(\ker(R))\geq 2(1-\eps)d$, and so $\ker(R)$ is $2(1-\eps)$-non-overlapping with probability $1-o(1)$.

    We are left with verifying (Weak Local Density).
    First we note that $|\ker(R_{h,N})\cap H^N_{(1+\eps)d}|\geq (N/2)^{\eps d}$ by \cref{scnd:mmnt}.
    We then appeal to a probabilistic version of the Sauer-Shelah lemma, due to Micciancio~\cite{micciancio2001shortest}, which states that
    random projection of a sufficiently large subset of a hypercube slice, to a sufficiently low dimension, must cover the entire hypercube.
    \begin{theorem}[Theorem 5.9 of \cite{micciancio2001shortest}, Probabilistic Sauer-Shelah Lemma]
        For any $k,n,N\in \N$ and any $t > 0$, let $\calF \subseteq \{0,1\}^N$ be a set of at least $k! N^{4\sqrt{k}n/t}$ vectors, each with $k$ non-zero entries. If
        $T \in \{0,1\}^{n\times N}$ is chosen by setting each entry to $1$ independently at random with probability $p = 1/(4kn)$, then the probability that all of
        $\{0,1\}^n$ is contained in $T(\calF ) = \{Tx ~|~ x \in \calF\}$ is at least $1-6t$.
    \end{theorem}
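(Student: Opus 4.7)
The plan is to argue by the first-moment/second-moment method over the random matrix $T$, combined with a union bound stratified by the Hamming weight of the target $y \in \{0,1\}^n$. The slack in the union bound is absorbed by the large size of $\calF$ assumed in the hypothesis.

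First, I would compute the pointwise probability. Because the entries of $T$ are i.i.d.\ Bernoulli$(p)$ with $p = 1/(4kn)$ and each $x \in \calF$ has Hamming weight exactly $k$, the quantity $(Tx)_i$ is distributed as $\mathrm{Bin}(k, p)$ and the rows of $T$ are independent. For $y \in \{0,1\}^n$ of Hamming weight $a$, a short calculation gives
\[
\Pr{Tx = y} \;=\; (kp)^{a} (1-p)^{kn-a} \;\geq\; c_0 \cdot (4n)^{-a},
\]
for a universal constant $c_0 > 0$, using $(1-p)^{kn} \to e^{-1/4}$. Summing over $\calF$ yields $\Ex{N_y} \geq c_0\, |\calF|\, (4n)^{-a}$, where $N_y := |\{x \in \calF : Tx = y\}|$.

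Second, I would upgrade this first-moment estimate to a high-probability lower bound on $N_y$. The worry is correlation: for two vectors $x_1, x_2 \in \calF$ whose supports overlap substantially, the events $\{Tx_1 = y\}$ and $\{Tx_2 = y\}$ are positively correlated through the shared entries of $T$. The hypothesis $|\calF| \geq k! \cdot N^{4\sqrt{k}n/t}$ is precisely the regime in which the \Erdos--Ko--Rado sunflower lemma (or a greedy disjoint-support extraction) yields a large sub-family of $\calF$ whose supports share a common kernel and are pairwise disjoint outside it. Conditional on the restriction of $T$ to the kernel, the events across petals become mutually independent Bernoulli trials, so
\[
\Pr{N_y = 0 \mid T\text{ on kernel}} \;\leq\; \bigl(1 - c_0'\, (4n)^{-a}\bigr)^{M} \;\leq\; \exp\bigl(-c_0'\, M\, (4n)^{-a}\bigr),
\]
where $M$ denotes the number of petals. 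A union bound stratified by $a$ then gives
\[
\Pr{\exists\, y \in \{0,1\}^n : N_y = 0} \;\leq\; \sum_{a = 0}^{n} \binom{n}{a}\, \exp\bigl(-c_0'\, M\, (4n)^{-a}\bigr),
\]
and tuning the exponent $4\sqrt{k}n/t$ so that each stratum contributes at most $t$ yields the claimed overall failure probability $\leq 6t$.

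The main obstacle is the second step: showing that the sunflower extraction produces events that are sufficiently close to independent. In particular, one must verify that conditioning on the entries of $T$ indexed by the sunflower's kernel degrades the per-petal coverage probability by at most a constant factor, and that the union-bound cost over the $\binom{n}{a}$ targets of weight $a$ is swallowed uniformly across $a$. The $\sqrt{k}$ appearing in the hypothesis exponent plausibly arises as the natural balance between the kernel correction and the per-petal Bernoulli tail bound.
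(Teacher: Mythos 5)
First, note that this theorem is imported verbatim as Theorem 5.9 of \cite{micciancio2001shortest}; the present paper supplies no proof of it, so there is no ``paper's own proof'' to compare your sketch against --- only Micciancio's. What follows is therefore an assessment of your argument on its own terms.

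Your step 1 is correct: for $x$ of weight exactly $k$ and $y$ of weight $a$, the rows of $T$ are independent and one gets $\Pr{Tx = y} = (kp)^a(1-p)^{kn-a}\geq c_0(4n)^{-a}$. The gap is in step 2, and it is the very one you flag. The \Erdos--Rado sunflower lemma applied to $|\calF|\geq k!\,M^k$ produces $M+1$ petals but gives no control of the kernel size $\kappa := |S_0|$; it can be anywhere from $0$ to $k-1$. Conditional on the columns of $T$ indexed by $S_0$ (even in the favorable event that they are all zero), only the $k-\kappa$ petal coordinates of $x_i$ contribute, so the per-petal probability degrades from $(kp)^a\approx(4n)^{-a}$ to $\bigl((k-\kappa)p\bigr)^a = \bigl(\tfrac{k-\kappa}{4kn}\bigr)^a$, a loss of a factor $\bigl(\tfrac{k}{k-\kappa}\bigr)^a$. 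For $\kappa$ close to $k$ and $a$ close to $n$ this factor is of order $k^n$, which is not a constant. Moreover, because the petals are disjoint sets of size $k-\kappa$, the petal count is at most $N/(k-\kappa)$, so when $\kappa=k-1$ you are limited to $M\leq N$ petals while needing $M\cdot p^n\gg 1$, i.e.\ $N\gg(4kn)^n$, a condition not implied by $|\calF|\geq k!\,N^{4\sqrt{k}n/t}$. So ``conditioning on the kernel degrades the per-petal coverage probability by at most a constant factor'' is false in general, and the sketch does not close this. There is also a smaller bookkeeping issue at the end: a union bound with $n+1$ weight strata each contributing at most $t$ gives $(n+1)t$, not $6t$; making the sum telescope to a constant multiple of $t$ requires showing that the $a<n$ strata contribute geometrically less, which your exponent does furnish in the favorable regime but which needs to be said. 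Micciancio's actual argument does not go through a sunflower extraction; to repair your route you would at minimum need a case split on the kernel size $\kappa$ that trades the degraded per-petal probability against the increased petal count, and that trade is precisely where the $\sqrt{k}$ in the exponent must emerge, rather than being ``plausibly'' the right balance as you put it.
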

    We apply the above theorem with the substitution $t\leftarrow 1/n^{0.1},~ n\leftarrow n,~k\leftarrow k,~N\leftarrow N$. It is easily checked that the assumptions of our claim imply that $k! N^{4\sqrt{k}n/t} < (N/2)^{\eps d}$, and so the application of the above theorem is valid.

    We conclude that with probability $1-o(1)$, $\ker(R),T$ as chosen above form a $(\frac{1+\eps}{1-\eps},2(1-\eps),n)$-coding gadget.
    Taking $\eps$ sufficiently small completes the proof of \Cref{coding:gad}.
\end{proof}

\bibliographystyle{alpha}
\bibliography{refs}

\appendix
\section{NP-Hardness of Exactly Solving  Quadratic Equations}
\label{sec:quadeq}

In this section, we prove a strengthened version of \Cref{prop:quadeq}, namely, \Cref{prop:quadeq} with an additional distinguished-coordinate property that is useful for hardness of NCP.
\begin{proposition}(NP-Hardness of Quadratic Equations)~\\
\label{prop:quadeq_dis}
Let $\F$ be any field. Given a system of quadratic equations over $\F^n$ of the form $\{Q_\ell(xx^T) = b_{\ell}\}_{\ell \in [m]}$
(resp. $\{Q_\ell(xx^T) = 0\}_{\ell \in [m]}$), it is NP-hard to distinguish between the following two cases:
\begin{itemize}
        \item (YES) There exists $x \in \{ 0, 1 \}^n$ with $x_n=1$, satisfying all $m$ equations.

        \item (NO) There does not exist $x \in \F^n$ (resp. $x \in \F^n\setminus \{0\}$)  satisfying all $m$ equations.
\end{itemize}
The distinguished-coordinate property refers to the fact that in the (YES) case, the solution $x$ satisfies $x_n=1$.
\end{proposition}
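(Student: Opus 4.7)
The plan is to reduce from \textsc{Circuit-SAT} via the standard arithmetization of Boolean gates, using the last variable $x_n$ as a ``constant $1$'' indicator. Given a Circuit-SAT instance with a fan-in-$2$ circuit $\Phi$, introduce one variable per wire of $\Phi$ (inputs, internal gate outputs, and the distinguished output wire) together with the extra variable $x_n$, and encode all relations as homogeneous quadratic forms $Q_\ell(xx^T)$. The non-homogeneous variant will additionally include a single equation with a nonzero right-hand side that normalizes $x_n$, while the homogeneous variant drops it.

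The constraints are: (i) Booleanity $x_i^2 = x_i \cdot x_n$ for every wire variable $x_i$; (ii) one quadratic equation per gate with $x_n$ playing the role of $1$, for example AND becomes $x_c \cdot x_n = x_a \cdot x_b$, NOT becomes $x_c \cdot x_n + x_a \cdot x_n = x_n \cdot x_n$, and an OR gate $x_c = x_a \vee x_b$ becomes $x_c \cdot x_n = x_a \cdot x_n + x_b \cdot x_n - x_a \cdot x_b$ (other fan-in-$2$ gates are analogous); (iii) an output constraint $x_{\mathrm{out}} \cdot x_n = x_n \cdot x_n$, which forces the circuit's output wire to equal $x_n$. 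For the non-homogeneous variant, add the normalization $x_n \cdot x_n = 1$.

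Completeness is immediate: given a satisfying assignment of $\Phi$, set $x_n = 1$ and assign each wire variable its Boolean value, yielding a solution in $\{0,1\}^n$ with $x_n = 1$ as required. For soundness in the non-homogeneous case, $x_n^2 = 1$ forces $x_n \in \{-1, +1\}$, Booleanity forces $x_i \in \{0, x_n\}$ for every wire, and since every quadratic equation is invariant under the sign flip $x \mapsto -x$, we may assume $x_n = 1$; then $x$ is Boolean, the gate equations compute $\Phi$ exactly, and the output equation forces the circuit to accept. For soundness in the homogeneous case, any nonzero solution $x$ has some nonzero coordinate, and Booleanity $x_i(x_i - x_n) = 0$ forces every coordinate to be $0$ or $x_n$; hence $x_n \neq 0$, and rescaling $x \mapsto x/x_n$ preserves every homogeneous equation and produces a Boolean solution with $x_n = 1$ encoding a satisfying assignment of $\Phi$.

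Since the reduction is polynomial-time in $|\Phi|$ and all constraints have the required quadratic form, NP-hardness of \textsc{Circuit-SAT} transfers to both variants with the distinguished-coordinate property. This is a standard exact-hardness argument with no real obstacles; the only substantive checks are the gate-by-gate arithmetization via $x_n$ and the invariance arguments (sign flip for the non-homogeneous case, scaling for the homogeneous case) used to normalize $x_n$ to $1$.
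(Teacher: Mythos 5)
Your proof is correct and takes essentially the same approach as the paper: reduce from \textsc{Circuit-SAT}, introduce a homogenizing variable (your $x_n$, the paper's $z$) that plays the role of the constant $1$, enforce Booleanity $x_i(x_i-x_n)=0$, arithmetize each gate as a quadratic in which $x_n$ stands in for $1$, and force the output wire to equal $x_n$. The only cosmetic differences are (i) your gate equations use $x_c\, x_n$ where the paper uses $x_c^2$ (these are interchangeable modulo the Booleanity constraint), and (ii) for the non-homogeneous variant the paper substitutes the literal constant $1$ for $z$, whereas you keep $x_n$ as a variable and add $x_n^2=1$, then normalize via the sign-flip symmetry $x\mapsto -x$; both routes work, and yours has the mild advantage of keeping the distinguished coordinate $x_n$ present as an actual variable in the non-homogeneous system.
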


\begin{proof}
    We first show hardness of the homogeneous version $\{Q_\ell(xx^T) = 0\}_{\ell \in [m]}$.
    We reduce from Circuit-SAT problem, where the input is a Boolean circuit which consists of input gates as well as AND, OR, NOT gates with fan-in (at most) two and fan-out unbounded, and the goal is to find a Boolean assignment of its input gates that makes the output gate true. It is a prototypical NP-complete problem, since the Cook-Levin theorem is sometimes proved on Circuit-SAT instead of 3SAT (see e.g., Lemma 6.10 in \cite{AB09}).

    Given a Circuit-SAT instance $C$ with $n$ gates $y_1,\ldots,y_n$ (including input gates and logic gates), we build $n+1$ variables $\{x_1,\ldots,x_n,z\}$, and add the following equations:
    \begin{itemize}
        \item $x_i(x_i-z)=0, \forall i \in[n]$;
        \item for each AND gate $y_k = y_i \land y_j$ in $C$, an equation $x_k^2 = x_i x_j$;
        \item for each OR gate $y_k = y_i \lor y_j$ in $C$, an equation $z^2 - x_k^2 = (z-x_i) (z-x_j)$;
        \item for each NOT gate $y_k = \lnot y_i$ in $C$, an equation $z^2 - x_k^2 = x_i^2$;
        \item for $y_k$ being the output gate, an equation $z^2 = x_k^2$.
    \end{itemize}

    \paragraph{Completeness.}
    Let $y_1,\ldots,y_n \in \{0,1\}$ be an assignment to the gates of $C$ that makes the output true. It's easy to verify $$\left\{\begin{aligned}
        z & = 1, \\
        x_i & = y_i, &  \forall i \in [n]
    \end{aligned}\right.$$
    is a solution to the system of quadratic equations.

    \paragraph{Soundness.}
    If $z=0$, then by the first set of equations, each $x_i$ has to be 0 and this is an all-zero solution. Otherwise, every $x_i$ lies in $\{0,z\}$, and setting each $y_i=x_i z^{-1}$ is a satisfying assignment of $C$ since it satisfies every gate in $C$ and ensures that the output gate is true.

    The distinguished-coordinate property follows by the fact that $z=1$ in the (YES) case. We can set $z$ to be the last variable in the system.

    By simply replacing the variable $z$ in the above proof with a constant 1, we obtain hardness of the non-homogeneous version $\{Q_\ell(xx^T) = b_{\ell}\}_{\ell \in [m]}$, where in the soundness case there is no solution in all of $\F^n$ (as opposed to just $\F^n\setminus \{0\}$).
\end{proof}

\section{Tensoring}
\label{appendix:tensoring}
Here we prove the following fact about the distance of a tensor code.
\tensor*
\begin{proof}
    The LHS is at most the RHS since we may consider the element $uv^T\in U\otimes V$,
    where we choose $u\in U\setminus \{0\}$ that has sparsity $d(U)$ and $v\in V\setminus \{0\}$ that has sparsity $d(V)$.

    For the other direction, consider any nonzero $M\in U\otimes V$.
    There must be some non-zero entry in $M$, and so there is at least one nonzero column. Since this column lies in $U$, it must have at least $d(U)$ nonzero
    entries, and therefore at least $d(U)$ rows of $M$ are nonzero. Each such row lies in $V$ and hence has $d(V)$ nonzero entries. We conclude that $M$ has at least $d(U)\cdot d(V)$ nonzero entries.
\end{proof}

\section{Non-Overlap for Subspaces over $\F_q$}
\label{sec:overlap}

In this section, we prove \Cref{lem:overlap}.
\overlap*
\begin{proof}
    For $u,v$ being two linearly independent elements in $C$, let $m$ be the number of coordinates such that $u_i \neq 0$ or $v_i \neq 0$ but not both, and let $m'$ be the number of coordinates such that $u_i \neq 0$ and $v_i \neq 0$.
    Clearly,
    $$m+2m' \ge 2d(C).$$
    Since there are only $q-1$ choices of values for $u_i/v_i$ (for $u_i,v_i\neq 0$), there must exist $\lambda \neq 0$
    so that the vector $u -\lambda v$ has at most $m+m'-\frac{m'}{q-1}$ non-zero entries. This implies
    $$m+m'-\frac{m'}{q-1}\ge d(C).$$
    Multiplying the first inequality by $\frac{1}{q}$, the second by $\frac{q-1}{q}$, and adding, gives $m+m' \ge (1+\frac{1}{q})d(C)$ as desired.
\end{proof}

\section{Inapproximability of NCP}
\label{sec:ncp}

In this subsection, we prove the following inapproximability of the Nearest Codeword Problem.

\begin{theorem}
\label{thm:ncp}
Fix any finite field $\F_q$. No polynomial-time algorithm can given an affine subspace $V \subseteq \F_q^n$ and $s \in \N$, distinguishes between the following cases
\begin{compactenum}[]
    \item (YES) there exists nonzero $x \in V \cap \{ 0, 1 \}^n$ with $\norm{0}{x} \leq s$;
    \item (NO) every $x \in V \setminus \{ 0 \}$ satisfies $\norm{0}{x} \geq \gamma \cdot s$,
\end{compactenum}

\medskip

\begin{compactenum}[(a)]
    \item assuming $\NP\neq \classP$ when $\gamma>1$ is any constant;
    \item assuming $\NP\not\subseteq \classfont{DTIME}(2^{\log^{O(1)} n})$ when $\gamma =2^{\log^{1-\eps}n}$ for any fixed $\eps>0$;
    \item assuming $\NP\not\subseteq \bigcap_{\delta>0} \classfont{DTIME}(2^{n^{\delta}})$ when $\gamma=n^{c/\log \log n}$ for some fixed $c>0$.
\end{compactenum}
\end{theorem}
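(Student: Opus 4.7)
The plan is to adapt the MDP reduction of \Cref{sec:mdp} by invoking the distinguished-coordinate strengthening (\Cref{prop:quadeq_dis}) and imposing the single affine constraint $X[n,n]=1$ on the linearized symmetric matrix variable $X$. This turns the homogeneous linear output subspace of the MDP reduction into an affine subspace with $0 \notin V$, and the soundness/completeness analysis carries over almost verbatim.

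Specifically, starting from a homogeneous quadratic system $\{Q_\ell(xx^T)=0\}_{\ell \in [m]}$ with the distinguished-coordinate YES witness $x \in \{0,1\}^n$ satisfying $x_n = 1$, and using the $\eps$-balanced code $G \in \F_q^{N \times n}$ of \Cref{lem:eps-biased} with $\eps = 1/(9q)$ and minimum distance $d$, I would output the affine subspace
\[
V := \{ GXG^T : Q_\ell(X) = 0 \text{ for all } \ell,\; X^T = X,\; X[n,n] = 1\}.
\]
Completeness is inherited from \Cref{subsec:mdp_red}: $X = xx^T$ satisfies $X[n,n] = x_n^2 = 1$ and all linearized equations, so $(Gx)(Gx)^T \in V$ has Hamming weight at most $(1+\eps)^2 d^2$.

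Soundness is where the affine constraint does its work. Any $Y = GXG^T \in V$ has $X \ne 0$ (because $X[n,n]=1$) and hence $Y \ne 0$ by full column rank of $G$; if $\mathrm{rank}(X) \ge 2$ then \Cref{lem:rk2hwfq} yields $\|Y\|_0 \ge (1+1/q)d^2$, while if $X$ is symmetric of rank $1$ then writing $X = \lambda uu^T$ for some $\lambda \in \F_q^\times$ and $u \in \F_q^n$, linearity of $Q_\ell$ forces $Q_\ell(uu^T) = 0$ for all $\ell$, and $X[n,n] = \lambda u_n^2 = 1$ gives $u \ne 0$---contradicting the NO hypothesis. So every $Y \in V$ has $\|Y\|_0 \ge (1+1/q)d^2$ on NO instances, yielding the base gap $(1+1/q)/(1+\eps)^2 > 1$. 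The three regimes (a)--(c) of the statement then follow by the same tensoring schedule as \Cref{thm:mdp}, applied to the underlying linear part of $V$ and paired with a tensored offset that preserves the distinguished coordinate (and hence the affine structure).

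The only technical subtlety is that the linear system defining $V$ may be infeasible on a NO instance, making $V = \emptyset$ and thus not a valid affine subspace; this is detectable in polynomial time via Gaussian elimination, and in that event I would simply output a canonical singleton affine subspace whose unique vector has large Hamming weight (trivially a NO instance). Apart from this bookkeeping, the reduction is a single-constraint modification of \Cref{subsec:mdp_red}, consistent with the paper's description of \Cref{thm:ncp} as a ``quick corollary'' of \Cref{thm:mdp}.
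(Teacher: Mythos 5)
Your constant-gap reduction (the base case for part~(a)) is correct and morally identical to the paper's. The genuine gap is in the tensoring step you wave at for parts~(b) and~(c). You build the affine subspace $V = \{GXG^T : Q_\ell(X)=0,\ X^T=X,\ X[n,n]=1\}$ and then propose to tensor ``the underlying linear part of $V$'' paired with a ``tensored offset.'' This does not go through: writing $V = v_0 + W$ with $W$ linear, the tensor of the completeness witness $y = v_0 + w$ is $y^{\otimes t}=(v_0+w)^{\otimes t}$, which is \emph{not} generically in $v_0^{\otimes t}+W^{\otimes t}$, and there is no affine analogue of \Cref{mult:dist} that multiplies the distance gap. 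More fundamentally, your distinguished quantity $X[n,n]$ lives on the preimage $X$, not on the actual codeword $GXG^T$, so it is not available as a coordinate of the tensored output code that you could later fix.

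The paper sidesteps this exactly by \emph{appending} $X_{n,n}$ as an extra codeword coordinate, outputting the \emph{linear} subspace $V = \{(GXG^T,\ X_{n,n}) : Q_\ell(X)=0,\ X^T=X\} \subseteq \F_q^{N^2+1}$ (\Cref{prop:mdp_dist}). Tensoring this linear subspace is then the ordinary MDP operation (\Cref{tensoring}), and the YES witness tensors to a codeword whose last coordinate is $1^{\,t}=1$. Only after amplification does one slice, fixing that last coordinate to $1$ to obtain the affine NCP instance. You should restructure in this order---append, tensor, slice---rather than slicing first and then trying to tensor an affine subspace.
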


We prove \Cref{thm:ncp} by first slightly modifying the MDP($\F_q$) reduction in \Cref{subsec:mdpred} to have a distinguished coordinate, then giving a gap-preserving reduction from such MDP($\F_q$)  to NCP.

\begin{proposition}[MDP($\F_q$) Hardness with a Distinguished Coordinate]~\\
\label{prop:mdp_dist}
    The hardness of MDP($\F_q$) in \Cref{thm:mdp} holds even with the guarantee that the solution $x\in V \setminus \{0\}$ in the (YES) case satisfies $x_n=1$.
\end{proposition}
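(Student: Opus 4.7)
The plan is to adapt the reduction of \Cref{subsec:mdp_red} so that the YES witness carries a single coordinate fixed to $1$. I will start from the distinguished-coordinate version of quadratic equations given by \Cref{prop:quadeq_dis}, whose YES solution is Boolean with $x_n = 1$; in particular $(xx^T)_{n,n} = 1$. The natural modification is therefore to pipe the linear functional $X \mapsto X_{n,n}$ into the output subspace as an extra ``marker'' coordinate.

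Concretely, I would define
\[
V' := \bigl\{\, (GXG^T,\, X_{n,n}) :\, Q_\ell(X) = 0\ \forall \ell,\; X^T = X,\; X \in \F_q^{n\times n}\,\bigr\} \subseteq \F_q^{N^2+1},
\]
vectorizing so that the scalar $X_{n,n}$ sits in the final coordinate. Completeness goes through exactly as in \Cref{subsec:mdp_red} with witness $\bigl((Gx)(Gx)^T,\,1\bigr)$, whose weight remains at most $((1+\varepsilon)d)^2 + 1$ and whose last coordinate equals $1$. For soundness I need to show that every nonzero $(Y,z) \in V'$ has weight at least $(1 + \tfrac{1}{q})d^2$. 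The degenerate case $Y = 0$, $z \neq 0$ is impossible because $G$ has full column rank, which forces $X = 0$ and hence $z = X_{n,n} = 0$. When $X$ has rank at least $2$, \Cref{lem:rk2hwfq} gives $\|Y\|_0 \geq (1+\tfrac{1}{q})d^2$ verbatim. When $X$ has rank exactly $1$, I write $X = \alpha uu^T$ for some nonzero $u \in \F_q^n$ and nonzero $\alpha \in \F_q$; by homogeneity of each $Q_\ell$, $Q_\ell(uu^T) = \alpha^{-1} Q_\ell(X) = 0$, so $u$ would be a nonzero solution to the homogeneous quadratic system, contradicting the NO hypothesis of \Cref{prop:quadeq_dis}.

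To amplify the constant-factor gap I will tensor $V'$ using \Cref{tensoring}: $d(V'^{\otimes t}) = d(V')^t$, which upgrades gap $\gamma$ to $\gamma^t$ with only a polynomial (resp.\ quasi-polynomial) blow-up, exactly as in \Cref{thm:mdp}. The distinguished-coordinate property is preserved for free, since the $(\text{last},\ldots,\text{last})$ entry of $y^{\otimes t}$ equals $y_{\text{last}}^t = 1$ whenever $y_{\text{last}} = 1$; after relabeling, this becomes the final coordinate of $V'^{\otimes t}$, giving exactly the $x_n = 1$ guarantee demanded by the statement. The main subtlety I anticipate is the rank-$1$ soundness case over $\F_q$: a rank-$1$ symmetric matrix need not factor as $yy^T$ when the scaling scalar is a nonsquare, but the homogeneity inherited from \Cref{prop:quadeq_dis} lets me strip that scalar and still extract a nonzero field solution. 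Everything else is routine bookkeeping: confirming the output dimension, re-deriving the gap $(1+\tfrac{1}{q})/(1+\tfrac{1}{3q}) - o(1)$ despite the extra $+1$ in the completeness weight, and translating the tensoring blow-up into the three parameter regimes (a)--(c).
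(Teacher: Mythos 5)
Your proof follows the same route as the paper: append the marker coordinate $X_{n,n}$ to the reduction of \Cref{subsec:mdp_red}, take the same Boolean completeness witness $\big((Gx)(Gx)^T,1\big)$, rule out rank $\geq 2$ via \Cref{lem:rk2hwfq}, extract a quadratic-system solution from a rank-1 $X$, and amplify by tensoring with the $(\text{last},\dots,\text{last})$ coordinate carrying the distinguished $1$. The one place you are actually more careful than the paper is the rank-1 case: over $\F_q$ with $q$ odd a symmetric rank-1 matrix need only have the form $\alpha uu^T$ with $\alpha$ a possible nonsquare, not literally $xx^T$, and you correctly observe that linearity of $Q_\ell$ in $X$ still gives $Q_\ell(uu^T)=\alpha^{-1}Q_\ell(X)=0$, whereas the paper's soundness argument asserts $X=xx^T$ without this caveat.
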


\begin{proof}
    Let $\eps = \frac{1}{9q}$. Note that the hardness of homogeneous quadratic equations holds even with a distinguished coordinate (\Cref{prop:quadeq_dis}). We thus append a distinguished coordinate to the reduction in \Cref{subsec:mdp_red}:
    \begin{equation}
        V:= \{(GXG^T,X_{n,n}) :~ Q_1(X)=0,\dots , Q_m(X)=0,~
        X^T=X,~X\in\mathbb{F}_q^{n\times n}\},
    \label{hred}
    \end{equation}
    where $G \in \F_q^{N \times n}$ is the generator matrix of an $\eps$-balanced code of minimum distance $d$.

    This reduces an instance of homogeneous quadratic equations with $n$ variables to an MDP$(\F_q)$ instance with $N^2+1=\text{poly}\left(n,\frac{1}{\varepsilon}\right)$ variables.

    For completeness, let $x \in \mathbb \{0,1\}^{n}$ be a non-zero solution to the system \Cref{hq} that satisfies $x_n=1$. Then $((Gx)(Gx)^T,1)\in V$ and has Hamming weight at most
    $\left(1+\frac{1}{3q}\right)d^2+1$.

    For soundness, note that if $(GXG^T,X_{n,n})$ is non-zero, then
    $X$ must be non-zero.
    The remainder of the analysis proceeds
    identically to the soundness analysis in \Cref{thm:mdp}, and we conclude any $Y \in V \setminus \{0\}$ has $\|Y\|_0 \ge \left(1+\frac{1}{q}\right) d^2$.

    Finally we amplify the gap using tensoring (\Cref{tensoring}),
    and we note that the distinguished coordinate property of the YES case is preserved under tensoring.
\end{proof}

Given \Cref{prop:mdp_dist}, \Cref{thm:ncp} is proved as follows.

\begin{proof}[Proof of \Cref{thm:ncp}]
    Consider the reduction from MDP($\F_q$) with a distinguished coordinate to NCP, given by mapping a subspace $V \subseteq \mathbb F_q^{n}$ to the affine subspace $V':=\{x\in V:x_n=1\}$.

    Completeness follows from the distinguished-coordinate property. Soundness follows by noting that any $x\in V'$ is a non-zero vector in $V$, with the same sparsity.
\end{proof}

We finish this section with two remarks on NCP.

\begin{remark}
A gap-preserving Cook-reduction from MDP to NCP was known in the literature (see e.g., \cite{GMSS99} and Exercise 23.13 in \cite{GRS18}). However, we have a Karp-reduction here thanks to the distinguished-coordinate property.

We also remark that there are non-trivial ways to amplify gap for NCP (see e.g., Theorem 22 of \cite{LLL24} and Section 4.2 of \cite{BGKM18}). Thus one can start with the NP-hardness of non-homogeneous quadratic equations, perform the reduction in this section to get a mild constant gap for NCP, and then amplify the gap. This is another way to prove \Cref{thm:ncp}.
\end{remark}

\begin{remark}
    The relatively near codeword problem parameterized by $\rho\in (0,\infty)$ (denoted by $\text{RNC}^{\,(\rho)}$) is a promise problem
    defined as follows. Given a subspace $V\subseteq \mathbb{F}_q^n$ of an (unknown) minimum distance $d$, a vector $b\in \mathbb{F}_q^n$ and an integer $t$ with the promise that $t<\rho\cdot d$, the task is to find a codeword in $V$ of Hamming distance at most $t$ from $b$.

    Our reduction for NCP yields NP-Hardness of $\text{RNC}^{\,(1+\varepsilon)}$
    for any fixed $\varepsilon>0$ (it also yields hardness for the gap version of RNC for an appropriately small constant gap). \cite{dumer2003hardness} show
    hardness of $\text{RNC}^{\,(\rho)}$ for any $\rho>1/2$, albeit under randomized reductions.
\end{remark}

\end{document}